\newtheorem{theorem}{Theorem}
\newtheorem{proposition}[theorem]{Proposition}
\newtheorem{fact}[theorem]{Fact}
\newtheorem{lemma}[theorem]{Lemma}
\newtheorem{corollary}[theorem]{Corollary}
\theoremstyle{definition}
\newtheorem{definition}[theorem]{Definition}
\newtheorem{notation}[theorem]{Notation}
\newtheorem{remark}[theorem]{Remark}
\newcommand{\placeholder}{-}
\newcommand{\R}{\mathbb{R}}
\newcommand{\N}{\mathbb{N}}
\newcommand{\W}{\mathbb{W}}
\newcommand{\E}{\mathbb{E}}
\DeclareMathAlphabet{\mymathbb}{U}{BOONDOX-ds}{m}{n}
\newcommand{\zero}{\mymathbb{0}}
\newcommand{\emb}[1]{\underline{#1}}
\newcommand{\baseQbs}[1]{\left\vert#1\right\vert}
\newcommand{\QbsConv}{\operatorname{QbsConv}}
\newcommand{\linear}{\multimap}
\newcommand{\analytic}{\Rightarrow}
\newcommand{\J}{\mathcal{J}}
\newcommand{\K}{\mathcal{K}}
\newcommand{\M}{\mathcal{M}}
\newcommand{\NN}{\mathcal{N}}
\newcommand{\D}{\mathcal{D}}
\newcommand{\inj}[2]{\operatorname{Inj}(#1,#2)}
\newcommand{\denot}[1]{\left\llbracket#1\right\rrbracket}
\newcommand{\kw}[1]{\mathrel{\operatorname{#1}}}
\begin{document}
%
% paper title
% Titles are generally capitalized except for words such as a, an, and, as,
% at, but, by, for, in, nor, of, on, or, the, to and up, which are usually
% not capitalized unless they are the first or last word of the title.
% Linebreaks \\ can be used within to get better formatting as desired.
% Do not put math or special symbols in the title.
\title{Extensional Denotational Semantics of Higher-Order Probabilistic Programs, Beyond the Discrete Case}

%\linenumbers
%\linenumbersep 3pt\relax

% author names and affiliations
% use a multiple column layout for up to three different
% affiliations
\author{\IEEEauthorblockN{Guillaume Geoffroy}
\IEEEauthorblockA{Università di Bologna}}

%\IEEEspecialpapernotice{\Large(Draft)}

% make the title area
\maketitle

% As a general rule, do not put math, special symbols or citations
% in the abstract
\begin{abstract}
We describe a mathematical structure that can give extensional denotational semantics to higher-order probabilistic programs. It is not limited to discrete probabilities, and it is compatible with integration in a way the models that have been proposed before are not. It is organised as a model of propositional linear logic in which all the connectives have intuitive probabilistic interpretations. In addition, it has least fixed points for all maps, so it can interpret recursion.
\end{abstract}

% no keywords

% For peer review papers, you can put extra information on the cover
% page as needed:
% \ifCLASSOPTIONpeerreview
% \begin{center} \bfseries EDICS Category: 3-BBND \end{center}
% \fi
%
% For peerreview papers, this IEEEtran command inserts a page break and
% creates the second title. It will be ignored for other modes.
\IEEEpeerreviewmaketitle

\section{Introduction}

Extensional denotational semantics of probabilistic programs have been around since the late 1970s \cite{djahromi80, kozen81}. The idea is to represent first-order programs by measure transformers. For example, consider a program that takes as input a handle to a random generator of real numbers and outputs a randomly chosen real number: its denotation is a map that takes a probability measure on $\R$ (representing the distribution of the input) and returns a probability measure on $\R$ (representing the distribution of the output). 
If non-terminating programs are allowed, then instead, you get a map from sub-probability measures to sub-probability measures: a total measure strictly less than $1$ denotes a non-zero probability of failing to produce a number (\textit{e.g.} through non-termination). Beyond first order, ``extensional'' means that each type is interpreted as a set (in the case of the real numbers, the set $S(\R)$ of sub-probability distributions on $\R$) with some additional structure, and programs are interpreted as structure-preserving maps, in such a way that the set of all structure-preserving maps can itself be equipped with the same structure. What structure should one use, though? This is a long-standing question: interaction between the ``set of all sub-probability mesures'' construction and the domain-theoretic tools traditionally used to represent function types has been notoriously troublesome \cite{troublesome98}. As a result, for a while, more success was to be had from ``intensional'' approaches, such as games \cite{probaGames02} and the geometry of interaction \cite{dlh-bayesian-19}.

Some hints at how to answer this question can be found in Kozen's early extensional model \cite{kozen81}. This model can only represent first-order programs, and only those that are ``linear'', in the sense that they sample their input exactly once. Measure transformers that are \emph{linear} and \emph{effective}, in the sense that they actually represent one of these linear programs, have a particular property: they \emph{commute with integrals}. Equivalently, they correspond to sub-probability kernels \cite{kallenberg17} (through the ``bind'' operation of the monad $S$ of sub-probability measures \cite{categoryMarkov99}). Still equivalently, they are morphisms of algebras over this monad. This implies that, as noted by Kozen back then (though with different terminology), two linear effective functions that coincide on all Dirac distributions (\textit{i.e.} on all deterministic inputs) are equal. This leaves out many maps; in particular, the map that takes a sub-probability measure on $\R$ and returns its continuous part is not linear effective.

If one extends Kozen's model to allow programs to sample their input any number of times, effective functions start to look like power series. Indeed, in this modified model, it can be checked that for all effective maps $f:S(\R)\to S(\R)$, there exists a family $(\partial_n f : S(\R^n) \to S(\R))_{n\in\N}$ of \emph{linear} effective maps (so morphisms of $S$-algebras) such that for all $\mu\in S(\R)$, $f(\mu) = \sum_{n\in \N} \partial_n f(\mu^{\otimes n})$ (where $\mu^{\otimes n}$ denotes the product measure of $n$ copies of $\mu$). As a result, it is easy to prove that two effective maps that coincide on all finitely-supported measures are equal. A fortiori, the ``continuous part'' map defined above, which coincides with the zero map on all discrete measures, is not effective.

This suggests that, in order to get well-behaved extensional denotational semantics, types in general should be interpreted by $S$-algebras, linear programs by morphisms of $S$-algebras, and general programs by ``power series'' in the sense described above. Probabilistic coherence spaces \cite{pcoh11} do precisely that. Among their many pleasant properties, they form a model of full propositional linear logic \cite{girardLL87} in which each connective (or at least a complete set thereof) has an intuitive probabilistic interpretation. A considerable drawback is that they are designed to only work with countable data types, and therefore discrete probabilities. Indeed, the only reason why linear maps between probabilistic coherence spaces commute with integrals (\textit{i.e.} are morphisms of $S$-algebras) is because in this context, integrals are just countable sums.

Several constructions have been proposed to overcome this limitation \cite{cones17, omegaQbs18, slavnov18, roban19}. However, they do so only partially. Indeed, none of them fits the above picture of $S$-algebras and power series, as can be seen by the fact that they all include the ``continuous part'' map mentioned above. This means that none of these models is compatible with integration (only with countable sums, at most), even though integration is a cornerstone of probability theory.

The contribution of this paper is to define such a model, which we call \emph{convex quasi-Borel spaces}. This brings us one step closer to answering the long-standing question from the first paragraph. In addition to compatibility with integration, the construction we propose is a model of linear logic (with the same intuitions behind the connectives as in probabilistic coherence spaces), and all functions have least fixed points (so the model interprets recursion). Our construction relies on two innovations with respect to previous models: we define integration axiomatically and simply ask that linear maps commute with integrals; and we do away with topology entirely -- in particular we replace all limits and suprema with countable sums (which are just a particular kind of integrals). The idea behind the second point is that pointwise countable sums interact well with power series, whereas pointwise limits do not.

We begin this paper with a few reminders on \emph{quasi-Borel spaces} \cite{qbs17} (hereafter QBSs), which are a generalisation of the traditional notion of measurable spaces. They form a category that contains the category of measurable spaces and measurable maps as a subcategory (and the category of standard Borel spaces as a full subcategory), supports a commutative ``set of all probability measures'' strong monad, and is cartesian closed. This means that, by themselves, QBSs can already be seen as a model of the simply typed lambda calculus that supports probabilistic constructions (however, this model pays no attention whatsoever to computability: it contains not just the ``continuous part'' map, but in fact any measurable map from $S(\mathbb R)$ to $S(\mathbb R)$). They can also be seen as an alternative theory of integration and measurability that supports function spaces, which is why our construction will be built upon them.

The remainder of the paper is mostly dedicated to the construction of the model itself. This construction follows the blueprint of linear logic \cite{girardLL87}. First, we define convex QBSs, and as a canonical example, we define for all QBSs $A$ the convex QBS $\emb A$ of random elements of $A$. Then we define the multiplicative constructions, notably \emph{multilinear maps}, which represent probabilistic programs with multiple arguments that use each argument exactly once. In particular, we show that convex QBSs and linear maps form a closed symmetric monoidal category. Next come the additive constructions (the cartesian product and the coproduct), and then the exponential modalities. After that, we define \emph{analytic maps}, which represent probabilistic programs in general, and we show that the category of convex QBSs and analytic maps is equivalent to the Kleisli category of the exponential comonad ``$!$'' -- in particular, it is cartesian closed. Finally, we show that all analytic maps from a convex QBS to itself have a least fixed point, and that the operation that maps a map to its least fixed point is analytic. After building the model, we give an example of toy probabilistic language that can be interpreted in it.

Note that we use the expression ``probabilistic program'' in its narrow meaning of a program that can make random choices (as well as manipulate other programs that make their own random choices). The same expression is also used in the broader sense of a program that describes and manipulates statistical models \cite{statonESOP17, conditional19}. In that setting, compatibility with integration is all the more relevant, so it would be worth checking whether our approach can be generalised to it.

\section{Preliminaries on quasi-Borel spaces}

Getting a construction that is compatible with integration requires a theory of integration and measurability in the first place. Instead of the traditional theory of $\sigma$-algebras and measurable spaces, we use quasi-Borel spaces \cite{qbs17} (QBSs), because they are known to form a cartesian closed category.

The only cost of replacing measurable spaces with QBSs is that the ``source of randomness'' has to be a standard Borel space \cite{kallenberg17}. This is a reasonable assumption as far as computer science is concerned, since that includes the space $\{0,1\}^\N$ of all infinite sequences of bits\footnote{In fact, by standard results on Markov kernels, the probability measures that can be represented in QBSs are exactly those that can be obtained by pushing forward the usual ``independent fair coins'' measure on $\{0,1\}^\N$.}.

In this section, we recall the definition of QBSs and define the monad $S$ of \emph{sub-probability measures}.

\begin{definition}[\cite{qbs17}] A \emph{quasi-Borel space} is the data of a set $A$ and a set $M_A$ of maps from $\R$ to $A$ such that
\begin{itemize}
  \item for all $\alpha \in M_A$ and all measurable maps $f : \R \to \R$ (where $\R$ is equipped with the Borel $\sigma$-algebra $\Sigma_\R$), $\alpha \circ f \in M_A$,
	\item for all constant maps $\alpha : \R \to A$, $\alpha \in M_A$,
	\item for all $\left(\alpha_n\right)_{n \in \N} \in M_A^\N$ and all partitions $\left(U_n\right)_{n \in \N}$ of $\R$ into Borel sets, $M_A$ contains the case-split map that maps $r$ to $\alpha_n(r)$ whenever $r \in U_n$.
\end{itemize}

Let $A$ and $B$ be QBSs. A \emph{morphism of QBSs} from $A$ to $B$ is a map $f : A \to B$ such that for all $\alpha \in M_A$, $f \circ \alpha \in M_B$.
QBSs and morphisms between them form a category, which we denote by $\operatorname{Qbs}$.
\end{definition}

Since this category is cartesian \cite[Proposition 16]{qbs17} (and even cartesian closed), it induces a symmetric multicategory \cite[Definitions 2.1.1 and 2.2.21]{leinster03}, which we also denote by $\operatorname{Qbs}$, with the set of $n$-ary maps $\operatorname{Qbs}(A_1, \ldots, A_n; B)$ defined as $\operatorname{Qbs}(A_1 \times \ldots \times A_n, B)$.

For all QBSs $A$ and all sets $B \subseteq A$, we define the \emph{subset QBS structure} on $B$ by $M_B = \{ \alpha \in M_A;~ \forall r\in \R, \alpha(r) \in B \}$.

Measurable spaces and measurable maps form a subcategory of $\operatorname{Qbs}$: for all measurable spaces $\left(A, \Sigma_A\right)$, we define a QBS $\left(A, M_A\right)$ by letting $M_A$ be the set of all measurable maps from $\left(\R, \Sigma_\R\right)$ to $\left(A, \Sigma_A\right)$. If $\left(A, \Sigma_A\right)$ and $\left(B, \Sigma_B\right)$ are measurable spaces, every measurable map from $\left(A, \Sigma_A\right)$ to $\left(B, \Sigma_B\right)$ is a morphism of QBSs from $\left(A, M_A\right)$ to $\left(B, M_B\right)$, though the converse is not necessarily true. 

Recall \cite[Introduction]{kallenberg17} that a \emph{standard Borel space} (or simply \emph{Borel space}) is a measurable space that is isomorphic to a Borel subset of $\R$. If $\left(A, \Sigma_A\right)$ and $\left(B, \Sigma_B\right)$ are standard Borel spaces, then a map $f : A \to B$ is measurable if and only if it is a morphism of QBSs \cite[Proposition 15]{qbs17}. From now on, we will only consider QBSs and standard Borel spaces (seen as a particular case of QBSs), and never deal with general measurable spaces. As a result, we will refer to morphisms of QBSs simply as \emph{measurable maps}: this will help convey the right intuitions, and it will never come in conflict with the usual notion of measurability.

For all standard Borel spaces $\left(A, \Sigma_A\right)$, we denote by $G(A)$ the set of all probability measures on $A$. We equip it with the smallest $\sigma$-algebra $\Sigma_{G\left(A\right)}$ such that for all $U \in \Sigma_A$, $\mu \mapsto  \mu{\left(U\right)}$ is measurable. Recall \cite{giry82} that $G(A)$ is itself a standard Borel space, and that a measurable map from $A$ to $G(B)$ is the same thing as a Markov kernel from $A$ to $B$.
This construction has been successfully generalised to QBSs \cite[Section V-D]{qbs17}. We describe an analogous construction for sub-probability measures, \textit{i.e.} positive measures of total weight at most $1$. With the exceptions of Facts \ref{fact:S-M} and \ref{fact:S-sum}, all the results we give here correspond to results that have already been established in the case of probability measures, so we omit their proofs, which are similar.

\begin{definition}\label{def:S} Let $A$ be a QBS. The QBS of \emph{sub-probability measures} on $A$, denoted by $S(A)$, is defined as the quotient \cite[Proposition 25]{qbs17} \[S(A) = \operatorname{Qbs}\left(\R, A \amalg \left\{*\right\} \right) \times G(\R) / \sim,\] where $\operatorname{Qbs}(B_1,B_2)$ denotes the QBS of measurable maps from $B_1$ to $B_2$ \cite[Proposition 18]{qbs17}, $\amalg$ denotes the coproduct of QBSs \cite[Proposition 17]{qbs17}, $\{*\}$ denotes the one-element QBS, and $\sim$ denotes the following equivalence relation:

For all $(\alpha, \mu), (\beta, \nu)$, we let $(\alpha, \mu) \sim (\beta, \nu)$ if and only if for all measurable maps $f : A \to \left[0, +\infty\right]$,
\[\int_{\scriptscriptstyle r \in \alpha^{-1}\left(A\right)} \hspace{-1cm} f (\alpha(r))\, \mu(\dif r) = \int_{\scriptscriptstyle s \in \beta^{-1}\left(A\right)} \hspace{-1cm} f (\beta(s))\, \nu(\dif s).\]

We denote by $\left[\alpha, \mu\right]$ the equivalence class of $(\alpha, \mu)$.
\end{definition}

This approach of pushing forward by a partial map (or equivalently, a map that can take the ``undefined'' value $*$) in order to ``shave'' some of the original measure has already been used to define an analogue of the monad $S$ in the context of $\omega$-QBSs \cite{omegaQbs18}.

\begin{definition} We make $S:\operatorname{Qbs} \to \operatorname{Qbs}$ into a functor by letting $S(f)\left(\left[\alpha, \mu\right]\right) = \left[f \circ \alpha, \mu\right]$ for all $f : A \to B$ measurable and all $\left[\alpha, \mu\right] \in S(A)$.
\end{definition}

For all QBSs $A$ and $B$, it is clear that $S$ defines a measurable map from $\operatorname{Qbs}\left(A, B\right)$ to $\operatorname{Qbs}\left(S(A), S(B)\right)$.

\begin{definition}[Integration on QBSs] \label{qbs-integration} Let $A$ be a QBS. For all $\rho = \left[\alpha, \mu\right] \in S(A)$ and all $f \in \operatorname{Qbs}\left(A, [0, +\infty]\right)$, we let
\[\int_{\scriptscriptstyle x \in A}  \hspace{-.35cm} f(x)\, \rho(\dif x) = \int_{\scriptscriptstyle r \in \alpha^{-1}\left(A\right)} \hspace{-1cm} f (\alpha (r))\, \mu(\dif r) \in [0, +\infty].\]
\end{definition}

One can check that this defines a measurable map from $S(A) \times \operatorname{Qbs}\left(A, [0, +\infty]\right)$ to $[0, +\infty]$.

\begin{fact} For all standard Borel spaces $\left(A, \Sigma_A\right)$, the following map defines a bijection between the QBS $S{\left(A, M_A\right)}$ and the set of all sub-probability measures (in the traditional sense) on the standard Borel space $\left(A, \Sigma_A\right)$:
\[\left\{ \begin{array}{ccccc} S(A, M_A) & \to & \Sigma_A & \to & [0,1] \\ \rho & \mapsto & U & \mapsto & \int_{x \in A} \mathbf{1}_U(x)\, \rho(\dif x),\end{array} \right.\]
where $\mathbf{1}_U$ denotes the indicator function of $U$. In addition, this bijection is natural in $A$, and integration as in Definition \ref{qbs-integration} corresponds to integration in the traditional sense through this bijection.
\end{fact}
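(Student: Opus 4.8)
The plan is to prove the stated Fact in three parts: first that the map is well-defined, then that it is a bijection, and finally that it is natural and compatible with integration. I would begin by fixing a standard Borel space $(A, \Sigma_A)$ and checking that the displayed assignment $\rho \mapsto (U \mapsto \int_{x\in A} \mathbf{1}_U(x)\,\rho(\dif x))$ actually lands in the set of traditional sub-probability measures. For this I would observe that for $\rho = [\alpha,\mu]$, Definition \ref{qbs-integration} rewrites the inner integral as $\mu(\alpha^{-1}(U))$ where we interpret $\alpha$ as a partial map into $A$; this is manifestly a measure (countable additivity follows from countable additivity of $\mu$ together with disjointness of the preimages $\alpha^{-1}(U_n)$), and its total mass is $\mu(\alpha^{-1}(A)) \le \mu(\R) = 1$, so it is a sub-probability measure. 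Well-definedness on equivalence classes is immediate from the definition of $\sim$, since taking $f = \mathbf{1}_U$ is exactly one of the defining test functions (after checking that $\mathbf{1}_U$ is measurable from $A$ to $[0,+\infty]$, which holds because $U \in \Sigma_A$ and $A$ is standard Borel, so measurability and QBS-morphism coincide).

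For surjectivity, given a traditional sub-probability measure $\nu$ on $(A,\Sigma_A)$, I would produce a preimage. Using that $(A,\Sigma_A)$ is standard Borel, I would fix a measurable isomorphism with a Borel subset of $\R$ and construct an $\alpha \in \operatorname{Qbs}(\R, A \amalg \{*\})$ together with a $\mu \in G(\R)$ such that the pushforward of the part of $\mu$ landing in $A$ recovers $\nu$. Concretely, I expect to take $\alpha$ to embed $A$ into $\R$ on $\nu$-mass and send the remaining $1 - \nu(A)$ mass to $*$, with $\mu$ a suitable probability measure on $\R$; this is the standard ``represent a sub-probability measure by pushing forward along a partial map'' trick underlying Definition \ref{def:S}. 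For injectivity, I would argue that if two classes $[\alpha,\mu]$ and $[\beta,\nu]$ induce the same set function on $\Sigma_A$, then they agree on all indicators $\mathbf{1}_U$, hence by the standard approximation of nonnegative measurable $f : A \to [0,+\infty]$ by increasing simple functions and the monotone convergence theorem they agree on all such $f$, which is precisely the condition $(\alpha,\mu)\sim(\beta,\nu)$; thus the classes are equal.

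Naturality in $A$ amounts to checking that for a measurable $g : A \to B$ the obvious square commutes, which reduces to the change-of-variables identity $\int \mathbf{1}_U(g(x))\,\rho(\dif x) = \int \mathbf{1}_{g^{-1}(U)}(x)\,\rho(\dif x)$ combined with the definition $S(g)([\alpha,\mu]) = [g\circ\alpha,\mu]$; this is a routine unwinding. Compatibility of integration follows by the same simple-function approximation argument used for injectivity, since both notions of integral agree on indicators and are continuous under monotone limits. I expect the main obstacle to be the surjectivity step: one must carefully use the standard-Borel hypothesis to build the representing pair $(\alpha,\mu)$ and verify that $\alpha$ is a genuine QBS morphism into the coproduct $A \amalg \{*\}$, handling the measure-zero and the ``shaved'' mass landing at $*$ cleanly. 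The measurability bookkeeping for $\mathbf{1}_U$ and for the case-split structure of $\alpha$ is where the QBS formalism, rather than ordinary measure theory, has to be invoked with care.
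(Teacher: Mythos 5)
Your proposal is correct, and it is essentially the intended argument: the paper in fact omits the proof of this Fact entirely, stating that it is analogous to the already-established probability-measure case in the original QBS paper, and your three-step plan (well-definedness via testing on indicators, surjectivity via the pushforward-along-a-partial-map construction with the leftover mass sent to $*$, injectivity and integration-compatibility via simple-function approximation and monotone convergence) is exactly that standard adaptation. The only points needing the care you already flag are that $\mathbf{1}_U$ and $\varphi^{-1}$ qualify as QBS morphisms because measurability and QBS-morphism coincide on standard Borel spaces, and that the representing $\alpha$ is a genuine element of $\operatorname{Qbs}(\R, A\amalg\{*\})$ by the case-split closure property.
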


If we unfold Definition \ref{def:S}, we find that $M_{S\left(A\right)} = \{ r \mapsto \left[s \mapsto \alpha(r,s), \mu_r\right];~ \alpha \in \operatorname{Qbs}\left(\R \times \R, A \amalg \left\{*\right\} \right), \left(r \mapsto \mu_r\right) \in \operatorname{Qbs}\left(\R, G\left(\R\right)\right) \}$. However, in order to define the monad multiplication, we will need the following characterisation\footnote{In fact, the original paper on QBSs goes the other way round. It defines the monad $P$ of probability measures similarly to Fact \ref{fact:S-M}, and then it lays out the arguments needed to prove a characterisation in the spirit of Definition \ref{def:S} \cite[proof of Lemma 27]{qbs17}.}.

\begin{fact} \label{fact:S-M}
For all QBSs $A$, $M_{S(A)} = \{ r \mapsto \left[\alpha, \mu_r\right];~ \alpha \in \operatorname{Qbs}\left(\R, A \amalg \left\{*\right\} \right), \left(r \mapsto \mu_r\right) \in \operatorname{Qbs}\left(\R, G(\R)\right) \}$.
\end{fact}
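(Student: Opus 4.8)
The statement refines the description of $M_{S(A)}$ obtained by unfolding Definition~\ref{def:S}, namely $M_{S(A)} = \{ r \mapsto [s \mapsto \alpha(r,s), \mu_r];~ \alpha \in \operatorname{Qbs}(\R \times \R, A \amalg \{*\}),~ (r \mapsto \mu_r) \in \operatorname{Qbs}(\R, G(\R)) \}$; the only difference is that there the representative $\alpha$ is allowed to depend on the outer parameter $r$, whereas here it is a single map out of $\R$. One inclusion is therefore immediate: every element $r \mapsto [\alpha, \mu_r]$ of the right-hand side arises from the two-variable description by taking the jointly measurable map $(r,s) \mapsto \alpha(s)$, which is constant in $r$. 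The whole content of the fact is the reverse inclusion, so the plan is to start from an arbitrary plot written in two-variable form and \emph{absorb the dependence of $\alpha$ on $r$ into the source-of-randomness measures}.

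To do this I would fix once and for all a Borel isomorphism $\phi : \R \to \R \times \R$, which exists because both sides are uncountable standard Borel spaces; write $\phi_1, \phi_2$ for its two components and $\phi^{-1} : \R \times \R \to \R$ for its inverse. Given $\alpha \in \operatorname{Qbs}(\R \times \R, A \amalg \{*\})$ and $(r \mapsto \mu_r) \in \operatorname{Qbs}(\R, G(\R))$, I would set $\beta = \alpha \circ \phi \in \operatorname{Qbs}(\R, A \amalg \{*\})$ (a single map, no longer depending on $r$) and define $\nu_r$ to be the pushforward of $\mu_r$ along the measurable map $s \mapsto \phi^{-1}(r,s)$. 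Since pushforward preserves total mass, each $\nu_r$ lies in $G(\R)$, so $r \mapsto [\beta, \nu_r]$ is a candidate element of the right-hand side.

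The equality $[s \mapsto \alpha(r,s), \mu_r] = [\beta, \nu_r]$ then holds for every $r$ by a one-line change of variables. Indeed $\phi(\phi^{-1}(r,s)) = (r,s)$ gives $\beta(\phi^{-1}(r,s)) = \alpha(r,s)$ and $\phi^{-1}(r,s) \in \beta^{-1}(A) \iff \alpha(r,s) \in A$, so for every measurable $f : A \to [0,+\infty]$ the integral defining $\sim$ for $[\beta, \nu_r]$, namely $\int_{t \in \beta^{-1}(A)} f(\beta(t))\, \nu_r(\dif t)$, transports along $s \mapsto \phi^{-1}(r,s)$ exactly to $\int_{s \in (\alpha(r,\cdot))^{-1}(A)} f(\alpha(r,s))\, \mu_r(\dif s)$, which is the corresponding integral for $[s \mapsto \alpha(r,s), \mu_r]$. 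By Definition~\ref{def:S} the two classes coincide.

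The one step that genuinely needs care — and which I expect to be the main obstacle — is checking that $r \mapsto \nu_r$ is again a morphism in $\operatorname{Qbs}(\R, G(\R))$, i.e. a measurable family of measures. This amounts to showing that for every Borel $U \subseteq \R$ the map $r \mapsto \nu_r(U) = \mu_r(\{ s : \phi^{-1}(r,s) \in U \})$ is measurable, equivalently that $r \mapsto \int_{\R} h(\phi^{-1}(r,s))\, \mu_r(\dif s)$ is measurable for bounded measurable $h$. This is the standard fact that pushing a measurable family (Markov kernel) forward along a jointly measurable map yields a measurable family; it follows by a monotone-class argument from the case of indicator functions and is routine on standard Borel spaces. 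With that in hand, $r \mapsto [\beta, \nu_r]$ belongs to the right-hand side and equals the given plot, completing the reverse inclusion.
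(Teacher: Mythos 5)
Your proof is correct and follows essentially the same route as the paper's: both absorb the $r$-dependence of the representative map into the measure by transporting everything through a Borel isomorphism between $\R$ and $\R \times \R$, ending with the representative $\alpha \circ \varphi^{-1}$ and the measure $\mu_r$ pushed forward along the slice map $s \mapsto \varphi^{-1}(r,s)$ (which is exactly the paper's $\varphi_\sharp(\delta_\R(r) \otimes \mu_r)$). The only cosmetic difference is that the paper justifies measurability of the resulting kernel via the product-with-Dirac construction and a cited lemma of Kallenberg, whereas you invoke the equivalent standard fact about pushing a kernel forward along a jointly measurable map.
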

\begin{proof} Let $\alpha \in \operatorname{Qbs}\left(\R \times \R, A \amalg \left\{*\right\} \right)$ and $\left(r \mapsto \mu_r\right) \in \operatorname{Qbs}\left(\R, G(\R)\right)$. The maps $r \mapsto \mu_r$ and $r \mapsto \delta_\R(r)$ are Markov kernels (where $\delta_\R(r)$ denotes the Dirac measure at $r$ on $\R$), so $r \mapsto \delta_\R(r) \otimes \mu_r$ is a Markov kernel \cite[Lemma 1.17, as a particular case of composition where the second kernel ignores its second argument]{kallenberg17} and therefore a measurable map from $\R$ to $G(\R \times \R)$. Let $\varphi$ be an isomorphism between the standard Borel spaces $\R \times \R$ and $\R$. Then for all $r \in \R$, $\left[s \mapsto \alpha(r,s), \mu_r\right] = \left[\alpha \circ \varphi^{-1}, \varphi_\sharp {\left(\delta_\R(r) \otimes \mu_r\right)}\right]$, where $\varphi_\sharp \left(\delta_\R(r) \otimes \mu_r\right)$ denotes the pushforward measure of $\delta_\R(r) \otimes \mu_r$ by $\varphi$.
\end{proof}

In particular, for all $\rho \in S(S(A))$, there exists $\mu \in G(\R)$, $U \in \Sigma_\R$, $(r\mapsto\nu_r)\in\operatorname{Qbs}\left(U, G\left(\R\right)\right)$ and $\alpha \in \operatorname{Qbs}\left(\R, A \amalg \left\{*\right\} \right)$ such that
\[\rho = \left[ r \mapsto \left(\begin{array}{cl} \left[ \alpha, \nu_r \right] & \text{if } r \in U \\  * & \text{otherwise} \end{array}\right), \mu \right].\]

\begin{definition}\label{def:S-monad} We make $S$ into a monad $\left(S, \delta, \E\right)$ as follows\footnote{$\delta$ stands for \emph{Dirac}, and $\E$ stands for \emph{expected value}.}.
For all QBSs $A$ and all $x\in A$, $\delta_{A}(x) = [r \mapsto x, \mu]\in S(A),$
 where $\mu$ is any probability measure on $\R$. For all \[\rho = \left[ r \mapsto \left(\begin{array}{cl} \left[ \alpha, \nu_r \right] & \text{if } r \in U \\  * & \text{otherwise} \end{array}\right), \mu \right]\in S(S(A)),\]
\[\E_{A}(\rho) = \left[\alpha, V \mapsto \int_{\scriptscriptstyle r\in U} \hspace{-.3cm} \nu_r(V)\,\mu(\dif r)\right]\in S(A).\]
Alternatively, we write $\emb{x}$ for $\delta_{A}(x)$ and $\int_{\tau \in G(A)} \tau\,\rho(\dif \tau)$ for $\E_{A}(\rho)$.
\end{definition}

This means that sub-probability distributions are stable under sub-convex combinations in a very broad sense, and that this
``sub-convex combination'' (or ``expected value'') operation is measurable. In addition, sub-probability distributions are stable under countable sums as long as the sum of the total weights is at most $1$, and this ``countable sum'' operation is also measurable on its domain:

\begin{fact}\label{fact:S-sum} Let $A$ be a QBS. For all $(\rho_{n})_{n\in\N} \in S(A)^\N$ such that $\sum_{n \in \mathbb N} \int_{x \in A} 1\,\rho_{n}(\dif x) \leq 1$, there exists a unique $\left(\sum_{n\in\N} \rho_n\right) \in S(A)$ such that for all measurable maps $f : A \to [0,+\infty]$, \[\int_{x\in A} f(x)\,{\left(\sum_{n\in\N} \rho_n\right)}(\dif x) = \sum_{n\in\N}  \int_{x\in A} f(x)\,\rho_n(\dif x).\] In addition, the map $(\rho_{n})_{n\in\N} \mapsto \sum_{n\in\N} \rho_n$ is measurable (when its domain is equipped with the subset QBS structure).
\end{fact}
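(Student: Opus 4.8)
The plan is to prove uniqueness first, then construct the sum explicitly from representatives, and finally verify measurability by running that construction uniformly in a parameter. Uniqueness is immediate: by Definition~\ref{def:S}, two elements of $S(A)$ that assign the same integral to every measurable $f : A \to [0,+\infty]$ are equal, so at most one $\rho \in S(A)$ can satisfy the required identity.

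For existence, I would write each $\rho_n = [\alpha_n, \mu_n]$ with $\alpha_n \in \operatorname{Qbs}(\R, A\amalg\{*\})$ and $\mu_n \in G(\R)$, and set $m_n = \mu_n(\alpha_n^{-1}(A)) = \int_{x\in A} 1\,\rho_n(\dif x)$, so that $m := \sum_n m_n \le 1$. Work on the standard Borel space $Z = (\N\times\R)\amalg\{\star\}$, which is uncountable and hence Borel-isomorphic to $\R$. Define $\gamma : Z \to A\amalg\{*\}$ by $\gamma(n,s) = \alpha_n(s)$ and $\gamma(\star) = *$; this is measurable, being a countable case-split of the maps $\alpha_n$ together with one extra point. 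Define a probability measure $\lambda$ on $Z$ by putting, on each block $\{n\}\times\R$, the restriction of $\mu_n$ to the Borel set $\alpha_n^{-1}(A)$, and placing the remaining mass $1-m$ at $\star$. Since $\gamma$ sends $\star$ and every point of $\{n\}\times(\R\setminus\alpha_n^{-1}(A))$ to $*$, while $\lambda$ agrees with $\mu_n$ on $\{n\}\times\alpha_n^{-1}(A)$, unfolding Definition~\ref{qbs-integration} gives, for every measurable $f : A\to[0,+\infty]$,
\[\int_{x\in A} f(x)\,[\gamma,\lambda](\dif x) = \sum_{n\in\N}\int_{s\in\alpha_n^{-1}(A)} f(\alpha_n(s))\,\mu_n(\dif s) = \sum_{n\in\N}\int_{x\in A} f(x)\,\rho_n(\dif x).\]
Transporting $[\gamma,\lambda]$ along a Borel isomorphism $Z\cong\R$ then yields the desired element $\sum_n\rho_n\in S(A)$.

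For measurability, consider $\beta \in M_D$, where $D\subseteq S(A)^\N$ is the domain with its subset QBS structure; by definition of the product QBS, $\beta$ amounts to a sequence $(r\mapsto\rho_n^r)_n$ of elements of $M_{S(A)}$ subject to the total-mass constraint at each $r$. This is where Fact~\ref{fact:S-M} is essential: it lets me write each of these as $r\mapsto\rho_n^r = [\alpha_n,\mu_n^r]$ with a \emph{single} $\alpha_n\in\operatorname{Qbs}(\R, A\amalg\{*\})$ and a measurable family $r\mapsto\mu_n^r\in G(\R)$. I then repeat the construction above with these data: the map $\gamma$ on $Z$ is now the same for every $r$, while the measure becomes $\lambda^r = \lambda_0^r + (1-m^r)\delta_\star$, where $\lambda_0^r$ is $\mu_n^r$ restricted to $\alpha_n^{-1}(A)$ on block $n$ and $m^r = \sum_n\mu_n^r(\alpha_n^{-1}(A))$. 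To conclude via Fact~\ref{fact:S-M} that $r\mapsto\sum_n\rho_n^r = [\gamma,\lambda^r]$ (again transported along $Z\cong\R$) lies in $M_{S(A)}$, it remains to check that $r\mapsto\lambda^r$ is measurable into $G(Z)\cong G(\R)$, that is, that $r\mapsto\lambda^r(W)$ is measurable for every Borel $W\subseteq Z$; this holds because each $r\mapsto\mu_n^r(\text{Borel set})$ is measurable and a countable sum of measurable $[0,+\infty]$-valued functions is measurable.

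The routine parts are the measure-theoretic bookkeeping (verifying that $\lambda$ is a genuine probability measure and that the defect mass at $\star$ contributes nothing to integrals). The main obstacle is the measurability statement, and specifically the need to obtain, for each $n$, representatives of $r\mapsto\rho_n^r$ sharing one $r$-independent map $\alpha_n$: without Fact~\ref{fact:S-M} one would only get an $\alpha_n$ depending on $r$, which would not fit the rigid form $[\gamma,\lambda^r]$ required to exhibit membership in $M_{S(A)}$.
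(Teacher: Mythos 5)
Your proof is correct and follows essentially the same route as the paper's: decompose each $\rho_n$ as $[\alpha_n,\mu_n]$, glue the restricted measures onto disjoint blocks of a single standard Borel source (the paper uses intervals $(n,n+1]$ via isomorphisms $\varphi_n:\R\to(n,n+1]$ with the defect mass at $0$, where you use $(\N\times\R)\amalg\{\star\}$ transported along a Borel isomorphism --- a purely cosmetic difference), and invoke Fact~\ref{fact:S-M} to get $r$-independent $\alpha_n$'s for the measurability claim. You correctly identify that last point as the crux.
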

\begin{proof} 
Let $(r \mapsto \rho_{n,r})_{n\in\N} \in M_{S\left(A\right)}^\N$ such that $\sum_{n \in \mathbb N} \int_{x \in A} 1\,\rho_{n,r}(\dif x) \leq 1$ for all $r\in\R$. We must prove that there exists a unique $\left(r \mapsto \tau_r\right)\in M_{S\left(A\right)}$ such that for all $f : A \to [0,+\infty]$ measurable and all $r\in\R$, $\int_{x \in A} f(x)\, \tau_r(\dif x) = \sum_{n \in \mathbb N} \int_{x \in A} f(x)\, \rho_{n,r}(\dif x)$.

For all $n$, let $\left(r \mapsto \left[\alpha_n, \mu_{n,r}\right]\right) = \left(r \mapsto \rho_{n,r}\right)$, and let $\varphi_n$ be an isomorphism between the standard Borel spaces $\R$ and $(n,n+1]$. For all $n\in\N$ and all $r\in\R$, let $\nu_{n,r}$ be the measure on $\R$ defined by $\nu_{n,r}\left(U\right) = \mu_{n,r}\left(\varphi_n^{-1}\left(U\right) \cap \alpha_n^{-1}\left(A\right)\right)$. For all $r\in\R$, let $\nu_{*,r} = \left(1-\sum_{n\in\N} \mu_{n,r}\left(\alpha_n^{-1}\left(A\right)\right)\right)\,\delta_\R(0)$. For all $r\in\R$, we let $\nu_{\infty,r} = \nu_{*,r} + \sum_{n\in\N} \nu_{n,r}$: $\left(r\mapsto\nu_{\infty,r}\right)$ is a measurable map from $\R$ to $G(\R)$.

For all $s \in \mathbb \R$, we let $\beta(s) = \alpha\circ\varphi_n^{-1}(s)$ if $s \in (n,n+1]$ and $\beta(s) = *$ if $s \leq 0$. Then $\tau_{r} = \left[\beta, \nu_{\infty,r}\right]$ satisfies the requirements.
\end{proof}

Spaces of the form $S(A)$ will serve as a model for convex QBSs. As a result, these two properties (the existence of a measurable ``sub-convex combination'' map and of a measurable ``countable sum'' partial map) will become the main axioms of convex QBSs.

\section{Convex quasi-Borel spaces}

The only thing one can do with a probabilistic program (or any program for that matter) is to place it in some context that has an observable outcome (such as producing a real value), and observe. What happens then constitutes the \emph{behaviour} of the program. If programs can test values for equality with any constant, then it is sufficient to restrict the notion of observable outcome to just termination ($=$ success) or non-termination ($=$ failure).

The idea behind convex quasi-Borel spaces is to have a set of \emph{random values} (representing programs) and a set of \emph{linear tests} (representing contexts that use the program exactly once). For each linear test $\eta$ and each random value $x$, the structure gives a \emph{probability of success} $\eta x \in [0,1]$.
 
\begin{definition}
A \emph{convex quasi-Borel space} $X$ is the data of
\begin{itemize}
\item two QBSs $\baseQbs{X}$ (random values) and $\baseQbs{X^\bot}$ (linear tests),
\item a measurable map
\[
\cdot_{X}:\left\{ \begin{array}{ccc}
\baseQbs{X^\bot}\times\baseQbs{X} & \to & \left[0,1\right]\\
\left(\eta,x\right) & \mapsto & \eta x = \eta\cdot_{X}x
\end{array}\right.\text{,}
\]
\end{itemize}
such that
\begin{itemize}
\item for all $x,y\in\baseQbs{X}$, if $\forall\eta\in\baseQbs{X^\bot},\eta x=\eta y$,
then $x=y$,
\item there exists a (necessarily unique) measurable map $\E_X:S(\baseQbs{X})\to\baseQbs{X}$
such that for all $\mu\in S(\baseQbs{X})$ and all $\eta\in\baseQbs{X^\bot}$,
$\eta(\E_X(\mu))=\int_{x\in\baseQbs{X}}\eta x\,\mu(\dif x)$,
\item there exists a (necessarily unique) measurable map 
\[
\left\{ \begin{array}{ccc}
\left\{ \begin{array}{c}\left(x_{n}\right)_{n\in\mathbb{N}}\in\baseQbs{X}^{\mathbb{N}};\\ \forall\eta\in\baseQbs{X^\bot},\sum_{n\in\mathbb{N}}\eta x_{n}\leq1\end{array}\right\}  & \to & \baseQbs{X}\\
\left(x_{n}\right)_{n\in\mathbb{N}} & \mapsto & \sum_{n\in\mathbb{N}}x_{n}
\end{array}\right.
\]
such that for all $\left(x_{n}\right)_{n\in\mathbb{N}}$ in its domain, $\eta\left(\sum_{n\in\mathbb{N}}x_{n}\right)=\sum_{n\in\mathbb{N}}\eta x_{n}$,
\end{itemize}
and such that, symmetrically,
\begin{itemize}
\item for all $\eta,\xi\in\baseQbs{X^\bot}$, if $\forall x\in\baseQbs{X},\eta x=\xi x$,
then $\eta=\xi$,
\item there exists a (necessarily unique) measurable map $\E_{X^\bot}:S(\baseQbs{X^\bot})\to\baseQbs{X^\bot}$
such that for all $\rho\in S(\baseQbs{X^\bot})$ and
all $x\in\baseQbs{X}$, $\left(\E_{X^\bot}(\rho)\right)x=\int_{\eta\in\baseQbs{X^\bot}}\eta x\,\rho(\dif\eta)$,
\item there exists a (necessarily unique) measurable map 
\[
\left\{ \begin{array}{ccc}
\left\{ \begin{array}{c}\left(\eta_{n}\right)_{n\in\mathbb{N}}\in\baseQbs{X^\bot}^{\mathbb{N}}; \\ \forall x\in\baseQbs{X},\sum_{n\in\mathbb{N}}\eta_{n}x\leq1 \end{array}\right\}  & \to & \baseQbs{X^\bot}\\
\left(\eta_{n}\right)_{n\in\mathbb{N}} & \mapsto & \sum_{n\in\mathbb{N}}\eta_{n}
\end{array}\right.
\]
such that for all $\left(\eta_{n}\right)_{n\in\mathbb{N}}$ in its domain, $\left(\sum_{n\in\mathbb{N}}\eta_{n}\right)x=\sum_{n\in\mathbb{N}}\eta_{n}x$.
\end{itemize}
\end{definition}

In particular, if $X$ is a convex QBS, then one can easily check that $\left(\baseQbs{X}, \E_X\right)$ is an algebra over the monad $S$.

\begin{remark} It would make sense to merge the two conditions (existence and measurability of integrals of sub-probability measures on one hand, and of countable sums on the other) and ask directly for the existence and measurability of integrals of $s$-finite measures \cite{kallenberg17}. The above definition even suggests how, in this context, to represent $s$-finite measures as a QBS (namely, as a quotient of $S(A)^\N$, with $(\mu_n)_{n\in\N}$ interpreted as ``$\sum_{n\in\N} \mu_n$''). However, countable sums and sub-probability measures, taken separately, are simpler and, importantly, more widely known than $s$-finite measures: we made the choice of sacrificing some concision to gain in technical simplicity.
\end{remark}

Since the above definition is symmetric, each convex QBS comes with a \emph{dual}:

\begin{definition}
For all convex QBSs $X$, we define a convex QBS $X^{\bot}$ by letting
\begin{itemize}
\item $\baseQbs{X^{\bot\bot}}=\baseQbs{X}$,
\item for all $x\in\left|X^{\bot\bot}\right|$ and all $\eta\in\baseQbs{X^\bot}$,
$x\cdot_{X^{\bot}}\eta=\eta\cdot_{X}x$.
\end{itemize}
\end{definition}

\begin{notation}[Integration in convex QBSs] For all QBSs $A$, all convex QBSs $X$, all measurable maps $f : A \to \baseQbs{X}$ and all $\mu \in S(A)$, we write $\int_{a \in A} f(a)\,\mu(\dif a)$ for $\E(S(f)(\mu))$.
\end{notation}

Here are the main, basic examples of convex QBSs (in fact, the first two can be seen as particular instances of the third).

\begin{definition}[Multiplicative unit]
We define a convex QBS $\W$ (for \emph{weights}) by:
\begin{itemize}
\item $\left|\W\right|=\left|\W^{\bot}\right|=\left[0,1\right]$,
\item for all $\eta\in\left|\W^{\bot}\right|$ and all $x\in\left|\W\right|$,
$\eta x$ is the product of $\eta$ and $x$ as elements of $\R$.
\end{itemize}
\end{definition}

\begin{definition}[Additive unit]
We define a convex QBS $\zero$ by:
\begin{itemize}
\item $\left|\zero\right|=\left|\zero^{\bot}\right|=\left\{ 0\right\} $,
\item for all $\eta\in\left|\zero^{\bot}\right|$ and all $x\in\left|\zero\right|$,
$\eta x=0$.
\end{itemize}
\end{definition}

\begin{definition}[Data types]
For all QBSs $A$, we define a convex QBS $\emb A$ by:
\begin{itemize}
\item $\baseQbs{\emb A}=S(A)$,
\item $\baseQbs{\emb A^{\bot}}=\operatorname{Qbs}\left(A,\left[0,1\right]\right)$,
\item for all $\eta\in\baseQbs{\emb A^{\bot}}$ and all $\mu\in\baseQbs{\emb A}$,
$\eta \mu=\int_{x\in A}\eta(x)\mu(\dif x)$.
\end{itemize}
\end{definition}
Note that the existence and measurability of expected values and countable sums on $\baseQbs{\emb A}$ are given by Definition
\ref{def:S-monad} and Fact \ref{fact:S-sum}, and that expected values and countable sums on $\baseQbs{\emb A^\bot}$
are computed pointwise.

Whenever we have two expressions $\Theta_1$ and $\Theta_2$ that are not necessarily defined (such as sums of elements of $\baseQbs{X}$ for some convex QBS $X$), we will write $\Theta_1 = \Theta_2$ for ``$\Theta_1$ is defined if and only if $\Theta_2$ is, and in that case they are equal''.

\begin{notation}
For all convex QBSs $X$, all $x, y \in \baseQbs{X}$ and all $r \in [0,+\infty)$, we write \begin{itemize}
\item $0_X$ for the unique element of $\baseQbs{X}$ such that $\eta 0_X = 0$ for all $\eta \in \baseQbs{X^\bot}$,
\item $x + y$ for the unique element of $\baseQbs{X}$ such that $\eta (x+y) = \eta x+\eta y$ for all $\eta \in \baseQbs{X^\bot}$, if it exists,
\item $r x$ for the unique element of $\baseQbs{X}$ such that $\eta (r x) = r\,\eta x$ for all $\eta \in \baseQbs{X^\bot}$, if it exists,
\item $x \leq y$ if $\eta x \leq \eta y$ for all $\eta \in \baseQbs{X^\bot}$ (which defines a partial order on $\baseQbs{X}$),
\item $\norm{x}$ for $\sup_{\eta \in  \baseQbs{X^\bot}}\eta x \in [0,1]$.
\end{itemize}
\end{notation}

Note that the map $\norm{-} : \baseQbs{X} \to [0,1]$ is not measurable in general, which limits its usefulness.

\begin{fact}
The binary sum and scalar multiplication are measurable on their domains of definition (which are subsets of $\baseQbs{X} \times \baseQbs{X}$ and $[0,+\infty) \times \baseQbs{X}$ respectively).
\end{fact}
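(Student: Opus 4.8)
The plan is to reduce both operations to the two measurable structure maps that every convex QBS already carries — the countable sum and the expected value $\E_X$ — and then to pin down the resulting elements using the separation axiom (if $\eta x = \eta y$ for all $\eta$, then $x = y$). For the binary sum, observe that $(x,y)$ lies in the domain of $+$ exactly when the sequence $(x, y, 0_X, 0_X, \dots)$ lies in the domain of the countable sum, since both conditions read $\sum \eta x_n = \eta x + \eta y \le 1$ for all $\eta \in \baseQbs{X^\bot}$. For such $(x,y)$ the countable sum of $(x, y, 0_X, 0_X, \dots)$ has $\eta$-value $\eta x + \eta y = \eta(x+y)$ for every $\eta$, so by separation it equals $x+y$. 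The tupling $(x,y) \mapsto (x, y, 0_X, 0_X, \dots)$ is measurable into $\baseQbs{X}^{\N}$ (each component is a projection or a constant), it carries the domain of $+$ into the domain of the countable sum, hence restricts to a measurable map between the corresponding subset QBSs; postcomposing with the (measurable) countable-sum map gives measurability of $+$ on its domain.

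For scalar multiplication I would first treat multipliers $s \in [0,1]$, where $sx$ is always defined. Let $\mu_{s,x} \in S(\baseQbs{X})$ be the sub-probability measure characterised by $\int_{z} f(z)\,\mu_{s,x}(\dif z) = s\,f(x)$; then $\eta(\E_X(\mu_{s,x})) = s\,\eta x = \eta(sx)$ for all $\eta$, so $\E_X(\mu_{s,x}) = sx$ by separation. It therefore suffices to show that $(s,x) \mapsto \mu_{s,x}$ is a measurable map from $[0,1]\times\baseQbs{X}$ to $S(\baseQbs{X})$, after which $sx = \E_X(\mu_{s,x})$ is measurable because $\E_X$ is. This is the one step that genuinely uses the structure of $S$: given a random element $r \mapsto (s_r, x_r)$ of $[0,1]\times\baseQbs{X}$, I would exhibit $r \mapsto \mu_{s_r,x_r}$ as an element of $M_{S(\baseQbs{X})}$ via the two-variable description obtained by unfolding Definition \ref{def:S} (see the discussion preceding Fact \ref{fact:S-M}), taking the randomness law $\mu_r$ to be Lebesgue measure on $[0,1]$ and the inner shaving map $t \mapsto \alpha(r,t)$ to be $x_r$ for $t \in [0,s_r]$ and $*$ otherwise. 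This $\alpha$ is measurable because $\{(r,t): 0\le t\le s_r\}$ is Borel and $\alpha$ is then a two-way case-split of the measurable map $(r,t)\mapsto x_r$ and the constant $*$; pushing $\mu_r$ forward along $\alpha(r,\cdot)$ yields exactly $\mu_{s_r,x_r}$.

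Finally, I would extend to arbitrary $r \in [0,+\infty)$ by writing $r$ as a countable sum of sub-unit pieces. Set $c_m(r) = \min(\max(r-m,0),1) \in [0,1]$ for $m \in \N$, so that each $c_m$ is continuous and $\sum_{m\in\N} c_m(r) = r$ for every $r \ge 0$. Each $y_m := c_m(r)\,x$ is defined and, by the sub-unit case, depends measurably on $(r,x)$; moreover $\sum_m \eta y_m = \big(\sum_m c_m(r)\big)\eta x = r\,\eta x$, so the sequence $(y_m)_{m\in\N}$ lies in the domain of the countable sum precisely when $(r,x)$ lies in the domain of scalar multiplication, and its countable sum has $\eta$-value $r\,\eta x = \eta(rx)$, hence equals $rx$ by separation. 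As before, $(r,x)\mapsto (y_m)_{m\in\N}$ is measurable into the relevant subset QBS of $\baseQbs{X}^{\N}$, and postcomposition with the countable-sum map yields measurability of scalar multiplication on its domain.

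The only genuinely delicate point is the lemma that $(s,x)\mapsto\mu_{s,x}$ is measurable: it is exactly there that one must produce an explicit QBS witness and invoke the description of $M_{S(\baseQbs{X})}$ rather than argue abstractly. Everything else is bookkeeping — closure of QBSs under countable products, constants and Borel case-splits, the behaviour of the subset QBS structure under restriction of measurable maps, and repeated appeals to separation to identify the value of a countable sum or an expected value with the intended sum or scalar multiple.
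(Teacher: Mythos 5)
Your proof is correct and follows essentially the same route as the paper's: the binary sum is reduced to the countable-sum axiom via the sequence $(x,y,0_X,0_X,\dots)$, and scalar multiplication is reduced to sub-unit multiples $c_m(r)\,x=\E_X\left(c_m(r)\,\delta_{\baseQbs{X}}(x)\right)$ summed over $m$, with the same truncation functions $c_m(r)=\min(\max(r-m,0),1)$ that the paper calls $r_n$. The only difference is that you spell out an explicit $M_{S(\baseQbs{X})}$ witness for the measurability of $(s,x)\mapsto s\,\delta_{\baseQbs{X}}(x)$, a step the paper dispatches with ``one can check''.
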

\begin{proof}
First, one can check that for all QBSs $A$, all $\mu \in S(A)$ and all $s \in[0,1]$, there exists a unique $s \mu \in S(A)$ such that for all $f \in \operatorname{Qbs}\left(A, [0, +\infty]\right)$, $\int_{a\in A} f(a)\, (s\mu)(\dif a) = s \int_{a\in A} f(a)\, \mu(\dif a)$. In addition, one can check that the operation $\left(s, \mu\right) \mapsto s\mu$ is measurable.

We have $x + y = \sum_{n\in\N} z_n$ , where $z_0 = x$, $z_1 = y$, and $z_n = 0_X$ for $n > 1$, so the binary sum is measurable on its domain.

We have $r x = \sum_{n\in\N} \E_X{\left(r_n \delta_{\baseQbs{X}}(x)\right)}$, where $r_n$ is $0$ if $r < n$, $r-n$ if $n \leq r \leq n+1$, and $1$ if $n+1 < r$. The map $r \mapsto \left(r_n\right)_{n\in\N}$ is measurable, as are $\delta_{\baseQbs{X}}$ and $\E_X$, so scalar multiplication is measurable on its domain.
\end{proof}

From the above proof, one also deduces that $r x$ is always defined when $r \leq 1$.

When $X$ is a convex QBS, we will generally write $x \in X$ for $x \in \baseQbs{X}$.

\section{Multilinear maps and multiplicative connectives}

In this section, we define \emph{multilinear maps} between convex QBSs, and we define a structure of convex QBS on spaces of multilinear maps. This construction generates all the multiplicative connectives. Intuitively, an $n$\emph{-linear map} from $X_1, \ldots, X_n$ to $Y$ represents a probabilistic program that takes $n$ arguments of types $X_1, \ldots, X_n$, uses (\textit{i.e.} samples) each one exactly once, and returns a result of type $Y$.

Given a convex QBS $X$, each linear test $\eta \in \baseQbs{X^\bot}$ can be seen as a map from $\baseQbs{X}$ to $[0,1] = \baseQbs{\W}$. Naturally, the set of ($1$-)\emph{linear maps} from $X$ to $\W$ will be defined as the set of all maps from $\baseQbs{X}$ to $\baseQbs{\W}$ that come from some $\eta \in \baseQbs{X^\bot}$. Multilinearity in general should be preserved by composition, so its definition should at least ensure that \begin{itemize}
 \item if a map $f : \baseQbs{X} \to \baseQbs{Y}$ is linear, then for all $\eta \in \baseQbs{Y^\bot}$, there exists $\xi \in \baseQbs{X^\bot}$ such that $\xi x = \eta f(x)$ for all $x$,
 \item if a map $f : \baseQbs{X_n}\times\ldots\times \baseQbs{X_n} \to \baseQbs{Y}$ is $n$-linear, then it is linear with respect to each argument.
 \end{itemize}
It would be tempting to turn these two ``if''s into ``if and only if''s and use that as the definition of multilinearity. However, we must also add conditions of measurability:

\begin{definition} Let $n$ be a natural number and $X_1, \ldots, X_n, Y$ convex QBSs. An \emph{$n$-linear map} from $X_1, \ldots, X_n$ to $Y$ is a measurable map $f : \baseQbs{X_1} \times \ldots \times \baseQbs{X_n} \to \baseQbs{Y}$ such that for all $1 \leq k \leq n$, there exists a (necessarily unique) measurable map \[f^{\bot k} : \baseQbs{X_1} \times \ldots \times \baseQbs{X_{k-1}} \times \baseQbs{Y^\bot} \times \baseQbs{X_{k+1}} \times \ldots \times \baseQbs{X_n} \to \baseQbs{X_k^\bot}\]
such that for all $x_1 \in \baseQbs{X_1}, \ldots, x_n \in \baseQbs{X_n}, \eta \in \baseQbs{Y^\bot}$,
\[\eta\, f(x_1, \ldots, x_n) = f^{\bot k}(x_1, \ldots, x_{k-1}, \eta, x_{k+1}, \ldots, x_n)\, x_k.\]
When $n=1$, we say that $f$ is \emph{linear}, we write $f^\bot$ for $f^{\bot 1}$, and we write $\eta f$ for $f^\bot(\eta)$, so that the linearity condition reads $(\eta f) x = \eta (f (x))$.
\end{definition}

If $f$ is $n$-linear, then it is clear that $f^{\bot k}$ is $n$-linear for all $1\leq k\leq n$. 

One can easily check that multilinear maps commute with countable sums and expected values with respect to each argument. In particular, all linear maps are morphisms of $S$-algebras. We will see below (Fact \ref{fact:linear-markov}) that in the case of data types, this necessary condition is also sufficient.

\begin{fact}\label{fact:QbsConv-sym-multicat} Convex QBSs and multilinear maps between them form a symmetric multicategory (with composition and symmetries as in the symmetric multicategory $\operatorname{Set}$), which we denote by $\QbsConv$.
\end{fact}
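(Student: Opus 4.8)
The plan is to exploit the parenthetical remark that composition and symmetries are those of the symmetric multicategory $\operatorname{Set}$. A multilinear map is in particular a function between the underlying sets $\baseQbs{X_1} \times \ldots \times \baseQbs{X_n} \to \baseQbs{Y}$, and every multicategory axiom (associativity and unitality of composition, together with equivariance of the symmetric-group action) is an equality between such functions. Since those equalities already hold in $\operatorname{Set}$, it suffices to verify that multilinear maps are closed under the three pieces of structure of $\operatorname{Set}$: that identity functions are linear, that the composite (in $\operatorname{Set}$) of multilinear maps is again multilinear, and that precomposing with a coordinate permutation sends a multilinear map to a multilinear map. In other words, I would show that multilinear maps form a sub-multicategory of $\operatorname{Set}$, after which no axiom needs to be checked by hand.

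Identities and symmetries are immediate. The map $\mathrm{id} : \baseQbs{X} \to \baseQbs{X}$ is linear, with transpose $\mathrm{id}^\bot = \mathrm{id}_{\baseQbs{X^\bot}}$, since $\eta\,\mathrm{id}(x) = \eta x$ for all $x$ and $\eta$, and the identity is measurable. For symmetries, if $f$ is $n$-linear and $\sigma$ is a permutation of $\{1, \ldots, n\}$, then the reindexed map $(x_1, \ldots, x_n) \mapsto f(x_{\sigma(1)}, \ldots, x_{\sigma(n)})$ is $n$-linear: its transpose in the $k$-th argument is obtained from $f^{\bot \sigma^{-1}(k)}$ by the same reindexing of the remaining arguments, and permuting coordinates is measurable, so measurability of the transpose is preserved.

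The real content is closure under composition, which is also the main (and essentially only) obstacle, though it is a matter of bookkeeping rather than of any genuine difficulty. Given an outer map $g$ that is $n$-linear from $X_1, \ldots, X_n$ to $Y$ and inner maps $f_i$ that are $m_i$-linear from $Z_{i,1}, \ldots, Z_{i,m_i}$ to $X_i$, the composite sends the concatenated tuple of arguments to $g(f_1(\vec z_1), \ldots, f_n(\vec z_n))$, and it is measurable as a composite of measurable maps. To show it is multilinear, I would construct the transpose with respect to the argument in position $j$ of the $i$-th block by applying linearity one layer at a time: first, for a test $\eta \in \baseQbs{Y^\bot}$, linearity of $g$ in its $i$-th slot gives \[\eta\,g(f_1(\vec z_1), \ldots, f_n(\vec z_n)) = \xi\,f_i(\vec z_i), \quad \xi = g^{\bot i}(f_1(\vec z_1), \ldots, \eta, \ldots, f_n(\vec z_n)) \in \baseQbs{X_i^\bot};\] then linearity of $f_i$ in its $j$-th slot rewrites $\xi\,f_i(\vec z_i)$ as $f_i^{\bot j}(\ldots, \xi, \ldots)\,z_{i,j}$. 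Reading off the resulting map, the transpose with respect to this argument is the composite of $f_i^{\bot j}$, $g^{\bot i}$, and the remaining $f_\ell$, and hence is measurable, landing in $\baseQbs{Z_{i,j}^\bot}$ as required. Uniqueness of the transpose is automatic from the separation axiom (tests separate the points of each $\baseQbs{X_k}$), so no choices are involved and the composite is a well-defined multilinear map. Once this closure is in place, associativity, unitality and equivariance are inherited verbatim from $\operatorname{Set}$.
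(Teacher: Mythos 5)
Your proof is correct and follows the same route as the paper: the paper's proof simply notes that composition preserves multilinearity (which you spell out via the composite of the transposes $f_i^{\bot j}$ and $g^{\bot i}$) and that the coherence axioms are inherited from $\operatorname{Set}$. Your version merely makes explicit the bookkeeping the paper calls ``straightforward to check.''
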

\begin{proof} It is straightforward to check that composition preserves multilinearity. Since composition and the symmetries are inherited from the symmetric multicategory $\operatorname{Set}$, they satisfy the coherence axioms of symmetric multicategories.
\end{proof}
In particular, convex QBSs and linear maps between them form a category, which we also denote by $\QbsConv$. Note that $\placeholder^\bot$ is a functor from $\QbsConv^{\operatorname{op}}$ to $\QbsConv$.

\begin{definition} We make $\emb{\placeholder} : \operatorname{Qbs} \to \QbsConv$ into a map of symmetric multicategories \cite[Definitions 2.1.9 and 2.2.21]{leinster03} by letting $\emb f(\mu_{1},\ldots,\mu_{n})=\int_{x_{1}\in A_{1}}\ldots\int_{x_{n}\in A_{n}}\emb{f(x_{1},\ldots,x_{n})}\,\mu_{n}(\dif x_{n})\ldots\mu_{1}(\dif x_{1})\in\emb B$ for all $f\in\operatorname{Qbs}(A_{1},\ldots, A_{n};B)$ and all $\mu_{1}\in\emb{A_{1}},\ldots,\mu_{n}\in\emb{A_{n}}$.
\end{definition}

In particular, $\emb{\placeholder}$ is a functor from the category $\operatorname{Qbs}$ to the category $\QbsConv$.

\begin{definition} For all convex QBSs $X_1, \ldots, X_n, Y$, we make the set $\QbsConv(X_1, \ldots, X_n; Y)$ of all $n$-linear maps from $\left(X_1, \ldots, X_n\right)$ to $Y$ into a QBS as follows: for all maps $(r \mapsto f_r)$ from $\R$ to $\QbsConv(X_1, \ldots, X_n; Y)$,
$(r \mapsto f_r) \in M_{\QbsConv(X_1, \ldots, X_n; Y)}$ if and only if
\begin{itemize}
\item $(r \mapsto f_r)$ is a measurable as a map from $\R$ to $\operatorname{Qbs}(\baseQbs{X_1} \times \ldots \times \baseQbs{X_n},\allowbreak \baseQbs{Y})$,
\item for all $k$, $(r \mapsto f_r^{\bot k})$ is measurable as a map from $\R$ to $\operatorname{Qbs}(\baseQbs{X_1} \times \ldots \times \baseQbs{X_{k-1}} \times \baseQbs{Y^\bot} \times \baseQbs{X_{k+1}} \times \ldots \times \baseQbs{X_n}, \baseQbs{X_k^\bot})$.
\end{itemize}
\end{definition}

Multilinear maps between data types correspond exactly to sub-probability kernels:

\begin{fact}\label{fact:linear-markov} For all QBSs $A_1, \ldots, A_n, B$, the maps
\[f \mapsto (\mu_1, \ldots, \mu_n) \mapsto \iint_{\scriptscriptstyle x_1 \in A_1, \ldots, x_n \in A_n} \hspace{-1.95cm} f(x_1, \ldots, x_n)\, \mu_1(\dif x_1)\ldots\mu_n(\dif x_n)\]
\[\text{and } g \mapsto (x_1, \ldots, x_n) \mapsto g{\left(\emb{x_1}, \ldots, \emb{x_n}\right)}\]
define inverse natural isomorphisms between the QBSs $\operatorname{Qbs}(A_1, \ldots, A_n; S(B))$ and $\QbsConv\left(\emb{A_1}, \ldots, \emb{A_n}; \emb{B}\right)$.
\end{fact}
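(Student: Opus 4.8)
The plan is to name the two maps $\Phi \colon f \mapsto \bigl((\mu_1,\ldots,\mu_n) \mapsto \iint f(x_1,\ldots,x_n)\,\mu_1(\dif x_1)\ldots\mu_n(\dif x_n)\bigr)$ and $\Psi \colon g \mapsto \bigl((x_1,\ldots,x_n)\mapsto g(\emb{x_1},\ldots,\emb{x_n})\bigr)$, and to check four things in turn: that each lands in the stated QBS, that $\Psi\circ\Phi$ and $\Phi\circ\Psi$ are identities, that both maps are measurable, and that the isomorphism is natural. I would begin with the well-definedness of $\Phi$. Measurability of $\Phi(f)$ as a map $S(A_1)\times\ldots\times S(A_n)\to S(B)$ follows from the measurability of iterated integration. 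To exhibit $\Phi(f)$ as $n$-linear I must, for each $k$, produce the dual map $\Phi(f)^{\bot k}$: unfolding the pairing in $\emb{B}$, for $\eta\in\baseQbs{\emb{B}^\bot}=\operatorname{Qbs}(B,[0,1])$ one has $\eta\,\Phi(f)(\mu_1,\ldots,\mu_n)=\iint \bigl(\eta\cdot_{\emb{B}} f(x_1,\ldots,x_n)\bigr)\,\mu_1(\dif x_1)\ldots\mu_n(\dif x_n)$, so I would take $\Phi(f)^{\bot k}$ to send $(\mu_1,\ldots,\eta,\ldots,\mu_n)$ to the function $x_k\mapsto$ the same iterated integral with $\mu_k$ omitted and $x_k$ held fixed. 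Its defining identity then holds by commutation of integrals, and its measurability again reduces to measurability of integration. Well-definedness of $\Psi$ is easier: $\Psi(g)$ is the composite of the measurable map $g$ with the measurable maps $\emb{\placeholder}$, hence measurable, i.e. a sub-probability kernel.

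For the two round-trips, the computations are short once the right monad identities are invoked. For $\Psi\circ\Phi=\operatorname{id}$, integrating $f$ against the Dirac measures $\emb{x_i}$ collapses each integral by the unit law $\int_{y\in A}F(y)\,\emb{a}(\dif y)=F(a)$ (which is $\E\circ S(\delta)=\operatorname{id}$ applied pointwise), returning $f(x_1,\ldots,x_n)$. For $\Phi\circ\Psi=\operatorname{id}$, the key ingredients are the other unit law in the form $\mu=\int_{x\in A}\emb{x}\,\mu(\dif x)$ and the fact---noted just before the statement---that multilinear maps commute with expected values with respect to each argument. Applying the latter one argument at a time rewrites $g(\mu_1,\ldots,\mu_n)$ as $\iint g(\emb{x_1},\ldots,\emb{x_n})\,\mu_1(\dif x_1)\ldots\mu_n(\dif x_n)$, which is exactly $\Phi(\Psi(g))(\mu_1,\ldots,\mu_n)$.

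It remains to see that $\Phi$ and $\Psi$ are morphisms of QBSs and that the isomorphism is natural. Using the explicit QBS structure on $\QbsConv(\emb{A_1},\ldots,\emb{A_n};\emb{B})$, a measurable family $(r\mapsto f_r)$ of kernels is sent by $\Phi$ to a family whose underlying map and whose duals $(r\mapsto\Phi(f_r)^{\bot k})$ are measurable, by the measurability of integration together with the cartesian-closed transpose; conversely $\Psi$ sends $(r\mapsto g_r)$ to $(r\mapsto\Psi(g_r))$, which is measurable because $(r\mapsto g_r)$ is and $\emb{\placeholder}$ is a fixed measurable map, so $\Psi$ too preserves measurable families. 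Naturality in $A_1,\ldots,A_n,B$ is then routine, following from functoriality of $S$ and of $\emb{\placeholder}$ and the naturality of integration.

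The main obstacle is the well-definedness of $\Phi$, and specifically the existence and measurability of the dual maps $\Phi(f)^{\bot k}$: this is where the commutation of integrals and the measurability of the integration pairing must be deployed with care. By contrast, the two round-trips reduce to the monad unit laws and the already-noted commutation of multilinear maps with integration, and the remaining measurability and naturality checks are bookkeeping.
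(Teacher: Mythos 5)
The paper states this Fact without proof, so there is no official argument to compare against; your proposal is correct and supplies the expected argument in what is essentially the intended way. You rightly identify the only non-routine point — exhibiting the dual maps $\Phi(f)^{\bot k}$ via commutation of integrals and the measurability of the integration pairing — and the two round-trips do follow from the monad unit laws together with the commutation of multilinear maps with expected values that the paper records just before the statement.
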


In particular, a map $f : \baseQbs{\emb{A}} \to \baseQbs{\emb{B}}$ is linear if and only if it is a morphism of $S$-algebras.

Now, we need to equip spaces of multilinear maps with a structure of convex QBS. In other words, we need to define \emph{linear tests} on multilinear maps. Intuitively, to test a linear map means to apply it to a randomly chosen input and then test its output with a randomly chosen test:

\begin{notation} Let $X_1, \ldots, X_n, Y$ be convex QBSs. For all $\theta = (\mu_p)_{p\in\N}$ in 
$S{\left(\baseQbs{X_1} \times \ldots \times \baseQbs{X_n} \times \baseQbs{Y^\bot}\right)}^\N$
and all $f \in \QbsConv(X_1, \ldots, X_n; Y)$, we write $\operatorname{Test}(f, \theta)$ for
\[\sum_{p\in\N} \int_{\scriptscriptstyle{\baseQbs{X_1} \times \ldots \times \baseQbs{X_n} \times \baseQbs{Y^\bot}}} \hspace{-2.2cm} \eta\,f(x_1,\ldots,x_n)\,\mu_p(\dif (x_1, \ldots, x_n, \eta)).\]

We denote by $\QbsConv^\bot(X_1, \ldots, X_n; Y)$ the quotient of the QBS
\[\left\{ \begin{array}{c}\theta \in S\left(\baseQbs{X_1} \times \ldots \times \baseQbs{X_n} \times \baseQbs{Y^\bot}\right)^\N;\\
\forall f \in \QbsConv(X_1, \ldots, X_n; Y), \operatorname{Test}(f, \theta) \leq 1
\end{array} \right\}\]
by the equivalence relation that identifies $\theta_1$ and $\theta_2$ if and only if $\operatorname{Test}(f, \theta_1) = \operatorname{Test}(f, \theta_2)$ for all $f \in \QbsConv(X_1, \ldots, X_n; Y)$. We denote the equivalence class of $\theta$ by $[\theta]$.
\end{notation}

\begin{definition} Let $X_1, \ldots, X_n, Y$ be convex QBSs. We define a convex QBS $(X_1, \ldots, X_n) \linear Y$ by
\begin{itemize}
\item $\baseQbs{(X_1, \ldots, X_n) \linear Y} = \QbsConv(X_1, \ldots, X_n; Y)$,
\item $\baseQbs{\left((X_1, \ldots, X_n) {\linear} Y\right)^\bot} = \QbsConv^\bot(X_1, \ldots, X_n; Y)$,
\item $[\theta] f = \operatorname{Test}(f, \theta)$.
\end{itemize}

We make $(\placeholder, \ldots, \placeholder) \linear \placeholder$ into a functor from $\left(\QbsConv^{\operatorname{op}}\right)^n \times \QbsConv$ to $\QbsConv$ by letting $\left((\alpha_1, \ldots, \alpha_n) \linear \beta \right) f = (x_1, \ldots, x_n) \mapsto \beta (f (\alpha_1(x_1), \ldots, \alpha_n(x_n)))$.
\end{definition}

For all convex QBSs $X_1, \ldots, X_n, Y$, all permutations $\sigma$ of $[n] = \{1, \ldots, n\}$ and all $f : (X_{\sigma(1)}, \ldots, X_{\sigma(n)}) \linear Y$, we denote by $\sigma^*f$ the linear map $(x_1, \ldots, x_n) \mapsto f(x_{\sigma(1)}, \ldots, x_{\sigma(n)})$. This defines a natural isomorphism $\sigma^*$ between $(X_{\sigma(1)}, \ldots, X_{\sigma(n)}) \linear Y$ and $(X_1, \ldots, X_n) \linear Y$. We write $\sigma_*$ for $\left(\sigma^*\right)^\bot$.

The convex QBS $\W$ is both neutral and dualising:
\begin{fact}\label{fact:W-neutral-dualising} For all convex QBSs $X_1, \ldots, X_n, Y$, the map
\[f \mapsto (x_1, \ldots, x_n, r) \mapsto r\,f(x_1, \ldots, x_n)\]
defines a natural isomorphism between $(X_1, \ldots, X_n) \linear Y$ and $(X_1, \ldots, X_n, \W) \linear Y$, and the map
$\eta \mapsto y \mapsto \eta y$
defines a natural isomorphism between $Y^\bot$ and $Y \linear \W$.
\end{fact}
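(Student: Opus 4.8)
The plan is to prove each of the two isomorphisms by exhibiting an explicit set-theoretic inverse together with the dual maps required to make both directions \emph{linear}, checking measurability as we go; naturality is then a routine substitution check against the functorial action of $\linear$.

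For the first isomorphism, write $\Phi(f) = (x_1,\ldots,x_n,r)\mapsto r\,f(x_1,\ldots,x_n)$ (which is always defined, since $r\le 1$ and scalar multiplication by $r\le 1$ always exists). First I would check that $\Phi(f)$ is genuinely $(n+1)$-linear: linearity in each $X_k$ with $k\le n$ comes from that of $f$ together with the fact that scalar multiplication by $r\le1$ commutes with the $X_k$-tests, while linearity in the last ($\W$) coordinate is immediate, the map $r\mapsto r\,f(x_1,\ldots,x_n)$ having dual $\eta\mapsto\eta\,f(x_1,\ldots,x_n)\in[0,1]=\baseQbs{\W^\bot}$. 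Concretely, $\Phi(f)^{\bot k}$ sends $(x_1,\ldots,\eta,\ldots,x_n,r)$ to $r\,f^{\bot k}(x_1,\ldots,\eta,\ldots,x_n)$ for $k\le n$, and $\Phi(f)^{\bot(n+1)}$ sends $(x_1,\ldots,x_n,\eta)$ to $\eta\,f(x_1,\ldots,x_n)$; both are measurable because $f$, $f^{\bot k}$ and scalar multiplication are. The inverse is $\Psi(g)=(x_1,\ldots,x_n)\mapsto g(x_1,\ldots,x_n,1)$. That $\Psi\circ\Phi=\mathrm{id}$ is immediate, and $\Phi\circ\Psi=\mathrm{id}$ uses linearity of $g$ in its $\W$-argument: $g(x_1,\ldots,x_n,r)=g(x_1,\ldots,x_n,r\cdot1)=r\,g(x_1,\ldots,x_n,1)$, scalar multiplication by $r\le1$ being an instance of the expected-value operation with which linear maps commute.

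It remains to realise $\Phi$ as a linear isomorphism, i.e.\ to produce measurable duals. For $\Phi^\bot$, given a test $[\theta]$ on $(X_1,\ldots,X_n,\W)\linear Y$ represented by $\theta=(\mu_p)_{p\in\N}$, I would set $\Phi^\bot([\theta])=[\theta']$, where $\theta'$ is obtained by pushing each $\mu_p$ forward along the measurable map $(x_1,\ldots,x_n,r,\eta)\mapsto(x_1,\ldots,x_n,r\eta)$, using measurability of scalar multiplication on $\baseQbs{Y^\bot}$. A direct computation, via $(r\eta)(f(x_1,\ldots,x_n))=r\,(\eta\,f(x_1,\ldots,x_n))$, yields $\operatorname{Test}(f,\theta')=[\theta]\,\Phi(f)$; this simultaneously shows that $[\theta']$ lies in $\QbsConv^\bot$ (as $[\theta]\,\Phi(f)\le1$), that $\Phi^\bot$ is well defined on equivalence classes, and that it satisfies the required relation $\Phi^\bot([\theta])\,f=[\theta]\,\Phi(f)$. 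The dual $\Psi^\bot$ is built the same way, pushing forward along $(x_1,\ldots,x_n,\eta)\mapsto(x_1,\ldots,x_n,1,\eta)$. Measurability of $\Phi$ and $\Psi$ as maps of QBSs, on both the ``value'' and the ``$\bot k$'' components of the QBS structure on multilinear maps, follows from measurability of $\cdot$, of scalar multiplication and of pushforward.

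The second isomorphism is lighter. The map $\eta\mapsto(y\mapsto\eta y)$ lands in $\baseQbs{Y\linear\W}$ with dual $s\mapsto s\eta$ (well defined since $s\in[0,1]$), so it is linear; its inverse sends a linear $g:Y\to\W$ to $g^\bot(1)\in\baseQbs{Y^\bot}$, and the two round-trips are immediate from $g^\bot(1)\,y=1\cdot g(y)=g(y)$ and from $(y\mapsto\eta y)^\bot=(s\mapsto s\eta)$. Measurability in both directions reduces to that of the evaluation map $\cdot_Y$ and of scalar multiplication. Finally, naturality of both isomorphisms, contravariant in the $X_i$ and covariant in $Y$ through the action of $\linear$, is a direct substitution check. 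I expect the only real obstacle to be bookkeeping: correctly producing the dual maps on the \emph{quotiented} test spaces $\QbsConv^\bot$ and verifying that the pushforward recipe respects both the ``$\operatorname{Test}\le1$'' constraint and the defining equivalence relation, rather than any genuine conceptual difficulty.
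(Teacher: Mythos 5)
The paper states this Fact without proof, so there is nothing to compare against; judged on its own, your argument is correct and is the natural one. The set-theoretic bijections $\Phi,\Psi$ are right, the key points are all present (scalar multiplication by $r\le 1$ is always defined and measurable, linear maps commute with it since it is an instance of $\E$, and the duals $\Phi^{\bot k}$ are obtained by scaling $f^{\bot k}$ for $k\le n$ and by $\eta\mapsto\eta\,f(x_1,\ldots,x_n)$ in the $\W$ slot), and your pushforward construction of $\Phi^\bot$ along $(x_1,\ldots,x_n,r,\eta)\mapsto(x_1,\ldots,x_n,r\eta)$ correctly handles the quotiented test spaces: the identity $\operatorname{Test}(f,\theta')=[\theta]\,\Phi(f)$ does indeed give boundedness, well-definedness on classes, and the adjointness relation in one stroke. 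The only place you are slightly under-explicit is the second isomorphism: you verify that $\eta\mapsto(y\mapsto\eta y)$ is a bijection onto $\baseQbs{Y\linear\W}$ and that each image is linear, but you do not actually write down the dual of the isomorphism itself, i.e.\ the map $\QbsConv^\bot(Y;\W)\to\baseQbs{Y}$ sending $[(\mu_p)_{p\in\N}]$ to $\sum_{p\in\N}\int s\,y\;\mu_p(\dif(y,s))$ (an expected value followed by a countable sum, both available by the axioms of convex QBSs), together with the check $[\theta](y\mapsto\eta y)=\eta\bigl(\sum_p\int s\,y\,\mu_p(\dif(y,s))\bigr)$. This is the same pattern you already carried out for $\Phi^\bot$, so it is a matter of completeness rather than a gap in the idea.
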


The symmetric multicategory $\QbsConv$ is closed in the following sense:

\begin{proposition}\label{prop:QbsConv-closed} For all convex QBSs $X_1, \ldots, X_m,\allowbreak Y_1, \ldots, Y_n, Z$, the maps
\[f \mapsto (x_1, \ldots, x_m) {\mapsto} (y_1, \ldots, y_n) {\mapsto} f(x_1, \ldots, x_m, y_1, \ldots, y_n)\]
\[F \mapsto (x_1, \ldots, x_m, y_1, \ldots, y_n) \mapsto F(x_1, \ldots, x_m)(y_1, \ldots, y_m)\]
define inverse natural isomorphisms between $(X_1, \ldots, X_m,\allowbreak Y_1, \ldots, Y_n) \linear Z$ and $(X_1, \ldots, X_m) \linear (Y_1, \ldots, Y_n) \linear Z$.
\end{proposition}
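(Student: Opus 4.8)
The two displayed maps are patently inverse to each other as functions between the underlying sets, since $F(x_1,\ldots,x_m)(y_1,\ldots,y_n) = f(x_1,\ldots,x_m,y_1,\ldots,y_n)$ is just the set-theoretic currying/uncurrying adjunction; so the whole content lies in checking (i) that each map is well defined, i.e. lands among the appropriate multilinear maps, (ii) that both directions are measurable, so that the bijection is an isomorphism of the underlying QBSs, (iii) that it is linear, i.e. an isomorphism in $\QbsConv$, and (iv) that it is natural. Throughout, the single algebraic identity that makes everything match is $\eta\,f(x_1,\ldots,x_m,y_1,\ldots,y_n) = \eta\,(F(x_1,\ldots,x_m)(y_1,\ldots,y_n))$ for $\eta \in \baseQbs{Z^\bot}$; all remaining work is bookkeeping about measurability and about the quotients defining the test spaces.

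Write $W = (Y_1,\ldots,Y_n)\linear Z$. For well-definedness of currying, fix an $(m+n)$-linear $f$. For fixed $x_1,\ldots,x_m$ the partial map $(y_1,\ldots,y_n)\mapsto f(\vec x,\vec y)$ is $n$-linear, its $j$-th dual being $\eta\mapsto f^{\bot(m+j)}(\vec x, y_1,\ldots,\eta,\ldots,y_n)$ with the $x$'s frozen; hence $\Lambda f(\vec x) \in \baseQbs{W}$. To see that $\Lambda f$ is itself $m$-linear into $W$, I would compute, for a test $[\theta] \in \baseQbs{W^\bot}$ represented by $\theta = (\mu_p)_{p\in\N}$, that $[\theta]\,\Lambda f(\vec x) = \operatorname{Test}(\Lambda f(\vec x),\theta)$ equals $\xi\,x_k$, where $\xi = \sum_p \int f^{\bot k}(x_1,\ldots,\eta,\ldots,x_m,\vec y)\,\mu_p(\dif(\vec y,\eta)) \in \baseQbs{X_k^\bot}$ is a countable sum of expected values of values of $f^{\bot k}$; this exhibits the required $k$-th dual. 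Well-definedness on the class $[\theta]$ follows because the paired quantity depends only on $[\theta]$ and $\baseQbs{X_k}$ separates $\baseQbs{X_k^\bot}$; measurability of this dual, and membership of $r \mapsto \Lambda f_r$ in the hom-QBS, reduce through the exponential adjunction in $\operatorname{Qbs}$ and the definition of the hom-QBS structures to the measurability of $f$, of the $f^{\bot k}$, and of $\E$ and countable sums (Definition \ref{def:S-monad} and Fact \ref{fact:S-sum}). Uncurrying is handled symmetrically: linearity of the uncurried map $\mathrm{U}F$ in the $x_k$'s comes from $F$, linearity in the $y_j$'s from $F(\vec x)\in\baseQbs W$, and the measurability clauses are exactly what membership of $F$ in the hom-QBS of the target $(X_1,\ldots,X_m)\linear W$ encodes. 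This simultaneously settles (i) and (ii).

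For linearity (iii), I would produce the dual of currying explicitly and observe that the same construction read backwards gives the dual of uncurrying, the two being mutually inverse; hence both $\Lambda$ and $\Lambda^{-1}$ are linear and the bijection is an isomorphism in $\QbsConv$. A test $[\sigma]$ of the curried space is a class of sequences of sub-probability measures on $\baseQbs{X_1}\times\cdots\times\baseQbs{X_m}\times\baseQbs{W^\bot}$, while a test of the uncurried space lives on $\baseQbs{X_1}\times\cdots\times\baseQbs{X_m}\times\baseQbs{Y_1}\times\cdots\times\baseQbs{Y_n}\times\baseQbs{Z^\bot}$; the dual $\Lambda^\bot[\sigma]$ is obtained by flattening, i.e. by integrating out the $\baseQbs{W^\bot}$-variable against the measures representing it, which turns $\operatorname{Test}(\Lambda f,\sigma)$ into a single $\operatorname{Test}(f,\cdot)$ by a Fubini-type exchange of the countable sum and the two integrals. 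The identity $\eta\,f(\vec x,\vec y)=\eta\,(F(\vec x)(\vec y))$ makes the two $\operatorname{Test}$ integrands literally coincide, so $(\Lambda^\bot[\sigma])f = [\sigma](\Lambda f)$ for all $f$; that $\Lambda^\bot[\sigma]$ is a legitimate element of the uncurried test space, well defined on the class and measurable in $[\sigma]$, follows from the convex-QBS structure of the dual (measurable $\E$ and countable sums) together with the universal property of the quotient defining $\QbsConv^\bot$. Naturality in the $X_i$, $Y_j$ and $Z$ (iv) is then routine: unfolding the functorial action $(\vec\alpha)\linear\beta$ shows that pre- and post-composition commute with currying.

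The main obstacle is item (iii), and within it the flattening of tests through the quotients: one must check that integrating out the $\baseQbs{W^\bot}$-variable produces a bona fide test of the uncurried space with the prescribed pairing, respects the test-equivalence relation on both sides, and is measurable as a map of quotient QBSs. This is where the axioms that expected values and countable sums are measurable do the real work, since the flattening is assembled from exactly those operations; by contrast the algebraic matching of the two sides is immediate, and the well-definedness and measurability in items (i)–(ii) are a direct, if tedious, unwinding of the definitions of the hom-QBS and test-QBS structures.
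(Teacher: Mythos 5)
Your proposal is correct and follows essentially the same route as the paper's proof: the partial duals $f^{\bot(m+j)}$ give pointwise $n$-linearity of the curried map, $\operatorname{Test}$ supplies its $k$-th dual in the $X_k$ directions, and linearity of currying itself is obtained by flattening a test of the curried space into one of the uncurried space by integrating out the $\baseQbs{W^\bot}$-variable (the paper's computation $\operatorname{Test}(\varphi(f),\emb{(x,[\theta_2])})=\operatorname{Test}(f,\int\emb{(x,y,\zeta)}\,\theta_2(\dif(y,\zeta)))$). The paper is terser (it treats only $m=n=1$ and leaves the inverse direction and measurability bookkeeping implicit), but the decomposition and the key identities are the same.
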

\begin{proof}
For simplicity, we assume $m=n=1$, and we drop the corresponding indices. The general proof is similar.

First, we check that the first map, which we will denote by $\varphi$, is well-defined. Let $f \in (X,Y) \linear Z$. For all $x \in X$, $\varphi(f)(x)$ is linear, with $\varphi(f)(x)^\bot = (\zeta \mapsto f^{\bot 2}(x, \zeta))$. In addition, $\varphi(f)$ is linear, with $\varphi(f)^\bot = ([\theta] \mapsto (x \mapsto \operatorname{Test}(\varphi(f)(x), \theta)))$ (identifying $X^\bot$ with $X \linear \W$): indeed, for all $(y,\zeta) \in \baseQbs{Y}\times\baseQbs{Z^\bot}$ and all $x \in \baseQbs{X}$,  $\operatorname{Test}\left(\varphi(f)(x), \emb{(y,\zeta)}\right) = \zeta\, f(x,y) = f^{\bot 1}(\zeta, y)(x)$.

Now, we check that $\varphi$ is linear. For all $f \in (X,Y) \linear Z$, all $x \in X$ and all $[\theta_2] \in (Y \linear Z)^\bot$, $\operatorname{Test}\left(\varphi(f), \emb{(x, [\theta_2])}\right) = \operatorname{Test}\left(f, \int \emb{(x,y,\zeta)}\, \theta_2(\dif (y, \zeta))\right)$, so $\varphi$ is linear, with $\varphi^\bot = [\theta_1] \mapsto f \mapsto \operatorname{Test}((x, [\theta_2]) \mapsto \operatorname{Test}((y, \zeta) \mapsto \zeta f (x,y), \theta_2), \theta_1)$.

The proof of linearity for the inverse map is similar.
\end{proof}

We saw that whenever $f$ is $n$-linear, $f^{\bot k}$ is also $n$-linear for all $k$. In fact, there is a stronger result:

\begin{fact}\label{fact:bot-iso-maps}
For all convex QBSs $X_1, \ldots, X_n, Y$ and all $k \leq n$, the map
$f \mapsto f^{\bot k}$
defines a natural isomorphism between $(X_1, \ldots, X_n) \linear Y$ and $(X_1, \ldots, X_{k-1}, \allowbreak Y^\bot,\allowbreak X_{k+1}, \ldots, X_n) \linear X_k^\bot$.
\end{fact}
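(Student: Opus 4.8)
The goal is to show that the measurable bijection $\Phi : f \mapsto f^{\bot k}$ is an isomorphism of convex QBSs that is natural in $X_1, \ldots, X_n, Y$. Since we already know that $f^{\bot k}$ is $n$-linear whenever $f$ is, $\Phi$ is at least well-defined as a map of underlying sets into the claimed hom-space. The first thing I would establish is that $\Phi$ is an \emph{involution}: a direct computation gives $(f^{\bot k})^{\bot k} = f$, by applying the defining equation of the $k$-th transpose twice and invoking the separation axiom of $Y$ (random values are determined by the tests in $\baseQbs{Y^\bot}$). Concretely, the defining equations yield $\eta\, (f^{\bot k})^{\bot k}(x_1, \ldots, x_n) = f^{\bot k}(x_1, \ldots, \eta_{(k)}, \ldots, x_n)\, x_k = \eta\, f(x_1, \ldots, x_n)$ for all $\eta \in \baseQbs{Y^\bot}$, and separation closes the argument. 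This shows $\Phi$ is a bijection whose inverse is the analogous transpose map for the dual spaces, under the identifications $X_k^{\bot\bot} = X_k$ and $Y^{\bot\bot} = Y$.

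Next I would match the two QBS structures. The QBS structure on each hom-space is cut out by measurability of $r \mapsto f_r$ into the plain function space, together with measurability of each single transpose $r \mapsto f_r^{\bot j}$. The key is to show that the transposes of $g = f^{\bot k}$ recover, up to argument permutation, the transposes of $f$: one has $g^{\bot k} = f$ (the involution above), while for $j \neq k$ a parallel defining-equation-plus-separation computation (now separating in $X_j$) gives $g^{\bot j} = \sigma^*(f^{\bot j})$, where $\sigma$ is the transposition swapping slots $j$ and $k$. Since the maps $\sigma^*$ are the natural isomorphisms established earlier and preserve measurability, the families of measurable curves defining $M$ on the two hom-spaces correspond term by term, so both $\Phi$ and $\Phi^{-1}$ are isomorphisms of QBSs. \textbf{This slot-and-dualization bookkeeping is the main obstacle}: everything is elementary, but one must track carefully which argument slot carries which (dualized) space, and the computation for the mixed transposes $g^{\bot j}$ is where mistakes are easiest to make.

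It then remains to check linearity and naturality. For linearity I would exhibit the transpose $\Phi^\bot$ explicitly on linear tests. A test on the target is a class $[\theta']$ of sub-probability-measure sequences on $\prod_{i \neq k} \baseQbs{X_i} \times \baseQbs{Y^\bot} \times \baseQbs{X_k}$; pushing each component forward along the measurable swap that moves the trailing $\baseQbs{X_k}$-factor into slot $k$ and the slot-$k$ $\baseQbs{Y^\bot}$-factor into the output-test position produces a test $[\theta]$ on the source satisfying $[\theta]\, f = [\theta']\, f^{\bot k}$, using the identity $\eta\, f(x_1, \ldots, \chi_{(k)}, \ldots, x_n) = f^{\bot k}(x_1, \ldots, \eta_{(k)}, \ldots, x_n)\, \chi$. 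Well-definedness on equivalence classes and measurability of $\Phi^\bot$ both follow from measurability of $S$ and of the swap map; since $\Phi^{-1}$ is a transpose map of the same shape, it is linear too, and $\Phi$ is an isomorphism in $\QbsConv$. Naturality (covariant in $Y$, contravariant in each $X_i$, with the matching variances appearing on the target through $\bot$) is then a routine diagram chase from the functorial actions of $\linear$ and the defining equation.

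Alternatively, I would note that the isomorphism can be assembled entirely from structure already proven, making linearity, measurability, and naturality automatic. Using a symmetry $\sigma^*$ to reduce to $k = n$, one has the chain of natural isomorphisms $(X_1, \ldots, X_n) \linear Y \cong (X_1, \ldots, X_{n-1}) \linear (X_n \linear Y) \cong (X_1, \ldots, X_{n-1}) \linear (Y^\bot \linear X_n^\bot) \cong (X_1, \ldots, X_{n-1}, Y^\bot) \linear X_n^\bot$, obtained from Proposition \ref{prop:QbsConv-closed} and the dualizing-unit isomorphisms of Fact \ref{fact:W-neutral-dualising} (the middle step being the base case $X \linear Y \cong Y^\bot \linear X^\bot$), and then undoing the symmetry to place $Y^\bot$ in slot $k$. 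In this route the only remaining task is to verify that the composite equals $f \mapsto f^{\bot k}$ on underlying elements, which by separation reduces to comparing pairings — so the same bookkeeping resurfaces, now hidden inside the identification of the composite rather than in checking measurability directly.
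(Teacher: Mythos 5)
The paper states this Fact without any proof, so there is nothing to compare your argument against line by line; judged on its own, your proof is correct and complete. The three computations that carry it — the involution $(f^{\bot k})^{\bot k}=f$ via separation in $Y$, the identity $g^{\bot j}=\sigma^*(f^{\bot j})$ for $j\neq k$ with $\sigma$ the transposition of slots $j$ and $k$ (which is exactly what makes the two $M$-structures match, since they are defined by measurability of all transposes), and the test-transport identity $\operatorname{Test}(f^{\bot k},\theta')=\operatorname{Test}(f,\theta)$ with $\theta$ the componentwise pushforward of $\theta'$ along the coordinate swap — are each verified correctly, and together they give bijectivity, the QBS isomorphism, and linearity of both directions; naturality is indeed a routine unfolding of the functorial action of $\linear$ and $\placeholder^\bot$. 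One small caution about your alternative route: the middle link $X\linear Y\cong Y^\bot\linear X^\bot$ is the $n=1$ instance of the Fact itself, so it is not "already proven" as stated; it can, however, be assembled from Fact \ref{fact:W-neutral-dualising} and Proposition \ref{prop:QbsConv-closed} via $X\linear Y\cong X\linear(Y^\bot\linear\W)\cong(Y^\bot,X)\linear\W\cong Y^\bot\linear X^\bot$, after which, as you note, one still has to identify the composite with $f\mapsto f^{\bot k}$ by the same pairing computation — so the direct argument is not really avoided, only repackaged.
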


For all convex QBSs $X, Y$, we let $X \otimes Y = \left((X,Y) \linear \W\right)^\bot$. For all $x \in X$ and $y \in Y$, we denote by $x \otimes y$ the unique element of $\baseQbs{X \otimes Y}$ such that $f (x \otimes y) = f(x,y)$ for all $f \in \baseQbs{(X,Y) \linear \W} = \baseQbs{(X\otimes Y)^\bot}$. It follows from the above discussion that the map
\[f \mapsto (w_1, \ldots, w_n, x, y) \mapsto f(w_1, \ldots, w_n, x \otimes y)\]
defines a natural isomorphism between $(W_1, \ldots, W_n,\allowbreak X \otimes Y) \linear Z$ and $(W_1, \ldots, W_n, X, Y) \linear Z$, and therefore that $\left(\QbsConv, \W, \otimes, \linear\right)$ is a closed symmetric monoidal category. For all maps $f : (X,Y) \multimap Z$, we will denote by $x \otimes y \mapsto f(x,y)$ the corresponding map in $X \otimes Y \multimap Z$. 

As in probabilistic coherence spaces, the intuition behind the tensor product $\otimes$ is that a random value of type $X \otimes Y$ is a random pair of values of types $X$ and $Y$, that is to say, two random values of types $X$ and $Y$ that have to be sampled jointly. This intuition is supported by how the tensor product behaves on data types:

\begin{fact} For all QBSs $A$ and $B$, the map
\[\rho \mapsto \int_{(x,y) \in A \times B} \emb{x} {\otimes} \emb{y}\, \rho(\dif(x,y))\]
defines a natural isomorphism between $\emb{A \times B}$ and $\emb{A} \otimes \emb{B}$.
\end{fact}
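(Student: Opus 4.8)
The plan is to exhibit the stated map $\Phi:\rho\mapsto\int_{(x,y)\in A\times B}\emb x\otimes\emb y\,\rho(\dif(x,y))$ as a linear isomorphism in $\QbsConv$ and then check naturality, organising everything around the ``linear test'' (dual) sides of the two convex QBSs. First I would pin down $(\emb A\otimes\emb B)^\bot=(\emb A,\emb B)\linear\W$. Using the canonical identification $\W\cong\emb{\{*\}}$ (total mass gives $S(\{*\})\cong[0,1]$) together with Fact \ref{fact:linear-markov} applied to the data types $\emb A,\emb B$ with codomain $\emb{\{*\}}$, the assignment $\Psi:f\mapsto\bigl((x,y)\mapsto f(\emb x,\emb y)\bigr)$ is an isomorphism of QBSs from $\QbsConv(\emb A,\emb B;\W)=\baseQbs{(\emb A\otimes\emb B)^\bot}$ onto $\operatorname{Qbs}(A\times B,[0,1])=\baseQbs{\emb{A\times B}^\bot}$. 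So the two convex QBSs share, up to $\Psi$, the same space of linear tests, and the proposition amounts to saying that $\Phi$ realises the matching isomorphism on random values.

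The identity tying $\Phi$ to $\Psi$ is that, for every $\rho\in S(A\times B)$ and every $f\in\baseQbs{(\emb A\otimes\emb B)^\bot}$,
\[f\bigl(\Phi(\rho)\bigr)=\int_{(x,y)\in A\times B} f(\emb x\otimes\emb y)\,\rho(\dif(x,y))=\int_{(x,y)\in A\times B} f(\emb x,\emb y)\,\rho(\dif(x,y))=\Psi(f)\cdot\rho,\]
where the first equality is the defining property of $\E_{\emb A\otimes\emb B}$ (the pairing commutes with the expected-value operation used to define integration in convex QBSs), the second is the defining property of $\emb x\otimes\emb y$, and the last is the pairing in $\emb{A\times B}$. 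This shows at once that $\Phi$ is measurable and linear with $\Phi^\bot=\Psi$. Injectivity is then immediate: if $f(\Phi\rho_1)=f(\Phi\rho_2)$ for all $f$, then $g\cdot\rho_1=g\cdot\rho_2$ for all $g\in\baseQbs{\emb{A\times B}^\bot}$ (as $g=\Psi(f)$ ranges over everything), so $\rho_1=\rho_2$ by the separation axiom of $\emb{A\times B}$.

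The substantive step, which I expect to be the main obstacle, is surjectivity: every $[\theta]\in\baseQbs{\emb A\otimes\emb B}$, with $\theta=(\mu_p)_{p\in\N}$ a sequence of sub-probability measures on $S(A)\times S(B)\times[0,1]$, must be hit. I would push the tensor structure down to $A\times B$ by setting
\[\rho=\sum_{p\in\N}\int_{(\mu,\nu,r)} r\,(\mu\otimes\nu)\,\mu_p(\dif(\mu,\nu,r))\in S(A\times B),\]
where $\mu\otimes\nu\in S(A\times B)$ is the product measure. Well-definedness requires the measurability of the product operation $S(A)\times S(B)\to S(A\times B)$ and of scalar multiplication, the integration operation $\E\circ S(\placeholder)$ on $\emb{A\times B}$, and Fact \ref{fact:S-sum} for the countable sum; the summability hypothesis $\sum_p\int r\,\mu(A)\nu(B)\,\mu_p\le1$ needed there is exactly $\operatorname{Test}(f_1,\theta)\le1$ for the test $f_1$ corresponding under $\Psi$ to the constant function $1$, hence holds by definition of the domain of tests. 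A Fubini/Tonelli computation, together with the linearity of integration in $\rho$ granted by Fact \ref{fact:S-sum}, then yields $\int_{(x,y)} f(\emb x,\emb y)\,\rho(\dif(x,y))=\operatorname{Test}(f,\theta)$ for every $f$, i.e. $f(\Phi\rho)=f\cdot[\theta]$; by the separation axiom of $\emb A\otimes\emb B$ this gives $\Phi(\rho)=[\theta]$.

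Finally I would record that the inverse is linear: from the key identity and surjectivity, $\Psi^{-1}(g)\cdot v=g\cdot\Phi^{-1}(v)$ for all $v\in\emb A\otimes\emb B$ and $g\in\baseQbs{\emb{A\times B}^\bot}$, so $\Phi^{-1}$ is linear with $(\Phi^{-1})^\bot=\Psi^{-1}$ (measurable since $\Psi$ is a QBS isomorphism); hence $\Phi$ is an isomorphism in $\QbsConv$. Naturality in $A$ and $B$ is routine: for measurable $u,v$, both $\Phi\circ\emb{u\times v}$ and $(\emb u\otimes\emb v)\circ\Phi$ are linear and agree on the Diracs $\emb{(x,y)}$ (each sending them to $\emb{u(x)}\otimes\emb{v(y)}$), and two linear maps agreeing on Diracs agree everywhere because every $\rho$ equals $\int\emb{(x,y)}\,\rho(\dif(x,y))$ and linear maps commute with this integral. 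The only genuine work beyond bookkeeping is the surjectivity construction and its Fubini verification.
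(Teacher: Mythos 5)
Your proof is correct, but it takes a markedly more explicit route than the paper, whose entire proof is ``Consequence of Fact~\ref{fact:linear-markov}''. The intended argument there is abstract: by Fact~\ref{fact:linear-markov} the product-measure map $(\mu,\nu)\mapsto\iint\emb{(x,y)}\,\mu(\dif x)\,\nu(\dif y)$ is bilinear, hence by the universal property of $\otimes$ it factors through a linear map $\emb{A}\otimes\emb{B}\linear\emb{A\times B}$; the two composites with the stated map are identities because a linear map out of $\emb{A\times B}$ is determined by its values on Diracs (Fact~\ref{fact:linear-markov} again) and a linear map out of $\emb{A}\otimes\emb{B}$ is determined by its values on pure tensors (universal property of $\otimes$). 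You instead unfold the definition $\emb{A}\otimes\emb{B}=((\emb{A},\emb{B})\linear\W)^\bot$, identify the two test spaces via Fact~\ref{fact:linear-markov} with codomain $\W\cong\emb{\{*\}}$, and prove surjectivity by explicitly pushing a test sequence $(\mu_p)_{p\in\N}$ on $S(A)\times S(B)\times[0,1]$ down to a single measure on $A\times B$. Both work; the paper's route is shorter and never touches the quotient-of-test-sequences presentation of the tensor, while yours has the merit of exhibiting the inverse concretely and of actually verifying the summability hypothesis of Fact~\ref{fact:S-sum} (via $\operatorname{Test}(f_1,\theta)\le1$ for the total-mass test), a point the abstract argument delegates to the already-established universal property.

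One small loose end: to conclude that $\Phi$ is an isomorphism in $\QbsConv$ you must also check that $\Phi^{-1}$ is measurable as a map $\baseQbs{\emb{A}\otimes\emb{B}}\to\baseQbs{\emb{A\times B}}$, not only that $(\Phi^{-1})^\bot=\Psi^{-1}$ is; this is not automatic from bijectivity. Your surjectivity construction does supply it: the assignment $\theta\mapsto\rho$ is a composite of operations the paper has shown measurable (product measure via Fact~\ref{fact:linear-markov}, scalar multiplication, $\E$, and the sum of Fact~\ref{fact:S-sum}), it is constant on equivalence classes, and a map out of a quotient QBS is measurable as soon as its composite with the quotient map is. Alternatively, obtaining the inverse from the universal property of $\otimes$, as the paper implicitly does, gives its linearity and measurability for free.
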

\begin{proof} Consequence of Fact \ref{fact:linear-markov}.
\end{proof}

\section{Additive connectives}

\begin{definition} Let $(X_i)_{i \in I}$ be a countable family of convex QBSs. We define a convex QBS $\bigwith_{i \in I} X_i$ by
\begin{itemize}
\item $\baseQbs{\bigwith_{i \in I} X_i} = \prod_{i \in I} \baseQbs{X_i}$,
\item $\baseQbs{\left(\bigwith_{i \in I} X_i\right)^\bot} = \{ (\eta_i)_{i\in I} \in \prod_{i \in I} \baseQbs{X_i^\bot};~ \forall (x_i)_{i \in I} \in\prod_{i \in I} \baseQbs{X_i}, \sum_{i \in I} \eta_i\,x_i \leq 1 \}$,
\item $(\eta_i)_{i\in I} \cdot (x_i)_{i \in I} = \sum_{i \in I} \eta_i\,x_i$.
\end{itemize}
In addition, we let $\bigoplus_{i \in I} X_i = \left(\bigwith_{i \in I} X_i^\bot\right)^\bot$.
\end{definition}

For all $j \in I$, we let $\pi_j$ denote the projection from $\prod_{i \in I} \baseQbs{X_i}$ to $X_j$, and $L_j$ the map $x_j \mapsto \left(\begin{array}{ll}x_j & \text{if } i=j \\ 0_{X_i} & \text{otherwise}\end{array}\right)_{i\in I}$ from $X_j$ to $\prod_{i \in I} \baseQbs{X_i}$.

\begin{fact} Let $(X_i)_{i \in I}$ be a countable family of convex QBSs: $\left(\bigwith_{i \in I} X_i, \left(\pi_i\right)_{i \in I}\right)$ is a cartesian product of the family $(X_i)_{i \in I}$, and $\left(\bigoplus_{i \in I} X_i, \left(L_i\right)_{i \in I}\right)$ is a coproduct of the family $(X_i)_{i \in I}$. In addition, $\zero$ is both initial and terminal.
\end{fact}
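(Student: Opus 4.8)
The plan is to verify the universal property of the product directly in the category $\QbsConv$ of convex QBSs and linear maps, then obtain the coproduct for free from the duality functor, and finally treat $\zero$ as the degenerate case $I=\emptyset$ (or by a one-line direct argument). The conceptual content is entirely in the product; everything else follows formally.

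For the product, I would first check that each projection $\pi_j$ is linear: its transpose is the map $\pi_j^\bot : \baseQbs{X_j^\bot} \to \baseQbs{\left(\bigwith_i X_i\right)^\bot}$ sending $\eta$ to the tuple equal to $\eta$ in position $j$ and $0_{X_i^\bot}$ elsewhere. This tuple is a legitimate test, since its pairing with any $(x_i)$ equals $\eta x_j \leq 1$, and it is measurable, being built from the insertion into a countable product and the zero tests. Next, given a cone of linear maps $f_i : Z \to X_i$, I would set $\langle f_i\rangle(z) = (f_i(z))_{i}$; this is measurable into $\prod_i \baseQbs{X_i}$ because $I$ is countable and each $f_i$ is measurable. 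The crucial point is linearity: for a test $(\eta_i)$ of $\bigwith_i X_i$ we have $(\eta_i)\cdot\langle f_j\rangle(z) = \sum_i \eta_i\, f_i(z) = \sum_i \bigl(f_i^\bot(\eta_i)\bigr)\, z$, so the candidate transpose is $(\eta_i) \mapsto \sum_i f_i^\bot(\eta_i)$, where the sum is computed in the convex QBS $Z^\bot$. This sum is defined precisely because the defining constraint $\sum_i \eta_i x_i \leq 1$ on tests of $\bigwith_i X_i$, instantiated at $x_i = f_i(z)$, gives $\sum_i \bigl(f_i^\bot(\eta_i)\bigr)\, z \leq 1$ for every $z$; its measurability then follows by composing the measurable maps $f_i^\bot$ with the (axiomatically measurable) countable-sum operation of $Z^\bot$. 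Uniqueness is immediate: any linear $g$ with $\pi_i\circ g = f_i$ has underlying function $z \mapsto (f_i(z))_i$, and a linear map is determined by its underlying function, its transpose being unique by the separation axiom.

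For the coproduct I would invoke the duality functor $\placeholder^\bot : \QbsConv^{\operatorname{op}}\to\QbsConv$, which is a strict involution since $X^{\bot\bot}=X$. As $\bigwith_i X_i^\bot$ is a product of the family $(X_i^\bot)_i$ with projections $\pi_i$, applying $\placeholder^\bot$ turns this product into a coproduct of $\bigl((X_i^\bot)^\bot\bigr)_i = (X_i)_i$ with injections $\pi_i^\bot$. It then remains only to identify $\pi_i^\bot$ with $L_i$: unwinding the definitions, both $L_j(x_j)$ and $\pi_j^\bot(x_j)$ pair with a tuple $(\eta_i)$ to give $\eta_j x_j$, so they coincide by the separation axiom, and hence $\bigoplus_i X_i = \left(\bigwith_i X_i^\bot\right)^\bot$ equipped with $(L_i)$ is the desired coproduct.

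Finally, $\zero$ is both the empty product and the empty coproduct: taking $I=\emptyset$, the set $\prod_{\emptyset} \baseQbs{X_i}$ is a single point and the only test-tuple satisfies the $\leq 1$ constraint vacuously, so $\bigwith_{\emptyset} X_i = \zero = \bigoplus_{\emptyset} X_i$, whence $\zero$ is terminal and initial; equivalently, for any $Z$ the unique linear maps are $z\mapsto 0$ into $\zero$ (transpose $0\mapsto 0_{Z^\bot}$) and $0\mapsto 0_Z$ out of $\zero$, both forced by the separation axiom. The only genuinely delicate step in the whole argument is the linearity of $\langle f_i\rangle$, namely that its transpose is well-defined and measurable; this is exactly where the $\sum_i \eta_i x_i \leq 1$ clause in the definition of $\baseQbs{\left(\bigwith_i X_i\right)^\bot}$ and the axiomatic measurability of countable sums in a convex QBS are used.
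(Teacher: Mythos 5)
The paper states this as a \texttt{Fact} with no proof, so there is nothing to compare against line by line; your argument is correct and is exactly the verification the definitions are set up to make routine: check the universal property of $\bigwith$ directly (the one nontrivial point being that the transpose of the pairing is the countable sum $\sum_i f_i^\bot(\eta_i)$ in $Z^\bot$, whose existence is guaranteed by the $\sum_i \eta_i x_i \leq 1$ clause and whose measurability comes from the countable-sum axiom), then transport to $\bigoplus$ via the involutive duality $\placeholder^\bot$, identifying $L_j$ with $\pi_j^\bot$ by separation. The only elisions are harmless bookkeeping: re-indexing a general countable $I$ by $\N$ (padding with $0$'s when $I$ is finite) so that the $\N$-indexed sum axiom applies, and observing that $L_j(x_j)$ does land in $\baseQbs{\bigoplus_i X_i}\subseteq\prod_i\baseQbs{X_i}$, which your computation of its pairing already shows.
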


As in probabilistic coherence spaces, the intuition is that a random value of type $X \with Y$ is in fact two random values of types $X$ and $Y$ that can be sampled separately, while a random value of type $X \oplus Y$  is one that, every time it is sampled, yields either a value of type $X$ or a value of type $Y$. In the case of $\with$, this is just an other way of saying we have a cartesian product, while in the case of $\oplus$, the intuition is supported by how it behaves on data types:

\begin{fact} For all countable families $(A_i)_{i \in I}$ of QBSs, the map
\[[\alpha, \mu] \mapsto \sum_{i\in I}\int_{r \in \alpha^{-1}(A_i)} L_i{\left(\emb{\alpha(r)}\right)}\, \mu(\dif r)\]
defines a natural isomorphism between $\emb{\coprod_{i\in I} A_i}$ and $\bigoplus_{i \in I} \emb{A_i}$.
\end{fact}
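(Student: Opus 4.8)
The plan is to unfold both sides of the claimed isomorphism into explicit data, recognise the displayed map as the canonical ``decompose a measure on a coproduct into its components'' bijection, and then verify linearity in both directions using Facts \ref{fact:S-M} and \ref{fact:S-sum}. Throughout, write $|\mu|$ for the total mass $\int_A 1\,\mu(\dif x)$ of $\mu \in S(A)$.

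First I would compute $\bigoplus_{i\in I}\emb{A_i}$ by unfolding the definitions of $\with$, $\oplus$ and $\placeholder^\bot$, using that $\baseQbs{\emb{A_i}^{\bot\bot}}=S(A_i)$. Testing against the constant $1$ in each coordinate, one finds $\baseQbs{\bigoplus_{i\in I}\emb{A_i}}=\{(\mu_i)_{i\in I}\in\prod_{i\in I}S(A_i);~\sum_{i\in I}|\mu_i|\le1\}$, that $\baseQbs{(\bigoplus_{i\in I}\emb{A_i})^\bot}=\prod_{i\in I}\operatorname{Qbs}(A_i,[0,1])$, and that the pairing is $(\eta_i)_{i\in I}\cdot(\mu_i)_{i\in I}=\sum_{i\in I}\int_{A_i}\eta_i(x)\,\mu_i(\dif x)$. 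On the other side, the universal property of the coproduct in $\operatorname{Qbs}$ \cite[Proposition 17]{qbs17} gives a canonical isomorphism $\operatorname{Qbs}(\coprod_{i\in I}A_i,[0,1])\cong\prod_{i\in I}\operatorname{Qbs}(A_i,[0,1])$, so $\emb{\coprod_{i\in I}A_i}$ and $\bigoplus_{i\in I}\emb{A_i}$ have literally the same linear tests. A short computation then shows that the displayed map $\Phi$ sends $\rho=[\alpha,\mu]$ to the family $(\rho_i)_{i\in I}$ of its restrictions, where $\rho_i$ is determined by $\eta_i(\rho_i)=\int_{r\in\alpha^{-1}(A_i)}\eta_i(\alpha(r))\,\mu(\dif r)$; in particular $\sum_{i\in I}|\rho_i|=\mu(\alpha^{-1}(\coprod_{i\in I}A_i))\le1$, so $\Phi$ is well defined.

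Next I would check that $\Phi$ is linear. For measurability, I would use Fact \ref{fact:S-M}: a measurable family $r\mapsto[\alpha,\mu_r]$ is sent to the family whose $i$-th component is $r\mapsto[c_i\circ\alpha,\mu_r]$, where $c_i\colon(\coprod_{j\in I}A_j)\amalg\{*\}\to A_i\amalg\{*\}$ is the identity on $A_i$ and $*$ elsewhere; this is again of the form prescribed by Fact \ref{fact:S-M}, so each component, hence $\Phi$ itself, is measurable (into the subset QBS). For the dual I would take $\Phi^\bot$ to be the copairing isomorphism $\prod_{i\in I}\operatorname{Qbs}(A_i,[0,1])\xrightarrow{\sim}\operatorname{Qbs}(\coprod_{i\in I}A_i,[0,1])$, which is a QBS isomorphism and hence measurable; the linearity identity $(\Phi^\bot(\eta_i)_{i\in I})\,\rho=(\eta_i)_{i\in I}\,\Phi(\rho)$ then amounts to the additivity of integration over the partition of $\coprod_{i\in I}A_i$ into the $A_i$, i.e.\ $\int_{\coprod_{i\in I}A_i}\eta\,\dif\rho=\sum_{i\in I}\int_{A_i}\eta_i\,\dif\rho_i$.

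Finally I would produce the inverse and conclude. The set-theoretic inverse $\Psi$ is the recombine map $(\mu_i)_{i\in I}\mapsto\sum_{i\in I}S(\iota_i)(\mu_i)$, where $\iota_i\colon A_i\hookrightarrow\coprod_{j\in I}A_j$ are the inclusions and the sum is the countable sum in $\emb{\coprod_{j\in I}A_j}$, which exists since $\sum_{i\in I}|S(\iota_i)(\mu_i)|=\sum_{i\in I}|\mu_i|\le1$; that $\Phi$ and $\Psi$ are mutually inverse is the elementary ``decompose then recombine'' identity. For linearity of $\Psi$ there are two routes: either invoke Fact \ref{fact:S-sum} together with the measurability of each $S(\iota_i)$ to see directly that $\Psi$ is measurable, with dual $(\Phi^\bot)^{-1}$; or, more cheaply, observe that $\Psi$ is exactly the map obtained from the linear maps $\emb{\iota_i}\colon\emb{A_i}\to\emb{\coprod_{j\in I}A_j}$ by the universal property of the coproduct $(\bigoplus_{i\in I}\emb{A_i},(L_i)_{i\in I})$, so that $\Psi$ is linear for free. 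Either way $\Phi$ is an isomorphism in $\QbsConv$, and naturality in $(A_i)_{i\in I}$ is routine: for measurable $f_i\colon A_i\to B_i$, both $\bigoplus_{i\in I}\emb{f_i}\circ\Phi$ and $\Phi\circ\emb{\coprod_{i\in I}f_i}$ send $\rho$ to $(S(f_i)(\rho_i))_{i\in I}$. The only genuinely delicate point is the measurability of the recombine map $\Psi$, which is precisely what Fact \ref{fact:S-sum} (or, equivalently, the coproduct's universal property) is there to supply.
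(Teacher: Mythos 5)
The paper states this Fact without giving any proof, so there is nothing to compare your argument against; judged on its own, your proof is correct and complete. You identify the right ingredients: the tests on both sides coincide via the copairing isomorphism $\operatorname{Qbs}(\coprod_i A_i,[0,1])\cong\prod_i\operatorname{Qbs}(A_i,[0,1])$, the displayed map is restriction of a sub-probability measure to the components of the partition (measurable by Fact~\ref{fact:S-M}), and the inverse is the recombination map, whose linearity you correctly reduce either to Fact~\ref{fact:S-sum} or to the universal property of $\bigoplus$ together with the fact that linear maps commute with countable sums. This is exactly the argument the author presumably had in mind when omitting the proof.
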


\section{Symmetric maps and tensors}

In order to define the exponential modalities, we will need to define the spaces of symmetric maps and symmetric tensors.

For all $n\in\N$, we denote by $\operatorname{Sym}(n)$ the group of permutations of $[n] = \{1, \ldots, n\}$.

\begin{definition} Let $n$ be a natural number and $X, Y$ convex QBSs. We write $(X)^n \linear Y$ for $(X,\ldots, X) \multimap Y$, where $X$ appears $n$ times.\begin{itemize}
\item an $n$-linear map $f : (X)^n \linear Y$ is \emph{symmetric} if $\sigma^*f = f$ for all $\sigma \in \operatorname{Sym}(n)$,
\item a test $\eta \in \left((X)^n \linear Y\right)^\bot$ is \emph{symmetric} if $\sigma_* \eta = \eta$ for all $\sigma \in \operatorname{Sym}(n)$.
\end{itemize}
\end{definition}

If two symmetric maps $f_1, f_2 \in (X)^n \linear Y$ are such that $\eta f_1 = \eta f_2$ for all symmetric tests $\eta$, then $f_1 = f_2$. Indeed, let $\eta$ be any test in $\left((X)^n \linear Y\right)^\bot$: then $\sum_{\sigma \in \operatorname{Sym}(n)} \frac{1}{n!} \sigma_*\eta$ is symmetric, so $\left(\sum_{\sigma} \frac{1}{n!} \sigma_*\eta\right) f_1 = \left(\sum_{\sigma} \frac{1}{n!} \sigma_*\eta\right) f_2$, which means that $\eta \left(\sum_{\sigma} \frac{1}{n!} \sigma^*f_1\right) = \eta \left(\sum_{\sigma} \frac{1}{n!} \sigma^*f_2\right)$; since $f_1$ and $f_2$ are symmetric, this implies $\eta f_1 =  \eta f_2$. Conversely, any two symmetric tests that coincide on symmetric maps are equal. As a result, we can define convex QBSs of \emph{symmetric maps} and \emph{symmetric tensors}:

\begin{definition} Let $n$ be a natural number and $X, Y$ convex QBSs. We define a convex QBS $(X)^n_s \linear Y$ by
\begin{itemize}
\item $\baseQbs{(X)^n_s \linear Y}$ is the set of all symmetric maps in $\baseQbs{(X)^n \linear Y}$,
\item $\baseQbs{\left((X)^n_s \linear Y\right)^\bot}$ is the set of all symmetric tests in $\baseQbs{\left((X)^n \linear Y\right)^\bot}$,
\item $\eta \cdot_{(X)^n_s \linear Y} f = \eta \cdot_{(X)^n \linear Y} f$.
\end{itemize}
We let $X^{\otimes n} = \left((X)^n \linear \W\right)^\bot$ and $X^{\otimes_s n} = \left((X)^n_s \linear \W\right)^\bot$.
\end{definition}

For all $n\in\N$, we define a linear map $\mathcal{S}_n : X^{\otimes n} \linear X^{\otimes_s n}$ by $\mathcal{S}_n = \sum_{\sigma \in \operatorname{Sym}(n)} \frac{1}{n!} \sigma_{*}$. The restriction of $\mathcal{S}_n$ to $X^{\otimes_s n}$ is the identity.

\section{The exponential modalities}

So far, we have defined \emph{linear maps} between QBSs. We would like to define a more general notion of ``computable'' maps (which we will call \emph{analytic maps}, due to their similarity with power series). Following the paradigm of linear logic  \cite{blindspot11}, the first step will be to define the exponential modality ``$!$'' (``of course''). The other exponential modality, ``$?$'' (``why not''), can be defined by duality.

\subsection{Defining $\oc X$}

In this subsection, we fix a convex QBS $X$. We will define the convex QBS $\oc X$ using a generic construction by Melliès, Tabareau and Tasson \cite{bang09}. Following their terminology, we call a \emph{pointed object} any pair $(Y, u)$ with $Y$ a convex QBS and $u : Y \linear \W$, and a \emph{pointed morphism} from $(Y, u)$ to $(Z, v)$ any linear map $f : Y \linear Z$ such that $u = v \circ f$. In order to apply this construction, we only need to prove two conditions. The first is that $(X \with \W, (x,r) \mapsto r, (x,r) \mapsto x)$ defines a \emph{free pointed object} over $X$ in the following sense:

\begin{fact}\label{fact:free-pointed}
For all pointed objects $(Y, u)$ and all linear maps $f : Y \linear X$, there exists a unique pointed morphism $g$ form $(Y, u)$ to $(X \with \W, (x,r) \mapsto r)$ such that $((x,r) \mapsto x) \circ g = f$.
\end{fact}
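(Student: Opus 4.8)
The plan is to recognise that this freeness statement is nothing more than the universal property of the binary cartesian product $X \with \W$, which we already have at our disposal. The additive-connectives section establishes that $\left(\bigwith_{i\in I} X_i, (\pi_i)_{i\in I}\right)$ is a cartesian product in the category $\QbsConv$ of convex QBSs and linear maps, so in particular $X \with \W$ is the product of $X$ and $\W$ with projections $\pi_1 = \left((x,r) \mapsto x\right)$ and $\pi_2 = \left((x,r)\mapsto r\right)$. The distinguished point equipping $X \with \W$ is precisely $\pi_2$, so being a pointed morphism into $(X \with \W, \pi_2)$ amounts to having second projection equal to the prescribed point $u$.

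Concretely, I would unfold the two requirements on the sought map $g : Y \linear X \with \W$. The condition that $g$ be a pointed morphism from $(Y, u)$ reads $u = \pi_2 \circ g$, and the extra condition reads $\pi_1 \circ g = f$. These are exactly the two component equations that characterise the pairing $\langle f, u\rangle$ of the linear maps $f : Y \linear X$ and $u : Y \linear \W$. Since $u$ is linear by hypothesis (it is the point of the pointed object $(Y, u)$) and $f$ is linear by assumption, the universal property of the product yields a unique linear map $g = \langle f, u\rangle$ with $\pi_1 \circ g = f$ and $\pi_2 \circ g = u$. This $g$ is simultaneously the required pointed morphism and the unique linear map whose first projection is $f$, so existence and uniqueness both follow at once.

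There is essentially no hard step here: the content of the fact is entirely carried by the earlier result that $\with$ is the cartesian product, and once that is invoked the statement is just a restatement of the product's universal property with one projection fixed to the point. The only care required is bookkeeping — matching the definition of pointed morphism against the definition of the pairing, and checking that the distinguished point of $X \with \W$ is indeed $\pi_2$ (the $\W$-component) rather than $\pi_1$. I expect this definition-matching to be the sole potential source of confusion rather than any genuine mathematical obstacle.
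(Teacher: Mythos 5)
Your proposal is correct and matches the paper's argument: the paper's proof simply observes that the two conditions force $g(y) = (f(y), u(y))$ for all $y \in Y$, which is exactly the pairing $\langle f, u\rangle$ you obtain from the universal property of the product $X \with \W$. Your categorical phrasing and the paper's pointwise computation are the same one-line argument.
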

\begin{proof} Let $g$ be any map from $\baseQbs Y$ to $\baseQbs{X \with \W}$. Then $g$ is a pointed morphism satisfying this hypothesis if and only for all $y \in Y$, $g(y) = (f(y), u(y))$.
\end{proof}

For all $m \in \N$, we denote by $\J_{m+1,m}$ the canonical projection $\left(X \with \W\right)^{\otimes_s m+1} \linear \left(X \with \W\right)^{\otimes_s m}$, that is to say, the restriction to $\left(X \with \W\right)^{\otimes_s m+1}$ of the unique linear map $\left(X \with \W\right)^{\otimes m+1} \linear \left(X \with \W\right)^{\otimes m}$ that sends $(x_1, r_1) \otimes \ldots \otimes (x_{m+1},r_{m+1})$ to $r_{m+1} (x_1, r_1) \otimes \ldots \otimes (x_{m},r_{m})$. For all $n \geq m$, we let $\J_{n,m} = \J_{m+1,m} \circ \ldots \circ \J_{n,n-1}$. The second condition we need to prove is that the diagram
\[\left(X {\with} \W\right)^{\otimes_s 0} \leftarrow \ldots \left(X {\with} \W\right)^{\otimes_s m} \overset{\scriptscriptstyle\J{m+1, m}}{\leftarrow} \left(X {\with} \W\right)^{\otimes_s m+1} \ldots\]
has a limit and that this limit commutes with the tensor product: $\oc X$ will be defined as this limit. The remainder of this section deals with the technical details of how to do this.

The obvious choice for the underlying QBS $\baseQbs{\oc X}$ is the set of all $(a_n)_{n\in\N} \in \baseQbs{\bigwith_{n\in\N} \left(X \with \W\right)^{\otimes_s n}}$ such that $\J_{n,m} (a_n) = a_m$ for all $m \leq n$. In fact, it would be easy to define a structure of convex QBS on top of this, to prove that it is a limit of the above diagram, and to prove that this limit commutes with tensor products \emph{if} the maps $\J_{n,m}$ had sections. However, they are not even necessarily surjective. Indeed, consider for example the case where $X = \W \oplus \W$, $m = 2$ and $n = 3$. Let $f : \left(X \with \W\right)^{\otimes 2} \linear \W$ be defined by $f((r_0,r_1,r_*) \otimes (s_0, s_1, s_*)) = r_0 s_1 + r_1 s_0$. An elementary computation shows that for all $a \in \left(X \with \W\right)^{\otimes 3}$, $f(\J_{3,2}(\mathcal{S}(a))) \leq \frac{2}{3}$, whereas $f\left(\frac{1}{2}(0,1,0)\otimes(1,0,0) + \frac{1}{2}(1,0,0)\otimes(0,1,0)\right) = 1$.

So instead, we will prove that $\J_{n,m}$ has a section \emph{up to a factor that depends only on $m$}. Namely, for all $m \in \N$, we will define a real number $\rho_m \geq 1$, and for all $m,n \in \N$, we will define a linear map $\K_{m,n} : \left(X \with \W\right)^{\otimes_s m} \linear \left(X \with \W\right)^{\otimes_s n}$ such that:\begin{itemize}
\item $\K_{m,n} = \frac{1}{\rho_m} \J_{m,n}$ if $m \geq n$,
\item $\J_{n,m} \circ \K_{m,n} = \frac{1}{\rho_m} \operatorname{id}_{\left(X \with \W\right)^{\otimes_s m}}$ if $m \leq n$.
\end{itemize}

An element of $X^{\otimes n}$ can be seen as a (non-commutative) homogeneous polynomial of degree $n$. Likewise, an element of $\left(X \with \W\right)^{\otimes n}$ can be seen as a (non-necessarily homogeneous) polynomial of degree at most $n$. For example, $(x, 1) \otimes (y, 1) = (x, 0) \otimes (y, 0) + (x, 0) \otimes (0, 1) + (0, 1) \otimes (y, 0) + (0, 1) \otimes (0, 1)$ represents the polynomial ``$x \otimes y + x + y + 1$'' (which we put between quotes because this is not a well-defined notation). 

Homogeneous polynomials can be extracted from elements of $\left(X \with \W\right)^{\otimes_s m}$ as follows: for all $m,n \in \N$, we denote by $\M_{n,m}$ the canonical projection $\left(X \with \W\right)^{\otimes_s n} \linear X^{\otimes_s m}$, that is to say, the restriction to $\left(X \with \W\right)^{\otimes_s n}$ of the unique linear map $\left(X \with \W\right)^{\otimes n} \linear X^{\otimes m}$ that maps $(x_{1},r_{1})\otimes\ldots\otimes(x_{n},r_{n})$ to $r_{m+1} \ldots r_n\, x_1 \otimes \ldots \otimes x_m$ if $m \leq n$, and to $0$ if $m > n$.

For all $m,n\in\N$, we denote by $\inj{m}{n}$ the set of all injections from $[m]$ to $[n]$. We want to define for all $m\leq n$ a section of $\J_{n,m}$ up to a factor that depends only on $m$. To that end, we will first define for all $m \leq n$ a section of $\M_{n,m}$ up to a factor that depends only on $m$:

\begin{lemma} For all $m,p>0$, there exists a unique linear map from $X^{\otimes m}$ to $\left(X \with \W\right)^{\otimes_s mp}$ that maps $x_1 \otimes \ldots \otimes x_m$ to
\[\sum_{j \in \inj{m}{mp}} \frac{1}{m^m} \bigotimes_{k=1}^{mp} \left(\begin{array}{cl}
\left(x_{j^{-1}\left(k\right)},0\right) & \text{if }k\in\operatorname{im}\left(j\right)\\
\left(0,1\right) & \text{if }k\notin\operatorname{im}\left(j\right)
\end{array}\right).\]
\end{lemma}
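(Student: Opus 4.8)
The plan is to construct the map explicitly and then verify it is well-defined, linear, and unique. The key observation is that the claimed target is a symmetric tensor in $(X \with \W)^{\otimes_s mp}$, so I first need the formula on the right-hand side to actually land in the symmetric subspace. Since the sum ranges over all injections $j \in \inj{m}{mp}$, permuting the $mp$ tensor factors by some $\sigma \in \operatorname{Sym}(mp)$ simply reindexes the sum (replacing $j$ by $\sigma \circ j$), so the whole expression is manifestly invariant under $\sigma_*$; this shows the candidate value is symmetric. Concretely I would define the right-hand side first as an element of $(X \with \W)^{\otimes mp}$ (the non-symmetric tensor power) and observe it is fixed by the symmetrisation $\mathcal{S}_{mp}$, hence lies in $(X \with \W)^{\otimes_s mp}$.

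Next I would establish that the assignment $x_1 \otimes \ldots \otimes x_m \mapsto (\text{the given sum})$ extends to a \emph{linear} map out of $X^{\otimes m}$. By the universal property built into the definition of $X^{\otimes m} = \left((X)^m \linear \W\right)^\bot$ and the closed symmetric monoidal structure established after Fact \ref{fact:bot-iso-maps}, a linear map from $X^{\otimes m}$ is the same data as an $m$-linear (in fact not-necessarily-symmetric suffices, since $X^{\otimes m}$ is the full tensor power) map $(X)^m \linear (X \with \W)^{\otimes_s mp}$. So I would check that the formula, read as a function of $(x_1, \ldots, x_m)$, is separately linear and measurable in each $x_k$. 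Separate linearity is immediate: each summand is, for fixed $j$, a tensor product in which exactly one factor per index $k$ carries the variable $x_{j^{-1}(k)}$ linearly (via the coproduct inclusion $x \mapsto (x,0)$, which is linear $X \linear X \with \W$), and tensoring is multilinear. Measurability follows because the construction is assembled from the measurable operations of the model (coproduct inclusions, tensor products of fixed arity, and the finite sum over the finite index set $\inj{m}{mp}$, scaled by the constant $\tfrac{1}{m^m}$).

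Uniqueness is the cleanest part: any two linear maps from $X^{\otimes m}$ agreeing on all pure tensors $x_1 \otimes \ldots \otimes x_m$ must agree, since by the separation axiom for convex QBSs a linear map is determined by its action against all tests, and the pure tensors generate $X^{\otimes m}$ in the sense that a test $\eta \in (X^{\otimes m})^\bot = (X)^m \linear \W$ is by definition determined by its values $\eta(x_1 \otimes \ldots \otimes x_m) = \eta(x_1, \ldots, x_m)$; thus if two linear maps $g_1, g_2$ agree on all pure tensors then $\zeta g_1 = \zeta g_2$ for every test $\zeta$ on the target precomposed appropriately, forcing $g_1 = g_2$.

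The main obstacle I anticipate is \emph{not} the algebra but making the well-definedness argument fully rigorous through the coend/multicategory machinery: I must confirm that specifying values on pure tensors genuinely yields a well-defined linear map, i.e. that the separate-linearity-plus-measurability data assembles into a single morphism via the universal property of $\otimes$, rather than merely a set-theoretic function that happens to be linear in each slot. Here I would lean on the isomorphism $X \otimes Y \linear Z \cong (X,Y) \linear Z$ derived from Fact \ref{fact:bot-iso-maps} and the closedness Proposition \ref{prop:QbsConv-closed}, iterated $m$ times, so that defining a linear map out of $X^{\otimes m}$ reduces to defining an $m$-linear map out of $(X)^m$ — and the latter is exactly the separately-linear, separately-measurable function the formula describes. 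Everything else is a matter of checking that the finitely many model operations involved preserve measurability, which the earlier facts guarantee.
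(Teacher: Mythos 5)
There is a genuine gap, and it is precisely where the whole content of the lemma lives. In a convex QBS, sums are \emph{partial} operations: $x+y$ (and more generally $\sum_n x_n$) exists only when $\sum_n \eta x_n \leq 1$ for every test $\eta$. Your proposal treats the right-hand side as if it were a finite linear combination in an ordinary vector space — ``define the right-hand side first as an element of $(X\with\W)^{\otimes mp}$'' — but the existence of that element is exactly what has to be proved. The sum ranges over $\frac{(mp)!}{(mp-m)!}$ injections, each term weighted by $\frac{1}{m^m}$, and there is no a priori reason the total should pair to at most $1$ against every symmetric test. Everything else you discuss (symmetry under reindexing by $\operatorname{Sym}(mp)$, separate linearity and measurability in each $x_k$, uniqueness from agreement on pure tensors) is routine and is essentially taken for granted by the paper; the paper's proof consists \emph{entirely} of the boundedness verification you omit.

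The missing idea is the following counting argument. For any symmetric test $\eta$ on $(X\with\W)^{\otimes_s mp}$ one has $\eta\bigl((x_1,1)^{\otimes p}\otimes\ldots\otimes(x_m,1)^{\otimes p}\bigr)\leq 1$ because the argument is a genuine element of the space. Expanding each factor as $(x_i,1)=(x_i,0)+(0,1)$ produces $2^{mp}$ terms, of which $p^m$ have exactly one factor $(x_i,0)$ for each $i\in[m]$ and all remaining factors equal to $(0,1)$; since $\eta$ is symmetric it takes the same value on all of these, and all terms pair nonnegatively, so discarding the others gives
\[
1 \;\geq\; p^m\,\frac{(mp-m)!}{(mp)!}\sum_{j\in\inj{m}{mp}}\eta\bigotimes_{k=1}^{mp}\left(\begin{array}{cl}\left(x_{j^{-1}(k)},0\right) & \text{if }k\in\operatorname{im}(j)\\ \left(0,1\right) & \text{if }k\notin\operatorname{im}(j)\end{array}\right).
\]
The elementary inequality $p^m\,\frac{(mp-m)!}{(mp)!}\geq\frac{1}{m^m}$ then shows the weighted sum against $\eta$ is at most $1$, which is what licenses the formation of the sum in the first place. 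Without this step your construction does not produce an element of the target space, so the map is not defined. You should also note that the choice of the constant $\frac{1}{m^m}$ is not incidental decoration: it is the largest uniform-in-$p$ normalisation for which this bound goes through, and it is what later yields $\M_{mp,m}\circ\NN_{m,mp}=\frac{m!}{m^m}\operatorname{id}$.
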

\begin{proof} For all $\eta \in \baseQbs{\left(\left(X \with \W\right)^{\otimes_s mp}\right)^\bot}$ and all $x_1,\allowbreak \ldots, x_m \in X$,
\[\arraycolsep=1pt \begin{array}{rcl} 1 & \geq & \eta \left((x_1, 1)^{\otimes p} \otimes \ldots \otimes (x_m, 1)^{\otimes p}\right) \\
& \geq & \operatorname{Avg}_{j \in \inj{m}{mp}} p^m \eta\bigotimes_{k=1}^{mp} \left(\begin{array}{cl}
\left(x_{j^{-1}\left(k\right)},0\right) & k\in\operatorname{im}\left(j\right)\\
\left(0,1\right) & k\notin\operatorname{im}\left(j\right)
 \end{array}\right)\end{array}\]
(where $\operatorname{Avg}$ stands for average). Indeed, out of the $2^{mp}$ terms obtained by developing the product $(x_1, 1)^{\otimes p} \otimes \ldots \otimes (x_m, 1)^{\otimes p}$ (where $(y, r)$ is to be read as $(y,0)+(0,r)$), $p^m$ are of the form: one factor $(x_i, 0)$ for each $i \in [m]$, and all the other factors equal to $(0,1)$ (and $\eta$ takes the same value on all such terms, because it is symmetric, so only their number matters). Since there are $\frac{(mp)!}{(mp-m)!}$ injections from $[m]$ to $[mp]$, and since $p^m \frac{(mp-m)!}{(mp)!} \geq \frac{1}{m^m}$, we get
\[1 \geq \sum_{j \in \inj{m}{mp}} \frac{1}{m^m} \eta \bigotimes_{k=1}^{mp} \arraycolsep=1pt \left(\begin{array}{cl}
\left(x_{j^{-1}\left(k\right)},0\right) & \text{if }k\in\operatorname{im}\left(j\right)\\
\left(0,1\right) & \text{if }k\notin\operatorname{im}\left(j\right)
\end{array}\right).\]
\end{proof}

We denote by $\NN_{m,mp}$ the restriction of this map to $X^{\otimes_s m}$: one can check that  $\M_{mp,m} \circ \NN_{m,mp} = \frac{m!}{m^m} \operatorname{id}_{X^{\otimes_s m}}$. For all $m>0$ and all $n\in\N$, we define $\NN_{m,n} : X^{\otimes_s m} \linear \left(X \with \W\right)^{\otimes_s n}$ by $\NN_{m,n} = \J_{mp,n} \circ \NN_{m,mp}$, where $p$ is smallest positive integer such that $mp \geq n$ (which makes sense because $\J_{mp,n}$ is the identity when $n=mp$). Finally we define $\NN_{0,n} : X^{\otimes_s 0} \linear \left(X \with \W\right)^{\otimes_s n}$ for all $n\in\N$ by $\NN_{0,n}(r) = r\bigotimes_{k=1}^{n}(0,1)$. One can check that for all $m,n \in \N$, 
\[\M_{n,m} \circ \NN_{m,n} = \left\{ \begin{array}{cl}
\frac{m!}{m^m} \operatorname{id}_{X^{\otimes_s m}} & \text{if } n\geq m \\
0 & \text{if } n<m
\end{array} \right.\] (with the convention that $0^0 = 1$). Intuitively, $\NN_{m,n}$ takes a homogeneous polynomial of degree $m$ and, if possible, represents it as an element of $\left(X \with \W\right)^{\otimes_s n}$, up to a factor $\frac{m!}{m^m}$.

For all $m \in \N$, we let $\rho_m = \frac{m^m(m+1)}{m!}$. For all $m,n \in \N$, we define $\K_{m,n} : \left(X \with \W\right)^{\otimes_s m} \linear \left(X \with \W\right)^{\otimes_s n}$ by
\[ \K_{m,n} = \sum_{k=0}^m \frac{1}{(m+1)}\frac{k^k\, m!}{k!\, m^m}  \NN_{k,n} \circ \M_{m,k}\]
(which is well-defined because $\frac{k^k}{k!} \leq \frac{m^m}{m!}$ for all $k \leq m$). In other words, for all $k$, $\K_{m,n}$ extracts from its argument the homogeneous part of degree $k$, turns that part into an element of $\left(X \with \W\right)^{\otimes_s n}$ up to a factor $\frac{1}{\rho_m}$, and then sums all the results. Thus, for all $m,n \in \N$, we do have:\begin{itemize}
\item $\K_{m,n} = \frac{1}{\rho_m} \J_{m,n}$ if $m \geq n$,
\item $\J_{n,m} \circ \K_{m,n} = \frac{1}{\rho_m} \operatorname{id}_{\left(X \with \W\right)^{\otimes_s m}}$ if $m \leq n$.
\end{itemize}

All this means that the following definition makes sense:

\begin{definition}
We define a convex QBS $\oc X$ as follows
\begin{itemize}
\item $\baseQbs{\oc X}=\left\{ \begin{array}{l}
(a_{n})_{n\in\N}\in\prod_{n\in\N}(X\with\W)^{\otimes_{s}n};\\
\forall m\leq n,a_{m}=\J_{n,m}(a_{n})
\end{array}\right\} $ 
\item $\baseQbs{(\oc X)^{\bot}}$ is the set of all families of maps $(f_{n})_{n\in\N}\in\prod_{n\in\N}\operatorname{Qbs}\left(\baseQbs{(X\with\W)^{\otimes n}},[0,+\infty)\right)$ such that\begin{itemize}
\item for all $n$, $\frac{f_{n}}{\rho_{n}}$ is in $(X \with \W)^{\otimes n} \multimap \W$ and is symmetric,
\item for all $m \leq n$, $f_m = \rho_m\, f_n \circ \K_{m,n}$,
\item for all $(a_n)_{n\in\N} \in \baseQbs{\oc X}$, $\sup_{n \in \N} f_n(a_n) \leq 1$,
\end{itemize}
with the subset QBS structure,
\item $(f_{n})_{n\in\N} \cdot_{\oc X} (a_{n})_{n\in\N} = \sup_{n \in \N} f_n(a_n)$.
\end{itemize}
\end{definition}

As stated in the introduction, to make this definition usable, we need to reformulate it in terms of countable sums. To this end, for all $n,k \in \N$, we define a linear map $\D_{n,k} : (X \with \W)^{\otimes_s n} \linear (X \with \W)^{\otimes_s n}$ by $\D_{n,k} = \frac{k^k}{k!}\, \NN_{k,n} \circ \M_{n,k}$. This map extracts the homogeneous part of degree $k$ without changing the type of its argument. In particular, $\operatorname{id}_{(X \with \W)^{\otimes_s n}} = \sum_{k=0}^n \D_{n,k}$.

\begin{fact}\label{fact:bang-charac-sums} For all families of maps $(f_{n})_{n\in\N}\in\prod_{n\in\N}\operatorname{Qbs}\left(\baseQbs{(X\with\W)^{\otimes n}},[0,+\infty)\right)$,
$(f_{n})_{n\in\N} \in \baseQbs{(\oc X)^{\bot}}$ if and only if
\begin{itemize}
\item for all $n$, $\frac{f_{n}}{\rho_{n}}$ is in $(X \with \W)^{\otimes n} \multimap \W$ and is symmetric,
\item for all $n$, $f_{n+1} = f_{n}\circ\J_{n+1,n}+f_{n+1}\circ\D_{n+1,n+1}$,
\item for all $(a_n)_{n\in\N} \in \baseQbs{\oc X}$, $\sum_{n\in\mathbb{N}}f_{n}\circ\D_{n,n}\left(a_{n}\right)\leq1$.
\end{itemize}
In addition, for all $(a_n)_{n\in\N} \in \baseQbs{\oc X}$, $(f_{n})_{n\in\N} \cdot_{\oc X} (a_{n})_{n\in\N} = \sum_{n\in\mathbb{N}}f_{n}\circ\D_{n,n}\left(a_{n}\right)$.
\end{fact}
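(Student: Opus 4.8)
The plan is to exploit that clause (a') in the statement is literally identical to the first clause in the definition of $\baseQbs{(\oc X)^{\bot}}$, so the whole content reduces to two things: that the compatibility clause (b) of the definition, $f_m = \rho_m\, f_n\circ\K_{m,n}$ for $m\le n$, is equivalent to the recurrence $f_{n+1}=f_n\circ\J_{n+1,n}+f_{n+1}\circ\D_{n+1,n+1}$; and that, whenever either holds, $\sup_{n}f_n(a_n)=\sum_{n\in\N}f_n\circ\D_{n,n}(a_n)$ for every $(a_n)_{n\in\N}\in\baseQbs{\oc X}$. This last identity settles at once the equivalence of the norm clauses (the original $\sup_n f_n(a_n)\le 1$ versus $\sum_n f_n\circ\D_{n,n}(a_n)\le 1$) and the closing formula for $\cdot_{\oc X}$. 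The unifying device throughout is the decomposition of each $f_n$ into homogeneous components.

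First I would record a \emph{degree decomposition}. For $k\le n$ set $h_{n,k}=\tfrac{k^k}{k!}\,f_n\circ\NN_{k,n}$, a map $X^{\otimes_s k}\to[0,+\infty)$. Since $\operatorname{id}_{(X\with\W)^{\otimes_s n}}=\sum_{k=0}^n\D_{n,k}$ and $\D_{n,k}=\tfrac{k^k}{k!}\NN_{k,n}\circ\M_{n,k}$, additivity of $f_n$ (it is a scalar multiple of a linear map) gives $f_n=\sum_{k=0}^n h_{n,k}\circ\M_{n,k}$ and $f_n\circ\D_{n,k}=h_{n,k}\circ\M_{n,k}$. Two elementary facts make these components behave like coordinates: from $\M_{n,k}\circ\NN_{k,n}=\tfrac{k!}{k^k}\operatorname{id}$ the map $\M_{n,k}$ is a split epimorphism, so the assignment $h\mapsto h\circ\M_{n,k}$ is injective; and a direct computation gives the orthogonality $\M_{n,k}\circ\D_{n,j}=\delta_{jk}\M_{n,k}$. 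Together these let me read off individual components from any identity of the shape $\sum_k(\,\cdot\,)\circ\M_{n,k}=\sum_k(\,\cdot\,)\circ\M_{n,k}$, by composing on the right with $\D_{n,j}$ and then cancelling $\M_{n,j}$.

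With this in place the two compatibility conditions collapse to the same statement. Using $\rho_m\K_{m,n}=\sum_{k=0}^m\tfrac{k^k}{k!}\NN_{k,n}\circ\M_{m,k}$, clause (b) reads $\sum_{k=0}^m h_{m,k}\circ\M_{m,k}=\sum_{k=0}^m h_{n,k}\circ\M_{m,k}$, equivalently $h_{m,k}=h_{n,k}$ for all $k\le m\le n$. For the recurrence, the routine identity $\M_{n,k}\circ\J_{n+1,n}=\M_{n+1,k}$ (checked on generators) turns (b') into $\sum_{k=0}^n h_{n+1,k}\circ\M_{n+1,k}=\sum_{k=0}^n h_{n,k}\circ\M_{n+1,k}$ after cancelling the common top term $h_{n+1,n+1}\circ\M_{n+1,n+1}$, hence into $h_{n+1,k}=h_{n,k}$ for $k\le n$. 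Both conditions therefore say exactly that $h_{n,k}$ is independent of $n$ for $n\ge k$; writing $h_k$ for this common value proves (b)$\iff$(b').

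Finally, for a compatible family $(a_n)_{n\in\N}\in\baseQbs{\oc X}$, that is $a_m=\J_{n,m}(a_n)$ for $m\le n$, the composite identity $\M_{m,k}\circ\J_{n,m}=\M_{n,k}$ gives $\M_{n,k}(a_n)=\M_{k,k}(a_k)$ for every $n\ge k$. Hence $f_n(a_n)=\sum_{k=0}^n h_k\big(\M_{k,k}(a_k)\big)=\sum_{k=0}^n f_k\circ\D_{k,k}(a_k)$, a nondecreasing sequence of partial sums of nonnegative reals, so $\sup_n f_n(a_n)=\sum_{n\in\N}f_n\circ\D_{n,n}(a_n)$; this yields the equivalence of the norm clauses and the pairing formula simultaneously. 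I expect the genuine work to lie not in this formal skeleton but in the bookkeeping identities among $\J$, $\K$, $\M$, $\NN$ and $\D$ on symmetric tensors — chiefly the orthogonality $\M_{n,k}\circ\D_{n,j}=\delta_{jk}\M_{n,k}$ and the commutations $\M\circ\J=\M$ — since every cancellation above rests on them; once they are established the equivalence is purely algebraic.
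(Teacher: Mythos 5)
The paper states this Fact without proof, so there is no official argument to compare against; judged on its own, your proof is correct and complete, and it uses exactly the machinery the paper sets up for this purpose (the projections $\D_{n,k}$, the identity $\operatorname{id}=\sum_{k=0}^{n}\D_{n,k}$, and the section property $\M_{n,k}\circ\NN_{k,n}=\tfrac{k!}{k^k}\operatorname{id}$). The reduction of both compatibility clauses to ``$h_{n,k}$ is independent of $n$'' is the right move, and the telescoping/induction linking the one-step recurrence to the general $m\le n$ condition is sound. Two small points are worth making explicit when you write this up: first, the orthogonality $\M_{n,k}\circ\NN_{j,n}=0$ for $j\neq k$ (hence $\M_{n,k}\circ\D_{n,j}=\delta_{jk}\M_{n,k}$) rests on checking on pure tensors that every term of $\NN_{j,n}(x_1\otimes\ldots\otimes x_j)$ has exactly $j$ factors of the form $(x,0)$ and $n-j$ factors $(0,1)$, so that $\M_{n,k}$ kills it unless $j=k$ — this is legitimate because linear maps out of a tensor product are determined by their values on pure tensors, but it is the one computation that genuinely has to be done. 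Second, cancelling the common term $f_{n+1}\circ\D_{n+1,n+1}$ from both sides of the recurrence is valid only because the $f_n$ take values in $[0,+\infty)$ rather than $[0,+\infty]$; alternatively you can avoid the subtraction entirely by composing the recurrence on the right with $\D_{n+1,j}$ for $j\le n$ and using the orthogonality again. With those details in place, the monotone-partial-sums argument giving $\sup_n f_n(a_n)=\sum_n f_n\circ\D_{n,n}(a_n)$ settles both the norm clause and the pairing formula, as you say.
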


For all $n \in \N$, we denote by $\pi_n$ the canonical projection $\oc X \linear  (X \with \W)^{\otimes_s n}$. These projections also have sections up to a factor $\frac{1}{\rho_n}$:

\begin{notation} Let $n \in \N$. For all $a_n \in (X \with \W)^{\otimes_s n}$, we let $\theta_n(a_n) = \left(\K_{n,m}a_n\right)_{m\in\N}$. This defines a linear map $\theta_n : (X \with \W)^{\otimes_s n} \linear \oc X$ that satisfies the equation $\pi_n \circ \theta_n = \frac{1}{\rho_n} \operatorname{id}_{(X \with \W)^{\otimes_s n}}$.
\end{notation}

With that, it is clear that $\oc X$ is the limit of the diagram
\[\left(X {\with} \W\right)^{\otimes_s 0} \leftarrow \ldots \left(X {\with} \W\right)^{\otimes_s m} \overset{\J{m+1, m}}{\leftarrow} \left(X {\with} \W\right)^{\otimes_s m+1} \ldots\]
and that this limit commutes with the tensor product, namely:

\begin{theorem}\label{thm:bang-limit} 
Let Z,Y be convex QBSs, and let $\left(\varphi_{n}\right)_{n\in\mathbb{N}}\in\prod_{n\in\mathbb{N}}\left(Z\multimap Y\otimes\left(X\with\W\right)^{\otimes_{s}n}\right)$ be such that for all $m\leq n$, $\left(\operatorname{id}_Y\otimes\mathcal{J}_{n,m}\right)\circ\varphi_{n}=\varphi_{m}$. Then the map
\[\varphi_{\infty}:\left\{ \arraycolsep=1pt \begin{array}{ccc}
\baseQbs{Z} & \to & \baseQbs{Y\otimes\oc X}\\
z & \mapsto & \sup_{n\in\mathbb{N}}\rho_{n}\left(\operatorname{id}_Y\otimes\theta_{n}\right)\circ\varphi_{n}(z)\\
 &  & =\sum_{n\in\mathbb{N}}\rho_{n}\left(\operatorname{id}_Y\otimes(\theta_{n}\circ\mathcal{D}_{n,n})\right)\circ\varphi_{n}(z)
\end{array}\right.\]
\begin{itemize}
\item is well-defined,
\item is a linear map from $Z$ to $Y\otimes\oc X$,
\item is the only map from $\baseQbs{Z}$ to $\baseQbs{Y\otimes\oc X}$ such that for all $n\in\N$, $\left(\operatorname{id}_Y\otimes\pi_{n}\right)\circ\varphi_{\infty}=\varphi_{n}$.
\end{itemize}
\end{theorem}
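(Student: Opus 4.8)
The plan is to reduce all three assertions to a single identity relating the auxiliary maps $\K$, $\J$ and $\D$: for all $N \geq 1$ and all $l \in \N$, when applied to elements of $(X\with\W)^{\otimes_s N}$,
\[\rho_N\,\K_{N,l} = \rho_{N-1}\,\K_{N-1,l}\circ\J_{N,N-1} + \rho_N\,\K_{N,l}\circ\D_{N,N}.\]
Since $\pi_l\circ\theta_n = \K_{n,l}$, this lifts to the identity $\rho_N\,\theta_N = \rho_{N-1}\,\theta_{N-1}\circ\J_{N,N-1} + \rho_N\,\theta_N\circ\D_{N,N}$ of linear maps into $\oc X$. I expect this identity to be the main obstacle. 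I would prove it from the explicit expansions of $\K$ and $\D$ through $\NN$ and $\M$, the orthogonality relations $\M_{N,k}\circ\D_{N,j} = \delta_{jk}\,\M_{N,k}$ (which follow from $\M_{n,m}\circ\NN_{m,n} = \frac{m!}{m^m}\operatorname{id}$), and the homogeneity-preserving behaviour of $\J_{N,N-1}$ on the images of the maps $\NN_{k,N}$. Everything else is then formal: all the maps in sight are linear, so $\varphi_\infty$ will be measurable together with its dual as soon as it is exhibited as a countable sum of linear maps, and the two displayed formulas for $\varphi_\infty$ are respectively the least upper bound and the countable sum of one and the same increasing sequence, hence agree once convergence is established (a countable sum being the least upper bound of its partial sums).

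For well-definedness and linearity, I would put $y_N = \rho_N\,(\operatorname{id}_Y\otimes\theta_N)\circ\varphi_N$ and $d_n = \rho_n\,(\operatorname{id}_Y\otimes(\theta_n\circ\D_{n,n}))\circ\varphi_n$; these are linear maps from $Z$ to $Y\otimes\oc X$ (the factor $\rho_n$ being compensated by the fact that $\theta_n$ is a section of $\pi_n$ only up to $\frac{1}{\rho_n}$). Composing the lifted identity with $\varphi_N$ and using the compatibility $(\operatorname{id}_Y\otimes\J_{N,N-1})\circ\varphi_N = \varphi_{N-1}$ gives $y_N = y_{N-1} + d_N$, whence $y_N = \sum_{n=0}^N d_n$. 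Since every element of a convex QBS is $\geq 0$, each $d_n$ is $\geq 0$ and the sequence $(y_N)$ is increasing. For any linear test $\zeta$ on $\QbsConv(Z; Y\otimes\oc X)$, additivity of $\operatorname{Test}$ in its map argument gives $\sum_{n=0}^N \zeta\,d_n = \operatorname{Test}(y_N, \zeta) \leq 1$, because $y_N$ is a multilinear map and $\zeta$ a test; hence $\sum_n \zeta\,d_n \leq 1$, and the countable-sum axiom for the convex QBS of linear maps produces a linear map $\varphi_\infty = \sum_n d_n$. This settles well-definedness, linearity, and the agreement of the two formulas.

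For the projection property, I would apply $(\operatorname{id}_Y\otimes\pi_n)$ to $y_N$ with $N \geq n$: since $\pi_n\circ\theta_N = \K_{N,n} = \frac{1}{\rho_N}\J_{N,n}$ and $(\operatorname{id}_Y\otimes\J_{N,n})\circ\varphi_N = \varphi_n$, one finds $(\operatorname{id}_Y\otimes\pi_n)\circ y_N = \varphi_n$ for every $N \geq n$. As $\pi_n$ is linear it commutes with countable sums, so the partial sums of $(\operatorname{id}_Y\otimes\pi_n)\circ\varphi_\infty$ are eventually constant equal to $\varphi_n$, giving $(\operatorname{id}_Y\otimes\pi_n)\circ\varphi_\infty = \varphi_n$.

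For uniqueness, I would establish the reconstruction identity $\operatorname{id}_{\oc X} = \sum_n \rho_n\,\theta_n\circ\D_{n,n}\circ\pi_n$, which is well-defined by the previous paragraph (it is the present construction applied to $Z = \oc X$, $Y = \W$ and $\varphi_n = \pi_n$, whose compatibility is $\J_{n,m}\circ\pi_n = \pi_m$). Fix $a = (a_m)_m \in \oc X$ and $l \in \N$; applying the lifted identity to $a_n$ and using $\J_{n,n-1}(a_n) = a_{n-1}$ gives $\rho_n\,\K_{n,l}(a_n) = \rho_{n-1}\,\K_{n-1,l}(a_{n-1}) + \rho_n\,\K_{n,l}\circ\D_{n,n}(a_n)$, so by induction $\sum_{n=0}^N \rho_n\,\K_{n,l}\circ\D_{n,n}(a_n) = \rho_N\,\K_{N,l}(a_N)$; for $N \geq l$ this equals $\J_{N,l}(a_N) = a_l$, whence the $l$-th component of the reconstruction applied to $a$ is $a_l$. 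Tensoring with $Y$, and using that $\operatorname{id}_Y\otimes(-)$ preserves countable sums, I obtain $\operatorname{id}_{Y\otimes\oc X} = \sum_n \rho_n\,(\operatorname{id}_Y\otimes(\theta_n\circ\D_{n,n}\circ\pi_n))$. This shows that every $w \in Y\otimes\oc X$ is determined by the family $\left((\operatorname{id}_Y\otimes\pi_n)(w)\right)_n$; consequently, any map $\psi : \baseQbs{Z} \to \baseQbs{Y\otimes\oc X}$ with $(\operatorname{id}_Y\otimes\pi_n)\circ\psi = \varphi_n$ for all $n$ must coincide with $\varphi_\infty$, which concludes the argument.
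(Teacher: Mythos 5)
The paper itself offers no proof of this theorem --- it is asserted as ``clear'' once the maps $\K_{m,n}$, $\theta_n$ and $\D_{n,k}$ are in place --- so I am judging your reconstruction on its own merits. Your skeleton is the right one: the identity $\rho_N\K_{N,l}=\rho_{N-1}\K_{N-1,l}\circ\J_{N,N-1}+\rho_N\K_{N,l}\circ\D_{N,N}$ is correct (it reduces, via $\rho_m\K_{m,l}=\sum_{k\le m}\frac{k^k}{k!}\NN_{k,l}\circ\M_{m,k}$, to $\M_{N-1,k}\circ\J_{N,N-1}=\M_{N,k}$ and $\M_{N,k}\circ\NN_{N,N}=\delta_{kN}\frac{N!}{N^N}\operatorname{id}$; note the off-diagonal vanishing $\M_{N,k}\circ\NN_{j,N}=0$ for $k\neq j$ is a separate small computation, not literally a consequence of the stated relation $\M_{n,m}\circ\NN_{m,n}=\frac{m!}{m^m}\operatorname{id}$). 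The telescoping $y_N=y_{N-1}+d_N$, the projection computation, and the uniqueness via the reconstruction identity $\operatorname{id}_{\oc X}=\sum_n\rho_n\theta_n\circ\D_{n,n}\circ\pi_n$ are all sound \emph{modulo well-definedness}.

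The genuine gap is a circularity in the well-definedness step. You derive convergence of $\sum_n d_n$ from the bound $\sum_{n\le N}\zeta\,d_n=\zeta\,y_N\le 1$, justified by ``$y_N$ is a multilinear map and $\zeta$ a test''. But $y_N=\rho_N(\operatorname{id}_Y\otimes\theta_N)\circ\varphi_N$ is \emph{not} obtained by composing linear maps: $\theta_N$ is linear, but $\rho_N\theta_N$ is not, because scalar multiplication by $\rho_N>1$ is only partially defined --- a test $(f_l)_l\in\baseQbs{(\oc X)^\bot}$ only satisfies $\norm{f_N}\le\rho_N$ on $(X\with\W)^{\otimes_s N}$, so $\sup_l f_l(\rho_N\K_{N,l}(a))$ can exceed $1$ for a general $a$. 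Hence the assertion that $y_N(z)$ lies in $\baseQbs{Y\otimes\oc X}$ (equivalently that every test of it is $\le 1$) is precisely the nontrivial content of the theorem, and it cannot be invoked to establish the bound that proves it. Your parenthetical remark that the factor $\rho_n$ is ``compensated'' by $\theta_n$ being a section only up to $\frac{1}{\rho_n}$ does not close this loop: the compensation must come from the compatibility of the \emph{whole} family $(\varphi_n)_n$ together with the defining conditions on tests of $\oc X$ (the relations $f_m=\rho_m f_n\circ\K_{m,n}$ and the normalisation $\sup_n f_n(a_n)\le 1$ on compatible families, as in Fact~\ref{fact:bang-charac-sums}), and your argument never shows how the hypothesis $(\operatorname{id}_Y\otimes\J_{n,m})\circ\varphi_n=\varphi_m$ enters the boundedness estimate. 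You need to prove directly, for every test $\zeta$ on $Z\multimap Y\otimes\oc X$ (or pointwise, for every test on $Y\otimes\oc X$ applied to the partial sums $\sum_{n\le N}d_n(z)$), that the value is at most $1$, using the compatibility; only then does the countable-sum axiom apply and the rest of your argument go through.
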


As in probabilistic coherence spaces, a random value of $\oc X$ represents a generator of random values of $X$ whose distribution is itself random.

\subsection{The free commutative comonoid structure on $\oc X$}

Because of Fact \ref{fact:free-pointed} and Theorem \ref{thm:bang-limit}, we know that for all $X$, $\oc X$ can be equipped with a structure of commutative comonoid freely generated by $X$ \cite[definition in the introduction]{bang09}. In this subsection, we simply spell out this structure and a few constructions that come from it. This will come in handy when defining and proving statements about analytic maps.

First, note that Theorem \ref{thm:bang-limit} turns ``$!$'' into a functor from $\QbsConv$ to $\QbsConv$, with $(\oc f)(a_n)_{n\in\N} = ((f \otimes \operatorname{id}_\W)^{\otimes_s n}(a_n))_{n\in\N}$ for all $f : X \linear Y$ and all $(a_n)_{n\in\N} \in \oc X$.

The following notation will be useful to define linear maps from spaces of the form $\oc X$:

\begin{notation} Let $X,Y$ be convex QBSs, $n\in\N$ and $f : (X \with \W)^n \linear Y$. We denote by
\[\left(\bigotimes_{k=1}^m(x_{m,k},r_{m,k})\right)_{m\in\N} \mapsto f\left((x_{n,1},r_{n,1}),\ldots,(x_{n,n},r_{n,n})\right)\]
the linear map $f \circ \pi_n \in \oc X \linear Y$.
\end{notation}

\begin{definition} For all convex QBSs $X$, using the above notation, we define a linear map $\operatorname{weak}_X : \oc X \linear \W$ as
\[\left(\bigotimes_{k=1}^m(x_{m,k},r_{m,k})\right)_{m\in\N} \mapsto 1,\]
a linear map $\operatorname{cont}_X : \oc X \linear \oc X \otimes \oc X$ as
\[\begin{array}{rl}
& \left(\bigotimes_{k=1}^m(x_{m,k},r_{m,k})\right)_{m\in\N} \\
\mapsto & \sup_{p,q\in\mathbb{N}}\left(\arraycolsep=1pt\begin{array}{c}\left(\rho_{p}\theta_{p}\bigotimes_{k=1}^{p}\left(x_{p+q,k},r_{p+q,k}\right)\right)\\\otimes\left(\rho_{q}\theta_{q}\bigotimes_{k=p+1}^{p+q}\left(x_{p+q,k},r_{p+q,k}\right)\right)\end{array}\right)\\
= & \sum_{p,q\in\mathbb{N}}\left(\arraycolsep=1pt\begin{array}{c}\left(\rho_{p}\theta_{p}\mathcal{D}_{p,p}\bigotimes_{k=1}^{p}\left(x_{p+q,k},r_{p+q,k}\right)\right)\\{\otimes}\left(\rho_{q}\theta_{q}\mathcal{D}_{q,q}\bigotimes_{k=p+1}^{p+q}\left(x_{p+q,k},r_{p+q,k}\right)\right)\end{array}\right)
\end{array}\]
(where composition and application are noted multiplicatively), and a linear map $\operatorname{der}_X : \oc X \linear X$ as
\[\left(\bigotimes_{k=1}^{m}\left(x_{m,k},r_{m,k}\right)\right)_{m\in\mathbb{N}} \mapsto x_{1,1}.\] 
\end{definition}

As a consequence of Theorem \ref{thm:bang-limit}, we get:
 \begin{proposition}\label{prop:free-comonoid}
For all convex QBSs $X$,  $\left(\oc X,\operatorname{weak}_{X},\operatorname{cont}_{X}\right)$ is a commutative comonoid freely generated by $\left(X,\operatorname{der}_{X}\right)$.
 \end{proposition}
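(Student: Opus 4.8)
The plan is to use the construction of Melliès, Tabareau and Tasson \cite{bang09} as a black box and then merely recognise the three maps written out above as the canonical structure maps it produces. The two hypotheses that construction requires have both been established: Fact \ref{fact:free-pointed} exhibits $(X\with\W, (x,r)\mapsto r, (x,r)\mapsto x)$ as the free pointed object over $X$, and Theorem \ref{thm:bang-limit} says that $\oc X$ is the limit of the diagram $\cdots\overset{\J_{m+1,m}}{\leftarrow}(X\with\W)^{\otimes_s m+1}\cdots$ and that this limit is preserved by $\placeholder\otimes Y$ for every $Y$. Hence \cite{bang09} already guarantees that $\oc X$ carries a canonical structure of commutative comonoid freely generated by $(X,\operatorname{der})$; the only remaining task is to identify $\operatorname{weak}_X$, $\operatorname{cont}_X$ and $\operatorname{der}_X$ with the counit, comultiplication and generating morphism of that canonical structure. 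Once this identification is made, every comonoid axiom and the universal property follow at once from \cite{bang09}.

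The tool for the identification is the uniqueness clause of Theorem \ref{thm:bang-limit}. Taking $Y=\W$ there, a linear map into $\oc X$ is completely determined by its components $\pi_n\circ(\placeholder)$; and since the limit commutes with $\otimes$, applying the theorem a second time (with $Y=\oc X$) shows that a linear map into $\oc X\otimes\oc X$ is completely determined by the family $(\pi_p\otimes\pi_q)\circ(\placeholder)$ for $p,q\in\N$. Two of the three identifications are then immediate. First, $\operatorname{weak}_X$ is by definition $(\placeholder)\circ\pi_0$, the empty tensor being the scalar $1\in\baseQbs\W$; that is, it is the projection $\pi_0:\oc X\linear(X\with\W)^{\otimes_s 0}=\W$, which is exactly the counit of \cite{bang09}. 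Second, $\operatorname{der}_X=\operatorname{proj}\circ\pi_1$, where $\operatorname{proj}=((x,r)\mapsto x):X\with\W\linear X$ is the second structure map of the free pointed object; this is precisely the generating morphism of \cite{bang09}.

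The substantive step is the comultiplication. Writing $\operatorname{split}_{p,q}:(X\with\W)^{\otimes_s p+q}\linear(X\with\W)^{\otimes_s p}\otimes(X\with\W)^{\otimes_s q}$ for the canonical symmetrised deconcatenation map through which \cite{bang09} defines the comultiplication on the limit, I would verify that $(\pi_p\otimes\pi_q)\circ\operatorname{cont}_X=\operatorname{split}_{p,q}\circ\pi_{p+q}$ for all $p,q$. Substituting the summation form of $\operatorname{cont}_X$ and using $\pi_n\circ\theta_n=\frac{1}{\rho_n}\operatorname{id}_{(X\with\W)^{\otimes_s n}}$, the prefactors $\rho_p,\rho_q$ appearing in $\operatorname{cont}_X$ cancel exactly the $\frac{1}{\rho_p},\frac{1}{\rho_q}$ coming from the sections $\theta_p,\theta_q$, so that only the degree extractors $\D_{p,p},\D_{q,q}$ and the tensoring of the two blocks survive; comparing with $\operatorname{id}=\sum_k\D_{n,k}$ and the combinatorial identity $\M_{mp,m}\circ\NN_{m,mp}=\frac{m!}{m^m}\operatorname{id}$ underlying $\D$ then shows that this component is indeed $\operatorname{split}_{p,q}\circ\pi_{p+q}$. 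By the uniqueness clause above, $\operatorname{cont}_X$ therefore coincides with the canonical comultiplication, and cocommutativity is inherited from the symmetry of $\operatorname{split}_{p,q}$ in $(p,q)$.

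I expect this last verification to be the main obstacle, precisely because of the symmetric-tensor combinatorics: one must confirm that the normalising constants $\frac{1}{m^m}$, $\frac{m!}{m^m}$ and $\rho_m=\frac{m^m(m+1)}{m!}$ built into $\NN$, $\M$ and $\K$ conspire so that each component $(\pi_p\otimes\pi_q)\circ\operatorname{cont}_X$ is the unnormalised deconcatenation, with no leftover scalar. The equality of the ``$\sup$'' and ``$\sum$'' forms of $\operatorname{cont}_X$, which Theorem \ref{thm:bang-limit} supplies for maps of this shape, should be used first to reduce to the $\D_{n,n}$-truncated computation, where all the sums involved are finite and the bookkeeping becomes tractable.
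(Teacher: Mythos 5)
Your proposal is correct and follows essentially the same route as the paper, which likewise derives the proposition from the Melliès--Tabareau--Tasson construction via Fact \ref{fact:free-pointed} and Theorem \ref{thm:bang-limit} (the paper gives no further detail beyond ``as a consequence of Theorem \ref{thm:bang-limit}''). Your explicit identification of $\operatorname{weak}_X$, $\operatorname{der}_X$ and $\operatorname{cont}_X$ with the canonical counit, generator and comultiplication --- in particular the check that $(\pi_p\otimes\pi_q)\circ\operatorname{cont}_X=\pi_{p+q}$ up to deconcatenation, with the $\rho_p,\rho_q$ factors cancelling against the sections $\theta_p,\theta_q$ --- is exactly the verification the paper leaves implicit in its remark that $\left(\pi_{m_{1}}\otimes\ldots\otimes\pi_{m_{n}}\right)\circ\operatorname{cont}_{X,n}=\pi_{m_{1}+\ldots+m_{n}}$.
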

 
This is known \cite{lafont88} to imply the following results:
\begin{itemize}
\item Let $\operatorname{dig}_X : \oc X \linear \oc{\oc X}$ be the unique morphism of comonoids such that $\operatorname{der}_{\oc X} \circ \operatorname{dig}_X = \operatorname{id}_{\oc X}$. Then $(!, \operatorname{der}, \operatorname{dig})$ is a comonad. We denote by $\ll$ its Kleisli composition, \textit{i.e.} for all $f\in\oc X\multimap Y$ and all $g\in\oc Y \linear Z$, $g\ll f=g\circ\oc f\circ\operatorname{dig}_{X}$.
\item Let $\operatorname{store}_{X_1,\ldots,X_n} : \oc X_1 \otimes \ldots \otimes \oc X_n \linear \oc {(X_1 \with \ldots \with X_n)}$ be the unique morphism of comonoids such that $\operatorname{der}_{X_1 \with \ldots \with X_n} \circ \operatorname{store}_{X_1,\ldots,X_n} = a_1 \otimes \ldots \otimes a_n \mapsto (\operatorname{der}_{X_1}(a_1), \ldots, \operatorname{der}_{X_n}(a_n))$. Then $\operatorname{store}_{X_1,\ldots,X_n}$ is an isomorphism.
\item The Kleisli category of the comonad $!$ is cartesian closed.
\end{itemize}
 
For all convex QBSs $X$ and all $n\in\mathbb{N}$, we denote by $\operatorname{cont}_{X,n}$ the canonical linear map from $\oc X$ to $\left(\oc X\right)^{\otimes_{s}n}$. In particular, $\operatorname{cont}_{X,2} = \operatorname{cont}_{X}$, $\operatorname{cont}_{X,1} = \operatorname{id}_{\oc X}$, and $\operatorname{cont}_{X,0} = \operatorname{weak}_{X}$.
For all $n\in\mathbb{N}$ and all $m_{1},\ldots,m_{n}\in\mathbb{N}$, one can check that
\begin{itemize}
\item $\left(\pi_{m_{1}}\otimes\ldots\otimes\pi_{m_{n}}\right)\circ\operatorname{cont}_{X,n}=\pi_{m_{1}+\ldots+m_{n}}$,
\item $\left(\operatorname{cont}_{X,m_{1}}\otimes\ldots\otimes\operatorname{cont}_{X,m_{n}}\right)\circ\operatorname{cont}_{X,n}=\operatorname{cont}_{X,m_{1}+\ldots+m_{n}}$ (which is just an other way of saying that $\oc X$ is a comonoid).
\end{itemize}

\section{Analytic maps}

In coherence spaces \cite{blindspot11}, seen as a model of computation, computable functions are represented by \emph{stable maps}. Each coherence space $X$ comes with a universal stable map from $X$ to $\oc X$, in the sense that a map from $X$ to $Y$ is stable if and only if it can be obtained by composing this universal map with a (necessarily unique) linear map from $\oc X$ to $Y$. We use this idea to define analytic maps between convex QBSs, and we prove that convex QBSs and analytic maps form a cartesian closed category.

\begin{definition} Let $X$ be a convex QBS. We define a measurable map $\nabla_X : \baseQbs{X} \to \baseQbs{\oc X}$ by $\nabla_X(x) = \left((x,1)^{\otimes n}\right)_{n\in\N}$.
\end{definition}

This map is injective (because $\operatorname{der}_X \circ \nabla_X = \operatorname{id}_X$), and monotone. A useful remark is that its image is exactly the set of ``co-idempotent'' elements of the comonoid $\oc X$ (minus $0_{\oc X}$):

\begin{fact}\label{fact:nabla-idempotents} Let $X$ be a convex QBS. 
For all $a\in\oc X$, $\operatorname{cont}_{X}(a)=a\otimes a$ if and only if $a=0$ or $a=\nabla_X(\operatorname{der}_X(a))$.
\end{fact}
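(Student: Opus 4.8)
The plan is to prove the two implications separately, doing the ``if'' direction first, because it also yields the fact that $\nabla_X(x)$ is co-idempotent (which the converse will reuse). For the ``if'' direction, the case $a=0$ is immediate from linearity of $\operatorname{cont}_X$ and bilinearity of $\otimes$. The substantial case is $a=\nabla_X(x)$ with $x=\operatorname{der}_X(a)$, which I would settle by a direct computation from the explicit formula for $\operatorname{cont}_X$. Since $\nabla_X(x)_{p+q}=(x,1)^{\otimes(p+q)}$ is a pure power, every partial tensor in the formula becomes a power of $(x,1)$, so that
\[\operatorname{cont}_X(\nabla_X(x)) = \sum_{p,q\in\N}\left(\rho_p\theta_p\D_{p,p}(x,1)^{\otimes p}\right)\otimes\left(\rho_q\theta_q\D_{q,q}(x,1)^{\otimes q}\right).\]
The two computational facts I need are $\D_{p,p}(x,1)^{\otimes p}=(x,0)^{\otimes p}$ (from $\M_{p,p}(x,1)^{\otimes p}=x^{\otimes p}$ together with $\NN_{p,p}(x^{\otimes p})=\tfrac{p!}{p^p}(x,0)^{\otimes p}$) and, writing $c_p=\rho_p\theta_p(x,0)^{\otimes p}$, the identity $\sum_p c_p=\nabla_X(x)$.

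To prove $\sum_p c_p=\nabla_X(x)$ I would check components. I expect $\pi_n(c_p)=\D_{n,p}(x,1)^{\otimes n}$ for all $p,n$: when $p\le n$ only the degree-$p$ summand of $\K_{p,n}$ survives (the lower ones are killed by $\M_{p,k}(x,0)^{\otimes p}=0$ for $k<p$), and the factor $\rho_p$ exactly cancels the $\tfrac{1}{p+1}$ to leave $\tfrac{p^p}{p!}\NN_{p,n}(x^{\otimes p})$; when $p>n$ one has $\rho_p\K_{p,n}=\J_{p,n}$, and $\J_{p,n}(x,0)^{\otimes p}=0$ because the projection drops at least one factor of weight $0$, matching $\D_{n,p}=0$. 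Hence $\pi_n(\sum_p c_p)=\sum_{k=0}^n\D_{n,k}(x,1)^{\otimes n}=(x,1)^{\otimes n}=\pi_n(\nabla_X(x))$ using $\operatorname{id}=\sum_k\D_{n,k}$, and since the $\pi_n$ are jointly injective this gives $\sum_p c_p=\nabla_X(x)$. Because multilinear maps commute with countable sums in each argument, $\sum_{p,q}c_p\otimes c_q=(\sum_p c_p)\otimes(\sum_q c_q)=\nabla_X(x)\otimes\nabla_X(x)$. In particular $\nabla_X(x)$ is co-idempotent, and moreover $\operatorname{weak}_X(\nabla_X(x))=\nabla_X(x)_0=1$.

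For the converse, suppose $\operatorname{cont}_X(a)=a\otimes a$. The key observation is that $\oc X$ is a comonoid (Proposition~\ref{prop:free-comonoid}), so the counit law $(\operatorname{weak}_X\otimes\operatorname{id}_{\oc X})\circ\operatorname{cont}_X=\operatorname{id}_{\oc X}$ holds; applied to $a$ it gives $a=(\operatorname{weak}_X\otimes\operatorname{id})(a\otimes a)=\operatorname{weak}_X(a)\,a$. Writing $w=\operatorname{weak}_X(a)\in[0,1]$ and applying the linear map $\operatorname{weak}_X$ to this equality yields $w=w^2$, so $w\in\{0,1\}$. If $w=0$ then $a=0\cdot a=0$, which is one of the two allowed conclusions.

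If $w=1$, I would conclude via the universal property of $\oc X$ as the free commutative comonoid over $(X,\operatorname{der}_X)$. Both $r\mapsto r\,a$ and $r\mapsto r\,\nabla_X(\operatorname{der}_X a)$ are comonoid morphisms from $\W$ (with its canonical comonoid structure) to $\oc X$: for the first this uses $\operatorname{cont}_X(a)=a\otimes a$ together with $w=1$, and for the second it uses that $\nabla_X(\operatorname{der}_X a)$ is co-idempotent with weight $1$, as established above. Since $\operatorname{der}_X$ composed with either morphism equals the same linear map $r\mapsto r\,\operatorname{der}_X(a)$ (using $\operatorname{der}_X\circ\nabla_X=\operatorname{id}_X$), the uniqueness clause of the universal property forces the two morphisms to agree, and evaluating at $r=1$ gives $a=\nabla_X(\operatorname{der}_X a)$. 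I expect the main obstacle to be purely computational, namely the bookkeeping identity $\sum_p\rho_p\theta_p(x,0)^{\otimes p}=\nabla_X(x)$, which requires tracking the interaction of $\K_{p,n}$, $\J_{p,n}$, $\D_{n,k}$, $\NN$ and $\M$ on the homogeneous element $(x,0)^{\otimes p}$ and dealing with the degenerate cases $p=0$ and $p>n$; the converse is then essentially formal.
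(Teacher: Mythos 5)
Your proof is correct, but the converse direction takes a genuinely different route from the paper's. The paper argues component-wise: from $(\pi_{m_1}\otimes\pi_{m_2})\circ\operatorname{cont}_X=\pi_{m_1+m_2}$ it gets $\pi_0(a)=\pi_0(a)^2$ and, by induction, $\pi_n(a)=(\operatorname{der}_X(a),\pi_0(a))^{\otimes n}$ for $n>0$; the case $\pi_0(a)=1$ then gives $a=\nabla_X(\operatorname{der}_X(a))$ directly, and the case $\pi_0(a)=0$ gives $\pi_1(a)=\J_{2,1}\left((\operatorname{der}_X(a),0)^{\otimes 2}\right)=0$, hence $a=0$. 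You instead extract $\operatorname{weak}_X(a)\in\{0,1\}$ from the counit law and, when the weight is $1$, invoke the uniqueness clause of the free-commutative-comonoid universal property (Proposition \ref{prop:free-comonoid}) applied to the two comonoid morphisms $r\mapsto r\,a$ and $r\mapsto r\,\nabla_X(\operatorname{der}_X(a))$ out of $\W$. This is more conceptual and avoids the induction on components, at the price of leaning on the full strength of Proposition \ref{prop:free-comonoid} where the paper only needs the elementary identities for $\pi_n\circ\operatorname{cont}_X$; both are legitimate since that proposition precedes the present fact. Your forward direction is a correct but much heavier version of the paper's ``one can check''. One small point to tighten there: before writing $\pi_n\left(\sum_p c_p\right)=\sum_p\pi_n(c_p)$ you must justify that the countable sum $\sum_p c_p$ exists in $\oc X$, i.e.\ that $\sum_p\xi\,c_p\leq 1$ for every $\xi\in(\oc X)^\bot$; the cleanest fix is to observe that $c_p=\rho_p\,\theta_p\,\D_{p,p}\,\pi_p(\nabla_X(x))$ and apply Theorem \ref{thm:bang-limit} with $Y=\W$ and $\varphi_n=\pi_n$, which yields $\sum_p c_p=\nabla_X(x)$ outright and makes the component-by-component computation unnecessary.
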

\begin{proof} One can check from the definition of $\operatorname{cont}$ that $\operatorname{cont}(0)= 0 \otimes 0$ and that $\operatorname{cont}(\nabla(x)) = \nabla(x) \otimes \nabla(x)$ for all $x \in X$.

Assume $\operatorname{cont}(a) = a \otimes a$. Then $\pi_0(a) = (\pi_0 \otimes \pi_0)(a) = \pi_0(a)^2$, therefore $\pi_0(a) = 0$ or $\pi_0(a) = 1$. In addition, for all $n$, $\pi_{n+1}(a) = (\pi_1 \otimes \pi_n) \circ \operatorname{cont} (a) = (\pi_1 \otimes \pi_n)(a\otimes a)$. One can check that $\pi_1(a) = (\operatorname{der}(a), \pi_0(a))$, so by induction, for all $n>0$, $\pi_n(a) = \pi_1(a)^{\otimes n} =(\operatorname{der}(a), \pi_0(a))^{\otimes n}$. If $\pi_0(a) = 1$, that means $a = \nabla(\operatorname{der}(a))$. If $\pi_0(a) = 0$, that means $\pi_{1}(a)=\J_{2,1}(\pi_{2}(a))=\J_{2,1}((\operatorname{der}(a),0)^{\otimes2})=0$, so $a = 0$.
\end{proof}

The map $\nabla_X$ duplicates its input, including side-effects (\textit{i.e.} probabilistic choices and non-termination), and as such it is not linear (unless $X$ is $\zero$). However, as one would expect, in the case of data types, values (not side effects) can be duplicated linearly. Namely, for all QBSs $A$, there exists a unique linear map $\operatorname{copy}_A : \emb A \linear \oc {\emb A}$ such that for all $x \in A$, $\operatorname{copy}_A(\emb x) = \nabla_{\emb A}(\emb x)$ (defined by $\operatorname{copy}_A(\mu) = \int_{x\in A} \nabla_{\emb A}(\emb x)\,\mu(\dif x)$).

\begin{definition} Let $X, Y$ be convex QBSs. An \emph{analytic map} from $X$ to $Y$ is a (necessarily measurable and monotone) map $f : \baseQbs{X} \to \baseQbs{Y}$ such that there exists a linear map $f_! : \oc X \linear Y$ such that $f = f_! \circ \nabla_X$.
\end{definition}

The map $\nabla_X$ is analytic by definition, and it is universal in the following sense:

\begin{theorem}\label{thm:analytic-unique}
 Let $X, Y$ be convex QBSs and $f$ an analytic map from $X$ to $Y$. There exists a unique linear map $f_! : \oc X \linear Y$ such that $f = f_! \circ \nabla_X$.
\end{theorem}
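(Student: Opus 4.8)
The existence of $f_!$ holds by the very definition of analytic map, so the entire content is uniqueness. The plan is to show that if two linear maps $g_1, g_2 : \oc X \linear Y$ satisfy $g_1 \circ \nabla_X = g_2 \circ \nabla_X$, then $g_1 = g_2$. By the separation axiom of the convex QBS $Y$, it suffices to prove that $g_1^\bot(\eta) = g_2^\bot(\eta)$ in $\baseQbs{(\oc X)^\bot}$ for every $\eta \in \baseQbs{Y^\bot}$; writing each test $\phi_i = \eta \circ g_i$ as a family $(f^i_n)_{n\in\N}$ as in the definition of $\oc X$, this amounts to showing $f^1_n = f^2_n$ for all $n$. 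The only hypothesis we may use is that $\phi_1$ and $\phi_2$ agree on the image of $\nabla_X$, since $\phi_i(\nabla_X(x)) = \eta(g_i(\nabla_X(x))) = \eta(f(x))$ is independent of $i$.

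First I would turn this pointwise agreement into agreement of homogeneous components. Using Fact \ref{fact:bang-charac-sums} together with $\pi_n(\nabla_X(x)) = (x,1)^{\otimes n}$, we get $\phi_i(\nabla_X(x)) = \sum_n f^i_n(\D_{n,n}((x,1)^{\otimes n}))$, a series of nonnegative reals bounded by $1$. Replacing $x$ by $tx$ for $t \in [0,1]$ (a scalar multiple that is always defined since $t \le 1$) and using $\M_{n,n}((tx,1)^{\otimes n}) = t^n x^{\otimes n}$, hence $\D_{n,n}((tx,1)^{\otimes n}) = t^n \D_{n,n}((x,1)^{\otimes n})$, this becomes a genuine power series $\sum_n t^n c^i_n(x)$, with $c^i_n(x) = f^i_n(\D_{n,n}((x,1)^{\otimes n})) \ge 0$. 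Since $\sum_n t^n c^1_n(x) = \sum_n t^n c^2_n(x)$ for all $t \in [0,1]$, uniqueness of power-series coefficients gives $c^1_n(x) = c^2_n(x)$ for every $n$ and every $x \in \baseQbs X$. Setting $h^i_n = \frac{n^n}{n!}\, f^i_n \circ \NN_{n,n}$, so that $f^i_n \circ \D_{n,n} = h^i_n \circ \M_{n,n}$, this reads $h^1_n(x^{\otimes n}) = h^2_n(x^{\otimes n})$ for all $x$.

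Next I would upgrade agreement on the diagonal $x^{\otimes n}$ to equality of the symmetric forms $h^1_n$ and $h^2_n$ by polarization. Viewing each $h^i_n$ as a symmetric $n$-linear map $\baseQbs X^n \to [0,\infty)$, the difference $H = h^1_n - h^2_n$ is a symmetric $n$-linear map vanishing on all diagonals, and the goal is $H \equiv 0$. The step I expect to be the main obstacle — since convex QBSs only support sub-convex combinations, not arbitrary linear ones — is ensuring that the combinations needed for polarization stay inside $\baseQbs X$. This is exactly where the $[0,1]$-bound on tests is used: for any $x_1, \ldots, x_n$ and any $S \subseteq [n]$, the element $\frac1n \sum_{i \in S} x_i$ is defined, because every test $\zeta$ satisfies $\frac1n\sum_{i\in S}\zeta x_i \le \frac{|S|}{n} \le 1$. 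Expanding $H\!\left(\frac1n\sum_{i\in S}x_i, \ldots, \frac1n\sum_{i\in S}x_i\right) = 0$ by multilinearity and applying inclusion–exclusion over $S$ isolates $n!\,H(x_1, \ldots, x_n) = 0$, whence $H = 0$, i.e. $f^1_n \circ \D_{n,n} = f^2_n \circ \D_{n,n}$ for every $n$.

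Finally I would recover the full equality $f^1_n = f^2_n$ by induction on $n$ using the coherence relation of Fact \ref{fact:bang-charac-sums}. For $n = 0$ we have $\D_{0,0} = \operatorname{id}$, so $f^1_0 = f^2_0$. For the induction step, $f^i_{n+1} = f^i_n \circ \J_{n+1,n} + f^i_{n+1} \circ \D_{n+1,n+1}$; the first summand is independent of $i$ by the induction hypothesis and the second by the previous paragraph, so $f^1_{n+1} = f^2_{n+1}$. Hence the families $(f^1_n)$ and $(f^2_n)$ coincide, so $\phi_1 = \phi_2$, and as $\eta$ was arbitrary, $g_1 = g_2$.
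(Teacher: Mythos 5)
Your proof is correct, and it follows the paper's overall skeleton: reduce uniqueness to showing that two tests $\alpha,\beta\in\baseQbs{(\oc X)^\bot}$ agreeing on the image of $\nabla_X$ must be equal, isolate the homogeneous components via Fact~\ref{fact:bang-charac-sums} and $\D_{n,n}$, and recover the full families from those components (a step the paper attributes to Fact~\ref{fact:bang-charac-sums} without detail and which you rightly spell out by induction on $n$ using $f_{n+1}=f_{n}\circ\J_{n+1,n}+f_{n+1}\circ\D_{n+1,n+1}$). Where you genuinely diverge is in the coefficient-extraction step. The paper evaluates $\alpha\,\nabla_X(r_1x_1+\cdots+r_nx_n)$ on the simplex $r_i\ge 0$, $\sum_i r_i\le 1$ and reads off the coefficient of the mixed monomial $r_1\cdots r_n$, namely $n!\,\alpha_n((x_1,0)\otimes\cdots\otimes(x_n,0))$; uniqueness of coefficients for multivariate power series agreeing on a set with non-empty interior then finishes in one stroke. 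You instead use a univariate series in $t$ along the ray $tx$ to identify the diagonal values $h_n(x^{\otimes n})$, and then polarize via inclusion--exclusion over subsets $S\subseteq[n]$, evaluated at the sub-convex combinations $\frac{1}{n}\sum_{i\in S}x_i$. Your handling of the one delicate point of polarization in this setting is exactly right: the combinations must remain inside $\baseQbs{X}$, which the factor $\frac{1}{n}$ guarantees, and the subtraction and alternating signs then take place among real numbers rather than among elements of $X$, so no illegal linear combinations are ever formed. The two routes are classically equivalent; the paper's multivariate substitution buys brevity, while your version makes visible precisely where the sub-convexity constraint of convex QBSs enters and only needs the univariate uniqueness-of-coefficients fact.
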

\begin{proof} It is sufficient to prove that for all $\alpha = (\alpha_n)_{n\in\N}, \beta = (\beta_n)_{n\in\N} \in (\oc X)^\bot$ , if $\alpha \nabla(x) = \beta \nabla(x)$ for all $x \in X$, then $\alpha = \beta$. For all $x$, by Fact \ref{fact:bang-charac-sums}, $\alpha \nabla (x) = \sum_{n\in\N} \alpha_n \circ  \D_{n,n} \circ \pi_n \circ \nabla (x) = \sum_{n\in\N} \alpha_n {\left((x,0)^{\otimes n}\right)}$, and similarly for $\beta$. In addition, also by Fact \ref{fact:bang-charac-sums}, in order to prove that $\alpha = \beta$, it is sufficient to prove that for all $n\in\N$, $\alpha_n \circ \D_{n,n} = \beta_n \circ \D_{n,n}$. Therefore, it is sufficient to prove that for all $n\in\N$ and all $x_1, \ldots, x_n \in X$, $\alpha_n((x_1, 0) \otimes \ldots \otimes (x_n, 0)) = \beta_n((x_1, 0) \otimes \ldots \otimes (x_n, 0))$. For all $r_1, \ldots, r_n \geq 0$ such that $r_1 + \ldots + r_n \leq 1$, $\alpha \nabla(r_1 x_1+ \ldots + r_n x_n)$ is the sum of an $n$-variate power series in $r_1, \ldots, r_n$ in which the coefficient of the monomial $r_1 \ldots r_n$ is equal to $n!\,\alpha_n((x_1, 0) \otimes \ldots \otimes (x_n, 0))$. The same can be said about $\beta$, and two $n$-variate power series that coincide on a subset of $\R^n$ with non-empty interior have the same coefficients, therefore $\alpha_n((x_1, 0) \otimes \ldots \otimes (x_n, 0)) = \beta_n((x_1, 0) \otimes \ldots \otimes (x_n, 0))$.

%For all $n$, $\alpha_n \circ \NN_{n,n}$ is in $X^{\otimes n} \linear \W$ and is symmetric, and $\alpha_n \circ \D_{n,n} = \frac{n^n}{n!}\, \alpha_n \circ \NN_{n,n} \circ \M_{n,n}$ (and similarly for $\beta$). Therefore, it is sufficient to prove that for all $n\in\N$ and all $x_1, \ldots, x_n \in X$, $\alpha_n\circ\NN_{n,n}(x_1 \otimes \ldots \otimes x_n) = \beta_n\circ\NN_{n,n}(x_1 \otimes \ldots \otimes x_n)$. For all $r_1, \ldots, r_n \geq 0$ such that $r_1 + \ldots + r_n \leq 1$, $\alpha \nabla(r_1 x_1+ \ldots + r_n x_n)$ is equal to the sum of an $n$-variate power series in $r_1, \ldots r_n$ in which the coefficient of the monomial $r_1 \ldots r_n$ is equal to $\rho_n\, \alpha_n\circ\NN_{n,n}(x_1 \otimes \ldots \otimes x_n)$. The same can be said about $\beta$. Two $n$-variate power series that coincide on a subset of $\R^n$ with non-empty interior have the same coefficients, therefore $\rho_n\, \alpha_n\circ\NN_{n,n}(x_1 \otimes \ldots \otimes x_n) = \rho_n\, \beta_n\circ\NN_{n,n}(x_1 \otimes \ldots \otimes x_n)$.
\end{proof}

This allows us to define a structure of convex QBS on the set of analytic maps from $X$ to $Y$, by simply transporting the structure of $\oc X \linear Y$.

\begin{definition} Let $X, Y$ be convex QBSs. We define a convex QBS $X \analytic Y$ as follows:
\begin{itemize}
\item the underlying set of the QBS $\baseQbs{X \analytic Y}$ is the set of all analytic maps from $X$ to $Y$,
\item $M_{\baseQbs{X \analytic Y}} = \{ r \mapsto f_r \circ \nabla_X;~ (r \mapsto f_r) \in M_{\baseQbs{\oc X \linear Y}} \}$,
\item $\baseQbs{(X \analytic Y)^\bot} = \baseQbs{(\oc X \linear Y)^\bot}$,
\item $\eta \cdot_{X \analytic Y} f = \eta \cdot_{\oc X \linear Y} f_!$,
\end{itemize}
\end{definition}

Since $\operatorname{der}_X \circ \nabla_X = \operatorname{id}_X$, for all $f : X \linear Y$, $f$ is analytic and $f_! = f \circ \operatorname{der}_X$.

As a consequence of Fact \ref{fact:nabla-idempotents}, it is easy to prove that $\operatorname{dig}_X \circ \nabla_X = \nabla_{\oc X} \circ \nabla_X$, and that for all $f : X \analytic Y$, $\oc{f_{!}}\circ\nabla_{\oc X}=\nabla_{Y}\circ f_{!}$. As a result:
\begin{proposition}\label{prop:composition-analytic} Let $X, Y, Z$ be convex QBSs. For all analytic maps $f : X \analytic Y$ and $g : Y \analytic Z$, $g \circ f$ is analytic, and $(g \circ f)_! = (g_! \ll f_!)$.
\end{proposition}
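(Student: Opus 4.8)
The plan is to invoke the uniqueness half of Theorem~\ref{thm:analytic-unique}. The candidate witness $g_! \ll f_! = g_! \circ \oc{f_!} \circ \operatorname{dig}_X$ is a composite of linear maps, hence itself a linear map from $\oc X$ to $Z$; so it suffices to verify the single equation $g \circ f = (g_! \ll f_!) \circ \nabla_X$. Once that equation holds, $g \circ f$ is of the form (linear map) $\circ\, \nabla_X$ and is therefore analytic, and the uniqueness of the witness lets us read off $(g \circ f)_! = g_! \ll f_!$.

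To establish the equation I would simply chain the two commutation identities recalled immediately before the proposition. Unfolding the Kleisli composition gives $(g_! \ll f_!) \circ \nabla_X = g_! \circ \oc{f_!} \circ \operatorname{dig}_X \circ \nabla_X$. Applying $\operatorname{dig}_X \circ \nabla_X = \nabla_{\oc X} \circ \nabla_X$ rewrites this as $g_! \circ \oc{f_!} \circ \nabla_{\oc X} \circ \nabla_X$, and then applying $\oc{f_!} \circ \nabla_{\oc X} = \nabla_Y \circ f_!$ turns it into $g_! \circ \nabla_Y \circ f_! \circ \nabla_X$. Finally, the defining equations of the analytic maps $f$ and $g$, namely $f_! \circ \nabla_X = f$ and $g_! \circ \nabla_Y = g$, collapse this to $g \circ f$, as required.

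I expect no real obstacle here: the two identities stated above the proposition are precisely what is needed, and the uniqueness theorem spares me from checking any additional coherence. The only point demanding a little care is tracking the types through the rewrites---confirming, for instance, that $\oc{f_!} \circ \nabla_{\oc X}$ and $\nabla_Y \circ f_!$ are indeed both maps from $\oc X$ to $\oc Y$---but this is entirely routine once the identities are in hand.
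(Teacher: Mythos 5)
Your proof is correct and is essentially the paper's own argument: the paper derives the proposition directly from the two identities $\operatorname{dig}_X \circ \nabla_X = \nabla_{\oc X} \circ \nabla_X$ and $\oc{f_!}\circ\nabla_{\oc X}=\nabla_Y\circ f_!$ stated just above it, combined with the uniqueness of $(g\circ f)_!$ from Theorem~\ref{thm:analytic-unique}, exactly as you do. No gaps.
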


This means that convex QBSs and analytic maps form a category that is equivalent to the Kleisli category of ``$!$''. In particular, it is cartesian closed, with $\with$ as a cartesian product and $\analytic$ as an internal hom functor.

There is one last point to check in order to ensure that analytic functions are well-behaved:

\begin{proposition} Let $X, Y$ be convex QBSs. The map $! : \baseQbs{X \linear Y} \to \baseQbs{\oc X \linear \oc Y}$ is analytic.
\end{proposition}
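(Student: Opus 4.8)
The plan is to exhibit a linear map $(!)_!:\oc(X\linear Y)\linear(\oc X\linear\oc Y)$ with $!=(!)_!\circ\nabla_{X\linear Y}$; uniqueness of such a witness is then automatic by Theorem \ref{thm:analytic-unique}. By the closed structure of $\QbsConv$ (Proposition \ref{prop:QbsConv-closed}) it is equivalent to build a linear map $\varphi_\infty:\oc(X\linear Y)\otimes\oc X\linear\oc Y$ satisfying $\varphi_\infty(\nabla(f)\otimes a)=\oc f(a)$ for all $f$ and all $a\in\oc X$, and then to curry it. Since $\oc Y$ is a limit of the spaces $(Y\with\W)^{\otimes_s m}$, I would construct $\varphi_\infty$ componentwise and glue with Theorem \ref{thm:bang-limit} (taking its ambient object ``$Y$'' to be $\W$, so that ``$Y\otimes\oc X$'' becomes $\oc Y$).

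The first key step is to show that each component map $\Psi_m:(X\linear Y)\to\bigl((X\with\W)^{\otimes_s m}\linear(Y\with\W)^{\otimes_s m}\bigr)$, $f\mapsto(f\with\operatorname{id}_\W)^{\otimes_s m}$ (which is exactly $(\oc f)_m$), is analytic. I would factor it as $\Psi_m=T_m\circ\Phi$, where $\Phi(f)=f\with\operatorname{id}_\W$ and $T_m(g)=g^{\otimes_s m}$ (seen as a morphism). Writing the biproduct injections and projections of $X\with\W$ as $\iota_X,\iota_\W,p_X,p_\W$, one has $\Phi(f)=(p_X\linear\iota_Y)(f)+\iota_\W\circ p_\W$, i.e.\ a fixed linear function of $f$ plus a constant; linear maps are analytic and constant maps $v\mapsto c$ are analytic (witnessed by the scalar multiple $\W\linear(Y\with\W)$, $s\mapsto s\,c$, precomposed with $\operatorname{weak}$), so $\Phi$ is analytic as a sum of analytic maps. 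The map $T_m$ is homogeneous of degree $m$, coming from the symmetric $m$-linear ``tensor of morphisms'' map $(g_1,\dots,g_m)\mapsto g_1\otimes_s\cdots\otimes_s g_m$, and every such homogeneous map $z\mapsto\mu(z,\dots,z)$ is analytic because it equals $\bar\mu\circ\M_{m,m}\circ\pi_m\circ\nabla$ with $\bar\mu$ the linear map induced on the symmetric tensor power (using $\M_{m,m}(\pi_m(\nabla(z)))=z^{\otimes_s m}$). Hence $\Psi_m$ is analytic by Proposition \ref{prop:composition-analytic}, with a linear witness $(\Psi_m)_!:\oc(X\linear Y)\linear\bigl((X\with\W)^{\otimes_s m}\linear(Y\with\W)^{\otimes_s m}\bigr)$.

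For the gluing, set
\[\varphi_m=\operatorname{eval}\circ\bigl((\Psi_m)_!\otimes\pi_m\bigr):\oc(X\linear Y)\otimes\oc X\linear(Y\with\W)^{\otimes_s m},\]
where $\operatorname{eval}$ is the canonical evaluation of the closed structure; this is linear with $\varphi_m(\nabla(f)\otimes a)=\Psi_m(f)\bigl(\pi_m(a)\bigr)=(f\with\operatorname{id}_\W)^{\otimes_s m}(a_m)$. The compatibility condition $\J_{n,m}\circ\varphi_n=\varphi_m$ of Theorem \ref{thm:bang-limit} reduces, on elements $\nabla(f)\otimes a$, to the naturality identity $\J_{n,m}\circ(f\with\operatorname{id}_\W)^{\otimes_s n}=(f\with\operatorname{id}_\W)^{\otimes_s m}\circ\J_{n,m}$ together with $\J_{n,m}(a_n)=a_m$; the identity holds because $\J_{n,m}$ only contracts $\W$-components, which $f\with\operatorname{id}_\W$ preserves. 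To promote agreement on elements of the form $\nabla(f)\otimes a$ to equality of the two linear maps, I would use that a linear map out of $\oc(X\linear Y)$ is determined by its restriction to the image of $\nabla$ (the separation principle underlying Theorem \ref{thm:analytic-unique}), applied after currying in the $\oc X$ factor. Theorem \ref{thm:bang-limit} then yields a unique linear $\varphi_\infty$ with $\pi_m\circ\varphi_\infty=\varphi_m$; its value on $\nabla(f)\otimes a$ has $m$-th component $(f\with\operatorname{id}_\W)^{\otimes_s m}(a_m)=\pi_m(\oc f(a))$, so $\varphi_\infty(\nabla(f)\otimes a)=\oc f(a)$. Currying gives the desired $(!)_!$, and $(!)_!\circ\nabla=!$.

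The main obstacle I expect is this final assembly: verifying the compatibility of the $\varphi_m$ rigorously (which needs the naturality of the projections $\J_{n,m}$ with respect to maps of the form $f\with\operatorname{id}_\W$, and the determination-by-$\nabla$ principle to compare linear maps) and correctly invoking Theorem \ref{thm:bang-limit} with $\W$ as its ambient object. Measurability of $!$, and of $(!)_!$, is inherited throughout, since every map used ($(\Psi_m)_!$, $\pi_m$, $\operatorname{eval}$, and the map produced by Theorem \ref{thm:bang-limit}) is measurable by construction.
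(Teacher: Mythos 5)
Your proposal is correct and follows the same overall strategy as the paper: show that each component $\pi_n\circ\oc(-)$ of $!$ depends on $f$ only through $\pi_n(\nabla_{X\linear Y}(f))=(f,1)^{\otimes n}$, linearly, and then glue the components via the limit property of $\oc Y$ (Theorem~\ref{thm:bang-limit}). Where you differ is in how the per-component map is exhibited. You factor $f\mapsto(f\with\operatorname{id}_\W)^{\otimes_s n}$ as (homogeneous of degree $n$) $\circ$ (affine), which forces you to invoke three auxiliary closure facts that the paper never states: that constant maps are analytic, that a pointwise-defined sum of analytic maps is analytic (this does hold, because the pointwise sum of the two witnesses is itself a linear map $\oc(X\linear Y)\linear\bigl((X\with\W)\linear(Y\with\W)\bigr)$ and $\operatorname{Test}$ is additive, but it is a lemma one has to write down), and that composites of analytic maps are analytic (Proposition~\ref{prop:composition-analytic}). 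The paper short-circuits all of this by writing a single $n$-linear map $\varphi_n:((X\linear Y)\with\W)^{\otimes n}\linear\bigl((X\with\W)^{\otimes n}\linear(Y\with\W)^{\otimes n}\bigr)$, $(f_1,r_1)\otimes\cdots\otimes(f_n,r_n)\mapsto\bigl((x_i,s_i)_i\mapsto(f_i(x_i),r_is_i)_i\bigr)$ -- the $\with\W$ slot absorbs your affine/constant part -- and observing that the $n$-th component of $\oc f$ is exactly $\varphi_n(\pi_n(\nabla(f)))$, so the witness is just $\varphi_n\circ\pi_n$ assembled over $n$. Both routes rest on the same unproved-but-routine well-definedness claim (your multilinearity of $(g_1,\dots,g_n)\mapsto g_1\otimes\cdots\otimes g_n$ is the paper's well-definedness of $\varphi_n$ on pure tensors), and your compatibility check $\J_{n,m}\circ\varphi_n=\varphi_m$ together with the determination-by-$\nabla$ principle is sound; the paper's version is simply the more economical packaging.
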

\begin{proof}
For all $n \in \N$, we define $\varphi_n : ((X \linear Y) \with \W)^{\otimes n} \linear  (X \with \W)^{\otimes n}  \linear  (Y \with \W)^{\otimes n}$ by $\varphi_n((f_1, r_1) \otimes \ldots \otimes (f_n, r_n))((x_1, s_1) \otimes \ldots \otimes (x_n, s_n)) = (f_1(x_1), r_1 s_1) \otimes \ldots \otimes (f_n(x_n), r_n s_n)$.

Then for all $f : X \linear Y$ and all $(a_n)_{n\in\N} \in \oc X$, $(\oc f)((a_n)_{n\in\N}) = ((f \with \operatorname{id}_\W)^{\otimes n} (a_n))_{n\in\N} = (\varphi_n((f,1)^{\otimes n})(a_n))_{n\in\N} = (\varphi_n(\pi_n(\nabla(f)))(a_n))_{n\in\N}$.
\end{proof}

Finally, the connection with power series is given by the following result:

\begin{fact} 
\label{fact:analytic-power-series} Let $X, Y$ be convex QBSs. For all $f : X \analytic Y$, there exists a unique family $\left(\frac{\partial_n f}{\rho_n}: (X)_s^n \linear Y\right)_{n\in\N}$ such that for all $x\in X$, $f(x) = \sum_{n\in\N} \rho_n \frac{\partial_n f}{\rho_n} (x, \ldots, x)$.
\end{fact}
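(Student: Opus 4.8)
The plan is to reduce the claim to the structure of the linear map $f_{!}:\oc X\linear Y$ provided by Theorem \ref{thm:analytic-unique}, and to read off $\frac{\partial_n f}{\rho_n}$ as the homogeneous component of $f_{!}$ of degree $n$. Fix $f:X\analytic Y$ and the associated $f_{!}$. The computation carried out in the proof of Theorem \ref{thm:analytic-unique} already gives, for every $\eta\in\baseQbs{Y^{\bot}}$ and $x\in X$, the identity $\eta\,f(x)=\sum_{n\in\N}\alpha_{n}\big((x,0)^{\otimes n}\big)$, where $(\alpha_{n})_{n\in\N}=f_{!}^{\bot}(\eta)\in\baseQbs{(\oc X)^{\bot}}$. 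So it suffices to produce, for each $n$, a symmetric $n$-linear map whose diagonal is tested by $\eta$ to the $n$-th summand $\alpha_{n}((x,0)^{\otimes n})$, and then to check uniqueness.

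For existence, let $\iota_{0}:X\linear X\with\W$ be the (linear) map $x\mapsto(x,0)$, and define a linear map $g_{n}:X^{\otimes_{s}n}\linear Y$ as the composite $g_{n}=f_{!}\circ\theta_{n}\circ\iota_{0}^{\otimes_{s}n}$, using the section $\theta_{n}:(X\with\W)^{\otimes_{s}n}\linear\oc X$ with $\pi_{n}\circ\theta_{n}=\frac{1}{\rho_{n}}\operatorname{id}$. Being a composite of morphisms of $\QbsConv$, $g_{n}$ is automatically a bona fide linear map. I then set $\frac{\partial_{n}f}{\rho_{n}}(x_{1},\ldots,x_{n})=g_{n}\big(\mathcal{S}_{n}(x_{1}\otimes\cdots\otimes x_{n})\big)$, which is symmetric (because $\mathcal{S}_{n}$ symmetrises) and $n$-linear (because $g_{n}$ is linear), hence a genuine element of $(X)_{s}^{n}\linear Y$. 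The heart of the argument is to verify the diagonal identity $\rho_{n}\,\eta\,g_{n}(x^{\otimes n})=\alpha_{n}((x,0)^{\otimes n})$. Since $\iota_{0}^{\otimes_{s}n}(x^{\otimes n})=(x,0)^{\otimes n}$ is homogeneous of degree $n$, all contributions of $\theta_{n}$ coming from degrees other than $n$ collapse, and applying Fact \ref{fact:bang-charac-sums} together with $\pi_{n}\circ\theta_{n}=\frac{1}{\rho_{n}}\operatorname{id}$ leaves exactly $\eta\,g_{n}(x^{\otimes n})=\frac{1}{\rho_{n}}\alpha_{n}((x,0)^{\otimes n})$. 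Granting this, for fixed $x$ the family $\big(\rho_{n}\tfrac{\partial_{n}f}{\rho_{n}}(x,\ldots,x)\big)_{n\in\N}$ has nonnegative $\eta$-values summing to $\sum_{n}\alpha_{n}((x,0)^{\otimes n})=\eta\,f(x)\le1$ for every $\eta$; hence by the countable-sum axiom its sum exists in $Y$, and since it is tested to $\eta\,f(x)$ by every $\eta$ while tests separate points, it equals $f(x)$.

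For uniqueness I would use the multivariate sharpening of the power-series separation used in Theorem \ref{thm:analytic-unique}. If $(G_{n})_{n\in\N}$ is any family with $f(x)=\sum_{n}\rho_{n}G_{n}(x,\ldots,x)$, then for fixed $x_{1},\ldots,x_{n}$ and small $r_{1},\ldots,r_{n}\ge0$ multilinearity expands $\eta\,f(r_{1}x_{1}+\cdots+r_{n}x_{n})$ as a convergent power series in $(r_{1},\ldots,r_{n})$ with nonnegative coefficients, in which the coefficient of the squarefree monomial $r_{1}\cdots r_{n}$ equals $\rho_{n}\,n!\,\eta\,G_{n}(x_{1},\ldots,x_{n})$ (only the degree-$n$ summand contributes that monomial, and the $n!$ counts the orderings, using symmetry). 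Two such power series agreeing on a set with non-empty interior have equal coefficients, so $\eta\,G_{n}(x_{1},\ldots,x_{n})$ is determined by $f$ for every $\eta$; since tests separate the points of $Y$, each $G_{n}$ is uniquely determined.

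I expect the main obstacle to be the constant-bookkeeping in the diagonal identity: making precise that, on a pure-degree-$n$ element, the maps $\theta_{n}$, $\K_{n,m}$ and $\D_{m,m}$ annihilate all degrees $m\ne n$ and contribute the single factor $\frac{1}{\rho_{n}}=\frac{n!}{n^{n}(n+1)}$ at $m=n$, so that exactly the coefficient $\alpha_{n}$ is recovered. Everything else — linearity of $g_{n}$, existence of the sum via the countable-sum axiom, and the separation step — is routine given Theorem \ref{thm:analytic-unique} and Fact \ref{fact:bang-charac-sums}.
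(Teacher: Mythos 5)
Your proof is correct and takes essentially the same route as the paper, whose entire proof is the assertion that the unique solution is $\frac{\partial_n f}{\rho_n}=\rho_n\,f_!\circ\theta_n\circ\NN_{n,n}\circ\mathcal{S}_n$: your map $f_!\circ\theta_n\circ\iota_0^{\otimes_s n}\circ\mathcal{S}_n$ is the same composite up to a scalar (since $\NN_{n,n}=\frac{n!}{n^n}\,\mathcal{S}_n\circ\iota_0^{\otimes n}$), and your uniqueness argument is the same multivariate power-series separation used in Theorem \ref{thm:analytic-unique}. In fact your normalisation appears to be the correct one: on the diagonal the paper's stated constant yields $\rho_n\frac{n!}{n^n}=n+1$ times the required value (e.g.\ for $f=\operatorname{id}_{\W}$ it would give $f(x)=2x$), so the displayed formula should carry $\frac{n^n}{n!}$ rather than $\rho_n$ in front.
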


\begin{proof} One can check that both points hold if and only if $\frac{\partial_n f}{\rho_n} = \rho_n \, f_! \circ \theta_n \circ \NN_{n,n} \circ \mathcal{S}_n$ for all $n$.
\end{proof}
\begin{corollary}
Let $A$ be a QBS and $Y$ a convex QBS. For all $f,g : \emb A \analytic Y$, if $f$ and $g$ coincide on finitely-supported measures, then $f = g$.
\end{corollary}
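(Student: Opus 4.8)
The plan is to reduce the statement to the power-series representation of Fact~\ref{fact:analytic-power-series} and then run the same ``identity theorem for power series'' argument as in the proof of Theorem~\ref{thm:analytic-unique}. By Fact~\ref{fact:analytic-power-series}, each of $f$ and $g$ is determined by its family of coefficients $\left(\frac{\partial_n f}{\rho_n}\right)_{n\in\N}$ and $\left(\frac{\partial_n g}{\rho_n}\right)_{n\in\N}$, which are symmetric $n$-linear maps from $\emb A$ to $Y$; indeed, $f(\mu) = \sum_{n\in\N} \rho_n \frac{\partial_n f}{\rho_n}(\mu, \ldots, \mu)$. Hence it suffices to prove that $\frac{\partial_n f}{\rho_n} = \frac{\partial_n g}{\rho_n}$ for every $n$. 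Since these are multilinear maps between data types, they commute with integrals in each argument, and since every $\mu \in \baseQbs{\emb A}$ satisfies $\mu = \int_{x\in A} \emb x\, \mu(\dif x)$, an $n$-linear map out of $\emb A, \ldots, \emb A$ is entirely determined by its values on tuples of Dirac measures. Thus it is enough to show, for all $x_1, \ldots, x_n \in A$, that $\frac{\partial_n f}{\rho_n}(\emb{x_1}, \ldots, \emb{x_n}) = \frac{\partial_n g}{\rho_n}(\emb{x_1}, \ldots, \emb{x_n})$.

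Next I would reduce to scalars. Fix $x_1, \ldots, x_n \in A$ and a test $\eta \in \baseQbs{Y^\bot}$; since tests separate the points of $Y$, it is enough to prove that $\eta$ sends the two elements above to the same number. To this end, for $r_1, \ldots, r_n \geq 0$ with $\sum_i r_i \leq 1$, consider the finitely-supported measure $\mu = \sum_{i=1}^n r_i \emb{x_i} \in \baseQbs{\emb A}$. Expanding the power-series representation of $f$ by $n$-linearity and distributing $\eta$ (which commutes with countable sums and finite sub-convex combinations) gives
\[\eta\, f(\mu) = \sum_{m\in\N} \rho_m \sum_{i_1, \ldots, i_m} r_{i_1} \cdots r_{i_m}\, \eta\, \tfrac{\partial_m f}{\rho_m}(\emb{x_{i_1}}, \ldots, \emb{x_{i_m}}),\]
a convergent power series in $(r_1, \ldots, r_n)$ whose coefficients lie in $[0,1]$ (each being a value of the test $\eta$), and similarly for $g$.

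The core of the argument is then identical to the end of the proof of Theorem~\ref{thm:analytic-unique}. Because $f$ and $g$ coincide on all finitely-supported measures, $\eta\, f(\mu) = \eta\, g(\mu)$ for every $(r_1, \ldots, r_n)$ in the region $\{r_i \geq 0,\ \sum_i r_i \leq 1\}$, which has non-empty interior in $\R^n$; two power series that agree on such a set have the same coefficients. Reading off the coefficient of the monomial $r_1 \cdots r_n$ --- to which only the degree $m=n$ term contributes, and there exactly through the $n!$ permutations of $(1, \ldots, n)$, all giving the same value by symmetry --- yields $n!\,\rho_n\, \eta\, \frac{\partial_n f}{\rho_n}(\emb{x_1}, \ldots, \emb{x_n}) = n!\,\rho_n\, \eta\, \frac{\partial_n g}{\rho_n}(\emb{x_1}, \ldots, \emb{x_n})$. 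Cancelling $n!\,\rho_n > 0$ and using separation of points gives $\frac{\partial_n f}{\rho_n}(\emb{x_1}, \ldots, \emb{x_n}) = \frac{\partial_n g}{\rho_n}(\emb{x_1}, \ldots, \emb{x_n})$, and then the reductions of the first paragraph give $f = g$.

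The only genuinely delicate point is the justification of the power-series step --- that $\eta\, f(\mu)$ really is represented by the displayed series and that coefficient comparison is legitimate --- but this is precisely the manipulation already carried out in Theorem~\ref{thm:analytic-unique}, so I would simply invoke it. The remaining ingredients (that multilinear maps on data types are fixed by their values on Diracs, and that tests separate points) are immediate from the definitions and from Fact~\ref{fact:linear-markov}.
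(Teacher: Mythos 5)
Your proposal is correct and follows essentially the same route as the paper, whose proof of this corollary is simply the remark that it is ``similar to Theorem~\ref{thm:analytic-unique}'': you spell out exactly that argument, reducing via the power-series representation and multilinearity on data types to a comparison of coefficients of $n$-variate power series agreeing on a set with non-empty interior. No gaps; the reduction to Dirac measures via Fact~\ref{fact:linear-markov} and the extraction of the coefficient of $r_1\cdots r_n$ are both handled correctly.
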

\begin{proof} Similar to Theorem \ref{thm:analytic-unique}.
\end{proof}

As in probabilistic coherence spaces, there are non-effective analytic maps: take for example $X = \W \oplus \W$, $Y = \W$, and $f = (p_0, p_1) \mapsto 4 p_0 p_1$. Let $g : \left(X \with \W\right)^2 \linear \W$ be defined by $g((p_0, p_1,p_*), (q_0, q_1,q_*)) = p_0 q_1 + q_0 p_1$. One can check that for all $n \geq 1$, $\norm{g \circ \J_{2n,2}} \leq \frac{n}{2n-1}$, so $\norm{g \circ \pi_2} \leq \frac{1}{2}$ (with $\pi_2 : \oc X \linear \left(X \with \W\right)^{\otimes_s 2}$). As a result, $f = 2g \circ \pi_2 \circ \nabla_X \in X \analytic Y$, yet it is clear that $f$ is not effective as a map from $S(\{0,1\})$ to $S(\{0\})$.

\section{Least fixed points}
\label{section:fixpoints}

We prove that all analytic maps from a convex QBS to itself have a least fixed point, so this denotational model can interpret recursive programs.

\begin{theorem}\label{thm:fixpoints} Let $X$ be a convex QBS. For all analytic maps $f\in X\analytic X$,
\begin{itemize}
\item $f$ has a least fixed point $\operatorname{fix}(f)\in X$,
\item $\operatorname{fix}(f)=\sup_{n\in\N}f^{n}(0)$.
\end{itemize}
Moreover, the map $f\mapsto\operatorname{fix}(f)$ is analytic (that is to say, it is in $(X\analytic X)\analytic X$).
\end{theorem}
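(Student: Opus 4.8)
The plan is to realise the least fixed point as the supremum of the iteration chain $0_X \le f(0_X) \le f^2(0_X) \le \ldots$, where the monotonicity comes from the fact that analytic maps are monotone and $0_X$ is the least element of $\baseQbs X$. Following the paper's philosophy, I would not treat this supremum as a limit but as a countable sum: writing $x_n = f^n(0_X)$, the differences $x_{n+1}-x_n$ are nonnegative and the partial sums of tests telescope, $\sum_{n<N}\eta(x_{n+1}-x_n) = \eta x_N \le 1$, so the countable sum $s := \sum_{n\in\N}(x_{n+1}-x_n)$ exists by the countable-sum axiom and satisfies $\eta s = \sup_N \eta x_N$ for every test $\eta$; hence $s = \sup_n x_n$. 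The one point needing care here is the existence of the differences $x_{n+1}-x_n$; rather than assuming differences of comparable elements exist in an arbitrary convex QBS, I would obtain them from the power-series expansion of $f$ (Fact \ref{fact:analytic-power-series}): expanding each $\frac{\partial_k f}{\rho_k}(x_n,\ldots,x_n) - \frac{\partial_k f}{\rho_k}(x_{n-1},\ldots,x_{n-1})$ multilinearly around $x_{n-1}$ exhibits every difference $x_{n+1}-x_n = f(x_n)-f(x_{n-1})$ as a bounded, manifestly nonnegative countable sum of multilinear terms.

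The crux of the fixed-point property is an $\omega$-continuity lemma: for the increasing chain above, $f(s) = \sup_n f(x_n)$. I would prove this from the power series $f(x) = \sum_k \rho_k \frac{\partial_k f}{\rho_k}(x,\ldots,x)$ of Fact \ref{fact:analytic-power-series}. Since each coefficient is multilinear and multilinear maps commute with countable sums in each argument, expanding $\frac{\partial_k f}{\rho_k}(s,\ldots,s)$ with $s = \sum_j(x_{j+1}-x_j)$ turns $f(s)$ into one large countable sum of nonnegative terms; the same grand sum is the supremum of its finite partial sums, which are exactly the $f(x_n)$. With continuity in hand, $f(s) \ge f(x_n) = x_{n+1}$ gives $f(s) \ge s$, while continuity gives $f(s) = \sup_n f(x_n) = \sup_n x_{n+1} = s$, so $s$ is a fixed point; it is the least one because any fixed point $y$ satisfies $x_n \le y$ for all $n$ by induction (using $0_X \le y$ and monotonicity of $f$), whence $s \le y$. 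This proves the first two bullet points with $\operatorname{fix}(f) = s = \sup_n f^n(0_X)$.

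For analyticity of $f \mapsto \operatorname{fix}(f)$, I would first show that each iterate map $F_n : (X\analytic X) \to X$, $F_n(f) = f^n(0_X)$, is analytic, by induction: $F_0$ is constant, and $F_{n+1} = \operatorname{ev} \circ \langle \operatorname{id}, F_n\rangle$ is a composite of analytic maps, using that the category of convex QBSs and analytic maps is cartesian closed with $\with$ as product and $\analytic$ as internal hom. The sequence $(F_n)$ is increasing, and each $F_n$ is a genuine element of the convex QBS $(X\analytic X)\analytic X$, so its pairing with any test lies in $[0,1]$; I would then realise $\operatorname{fix}$ as the countable sum $\sum_{n\in\N}(F_{n+1}-F_n)$ \emph{inside} $(X\analytic X)\analytic X$, where each difference is again obtained as a bounded nonnegative countable sum of analytic ``monomials'' in $f$ via the recursive multilinear expansion of $F_{n+1}-F_n = f(F_n(f)) - f(F_{n-1}(f))$, and where the telescoping bound $\sum_{n<N}\Psi(F_{n+1}-F_n) = \Psi F_N \le 1$ makes the countable-sum axiom applicable. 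Finally I would check that this element of $(X\analytic X)\analytic X$ is the pointwise map $\operatorname{fix}$: evaluation at a fixed $f$ is linear, hence commutes with the countable sum, giving $(\sum_n(F_{n+1}-F_n))(f) = \sum_n(F_{n+1}(f)-F_n(f)) = \operatorname{fix}(f)$, and two analytic maps with the same pointwise values are equal by Theorem \ref{thm:analytic-unique}.

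The main obstacle is this third part, and within it the need to keep every supremum honestly expressed as a \emph{bounded countable sum of nonnegative, analytic} terms: the order and sums used to build elements of the function space $(X\analytic X)\analytic X$ are defined through its tests and are strictly finer than the pointwise order on maps, so one cannot simply take a pointwise supremum and declare it analytic. The multilinear/power-series expansion (Fact \ref{fact:analytic-power-series}) is the device that resolves this, by exhibiting each difference $x_{n+1}-x_n$ and $F_{n+1}-F_n$ as a sum of nonnegative multilinear contributions; the remaining work is the bookkeeping needed to show these decompositions really are sums \emph{in the relevant convex QBS}, so that existence of the differences, the telescoping bound, and $\omega$-continuity all hold at the level of tests rather than merely pointwise.
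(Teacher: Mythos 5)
Your overall strategy coincides with the paper's: realise $\sup_n f^n(0)$ as a bounded telescoping countable sum of nonnegative increments, verify the fixed-point equation at the level of tests via an $\omega$-continuity computation, and obtain analyticity of $\operatorname{fix}$ by performing the same summation inside $(X\analytic X)\analytic X$. Where you differ is in how the increments are manufactured. You extract $f^{n+1}(0)-f^n(0)$ by recursively expanding the power series of $f$ (Fact \ref{fact:analytic-power-series}) multilinearly around the previous iterate, so that each increment appears as a countable sum of multilinear monomials. The paper instead works one level up, in $\oc X$: it introduces a single linear map $\varphi : \oc X \linear \oc X$ that deletes the degree-zero component (so $a = \varphi(a) + \pi_0(a)\nabla(0)$) and uses it to define each increment as \emph{one} composite analytic map $\beta_n : g \mapsto \operatorname{der}\,(\oc g_!\,\operatorname{dig})^{n-1}\bigl(\oc g_!\,\operatorname{dig}\,\nabla(0) - \nabla(0)\bigr)$, with $g^n(0) = \sum_{k=1}^n \beta_k(g)$. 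Its continuity step is phrased slightly differently --- as the identity $\xi\,\nabla\bigl(\sum_n x_n\bigr) = \sup_n \xi\,\nabla\bigl(\sum_{k\leq n} x_k\bigr)$ for $\xi \in (\oc X)^\bot$, applied to $\eta f_!$ --- but that is essentially the same power-series computation you describe, pushed into $\nabla$.

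What the paper's packaging buys is precisely the point you flag as your main obstacle. Because each $\beta_n$ is a single composite of linear and analytic maps, it is an honest element of $(X\analytic X)\analytic X$ from the start, its partial sums are the analytic maps $F_n : g \mapsto g^n(0)$, and the telescoping bound $\sum_{k\leq n}\Psi(\beta_k) = \Psi(F_n) \leq 1$ is therefore available for \emph{every} test $\Psi$ of the function space, which is exactly what the countable-sum axiom requires; the paper can then simply set $\operatorname{fix}_! = \sum_{n\geq 1}{\beta_n}_!$. In your version each increment $F_{n+1}-F_n$ is itself an infinite sum of monomials, and before you can even form it you must check that this inner family is summable against all tests of $(X\analytic X)\analytic X$; the pointwise bound $\sum_m \eta(m(f)) \leq 1$ does not suffice, as you yourself observe. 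That verification is not mere bookkeeping --- it is the one genuinely missing step in your third part --- and the most economical way to supply it is to collapse each increment into a single analytic map, which is what the paper's $\varphi$/$\beta_n$ device accomplishes.
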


\begin{proof}
To get lighter notations, we will omit the ``$\circ$'' when composing linear maps (as is traditional in linear algebra).

The idea behind the proof is to express $\sup_{n\in\N}f^{n}(0)$ as a countable sum.
First, we define a linear map
\[
\varphi:\left\{ \begin{array}{ccc}
\oc X & \multimap & \oc X\\
\left(a_{n}\right)_{n\in\mathbb{N}} & \mapsto & \left(\sum_{k>0}^{k\leq n}\mathcal{D}_{n,k}a_{n}\right)_{n\in\mathbb{N}}
\end{array}\right.,
\]
so that for all $a\in \oc X$, $a=\varphi(a)+\pi_{0}(a)\nabla(0)$.
To make the following reasoning clearer, we will write $a-\pi_{0}(a)\nabla(0)$
for $\varphi(a)$.

For all $g\in X\analytic X$, since $\pi_{0}\left(\oc g_{!}\,\operatorname{dig}\left(\nabla(0)\right)\right)
=\pi_{0}\left(\nabla(g(0))\right)=1$, we write $\oc g_{!}\,\operatorname{dig}\left(\nabla(0)\right)-\nabla(0)$
for $\varphi\left(\oc g_{!}\,\operatorname{dig}\left(\nabla(0)\right)\right)$.

For all $n>0$, we define an analytic map $\beta_n : \left(X\analytic X\right) \analytic X$ as
\[
g \mapsto \operatorname{der}\left(\oc{g_{!}}\,\operatorname{dig}\right)^{n-1}\left(\oc g_{!}\,\operatorname{dig}\left(\nabla(0)\right)-\nabla(0)\right),
\]
so that for all $n\in\mathbb{N}$ and all $g\in X\analytic X$, $g^{n}(0)=\sum_{k=1}^{n}\beta_{k}(g)$.

For all $\eta\in X^{\bot}$ and all $n\in\mathbb{N}$, $\sum_{k=1}^{n}\eta\beta_{k}\left(f\right)=\eta f^{n}(0)\leq1$,
so $\sum_{k\geq1}\beta_{k}(f)\in X$ is well-defined: we
denote it by $\operatorname{fix}(f)$.

One can check that for all $(x_n)_{n\in\N}\in X^\N$ and all $\xi \in (\oc X)^\bot$, $\xi\, \nabla{\left(\sum_{n\in\N}x_n\right)}=\sup_{n\in\N} \xi\, \nabla{\left(\sum_{k\leq n}x_k\right)}$. Therefore, for all $\eta\in X^{\bot}$, 
\[
\begin{array}{rl}
 & \eta f(\operatorname{fix}(f))
= \left(\eta f_{!}\right)\left(\nabla\left(\operatorname{fix} (f)\right)\right)\\
= & \left(\eta f_{!}\right)\left(\nabla\left(\sum_{n\geq1}\beta_{n}(f)\right)\right)\\
= & \sup_{n\in\mathbb{N}}\left(\eta f_{!}\right)\left(\nabla\left(\sum_{k\geq1}^{k\leq n}\beta_{k}(f)\right)\right)\\
= & \sup_{n\in\mathbb{N}}\left(\eta f_{!}\right)\left(\nabla\left(f^{n}(0)\right)\right)\\
= & \sup_{n\in\mathbb{N}}\eta f^{n+1}(0)
= \sum_{n\geq1} \eta \beta_n(f) \\
= & \eta\left(\operatorname{fix}(f)\right),
\end{array}
\]
so $f\left(\operatorname{fix}(f)\right)=\operatorname{fix}(f)$. Moreover, $\operatorname{fix} \in (X \analytic X) \analytic X$, with $\operatorname{fix}_! = \sum_{n\geq1} {\beta_n}_!$.
\end{proof}

\section{A toy probabilistic language} \label{section:example-language}

As an example, we briefly describe a language for which the category of convex QBSs and analytic maps provides an extensional denotational semantics. It can be described as call-by-name PCF with a type for real numbers, primitives for randomly generated reals, a construction to force call-by-value evaluation on data types, and a conditional branching instruction. The types of this language are defined by:
\[A, B := \R ~\vert~ A \times B ~\vert~ A + B ~\vert~ A \to B.\]

We call types written without ``$\to$'' \emph{data types}. Terms are defined by:
\[\begin{array}{rll} t, u, v := && x ~\vert~ \lambda x.\, t ~\vert~ tu ~\vert~ \operatorname{fix } x.\, t ~\vert~ (t,u) ~\vert~ Lt ~\vert~ Ru \\
& \vert & \kw{match} t \kw{with} (x,y) \mapsto u \\
& \vert & \kw{match} t \kw{with} Lx \mapsto u \,; Ry \mapsto v\\
& \vert & f(t_1, \ldots, t_n) \quad (f \in \operatorname{Qbs}(\R^n, S(\R))) \\
& \vert & \kw{if} t \kw{then} u \kw{else} v\\
& \vert & \kw{eval} t \kw{as} x \kw{in} u,\\
\end{array}\]
where $\operatorname{eval}$ is used on data types to force evaluation, and $\operatorname{if}$ tests whether a real number is non-zero. The only typing rule that requires attention is that of $\operatorname{eval}$. The rule is: for all data types $D$,
\[\begin{prooftree} \hypo{\Gamma \vdash t : D} \hypo{\Gamma, x: D \vdash u : A} \infer2{\Gamma \vdash \kw{eval} t \kw{as} x \kw{in}~ u : A} \end{prooftree}.\]

To each type $A$, we associate a convex QBS $\denot{A}$, following Girard's call-by-name translation \cite{girardLL87}:
\[\begin{array}{ll} \denot{\R} = \emb{\R} & \denot{A\times B} = \denot{A} \with \denot{B} \\
\denot{A \to B} = \denot{A} \analytic \denot{B} & \denot{A + B} = \oc{\denot{A}} \oplus \oc{\denot{B}}.\end{array}\]

We interpret each valid typing judgement $x_1 : A_1, \ldots, x_n : A_n \vdash t : B$ by an analytic map $\denot{t} : \denot{A_1} \with \ldots \with \denot{A_n} \analytic \denot{B}$. As with types, we follow Girard's translation, and the only construction that requires attention is $\operatorname{eval}$.
Intuitively, the program ``$\kw{eval} t \kw{as} x \kw{in} u$'' samples $t$ exactly once and then copies the resulting data as many times as needed by $u$. In order to interpret it, we define for all data types $D$ a linear map $\operatorname{copyData}_{D} : \denot{D} \linear \oc{\denot{D}}$ that represents this copying operation. Namely, we let $\operatorname{copyData}_{\R} = \operatorname{copy}_{\R} : \emb\R \linear \oc{\emb\R}$; $\operatorname{copyData}_{D_1+D_2} \circ L_j = \oc{L_j} \allowbreak \circ \operatorname{dig}_{\denot{D_j}} \allowbreak \circ \operatorname{copyData}_{D_j} \circ \allowbreak \operatorname{der}_{\denot{D_j}}$ for $j \in \{1,2\}$; and $\operatorname{copyData}_{D_1 \times D_2} = \operatorname{store}_{\denot{D_1},\denot{D_2}} \circ ((a,b) \mapsto a \otimes b) \circ (\operatorname{copyData}_{D_1} \with \operatorname{copyData}_{D_2})$. Then we let
\[\denot{\kw{eval} t \kw{as} x \kw{in} u} = \denot{u}_! \circ \operatorname{copyData}_{D} \circ \denot{t}\]
(assuming for simplicity that $u$ has no free variable but $x$ -- the general expression is similar but more cumbersome).

Since this is a call-by-name calculus, each member of a pair is sampled independently. If we want a pair whose members are correlated, we need to wrap it inside a constructor. For example, the program ``$\kw{eval} \operatorname{uniform}(\emb{0},\emb{1}) \kw{as} x \kw{in} L(x,x)$'' always produces pairs with identical members: its denotation is $L_1 (\int_0^1 \nabla(\emb x, \emb x) \lambda(\dif x))$, where $\lambda$ denotes the uniform measure.

We could just as well have chosen to interpret a call-by-value calculus, using Girard's call-by-value translation. In fact, the whole discussion so far suggests that probabilistic languages might benefit from linear typing, which makes it possible to mix features from both styles: this is what convex QBSs would be best-suited to interpret.

\section{Conclusion}

We described a model of probabilistic programming (in the narrow sense) that is not limited to discrete probabilities, is compatible with integration, interprets all the connectives of linear logic, and in which all functions have a least fixed point.

A clear direction for future research is to investigate convex QBSs themselves. For example, if we equip the language from Section \ref{section:example-language} with an operational semantics, do we have full abstraction? Do initial algebras exist for functors written in terms of all or some of the connectives of linear logic? (In other words, do we have inductive types?) What about final co-algebras? In addition, it would be interesting to know how this model relates with probabilistic coherence spaces: do they coincide on countable types? One should also investigate how to get rid of non-effective maps between data types, perhaps by looking for a different (non-free) exponential modality. An other direction would be to extend convex QBSs to a model of probabilistic programming in the broad sense \cite{statonESOP17, conditional19}, that is to say, one capable of describing statistical models rather than just programs that make random choices. Staton's work \cite{statonESOP17} suggests that the first step would be to require stability under integration for all $s$-finite measures (\textit{i.e.} drop the bound on the result) and move the (non-measurable) norm to the structure -- going from convex to ``linear'' QBSs, so to speak. On a different line, replacing $[0,1]$-valued linear tests with tests valued in the unit disc of $\mathbb{C}$ might be a starting point for a model of quantum computation (though this would require leaving the comfort of absolute convergence). Similarly, using tests whose values are intervals included in $[0,1]$, in the spirit of differential program semantics \cite{dlr19}, could yield higher-order versions of such concepts as local differential privacy \cite{locdifpriv1, locdifpriv2}.

% conference papers do not normally have an appendix

% trigger a \newpage just before the given reference
% number - used to balance the columns on the last page
% adjust value as needed - may need to be readjusted if
% the document is modified later
%\IEEEtriggeratref{8}
% The "triggered" command can be changed if desired:
%\IEEEtriggercmd{\enlargethispage{-5in}}

% references section

% can use a bibliography generated by BibTeX as a .bbl file
% BibTeX documentation can be easily obtained at:
% http://mirror.ctan.org/biblio/bibtex/contrib/doc/
% The IEEEtran BibTeX style support page is at:
% http://www.michaelshell.org/tex/ieeetran/bibtex/
\bibliographystyle{IEEEtran}
% argument is your BibTeX string definitions and bibliography database(s)
%\bibliography{IEEEabrv,../bib/paper}
%
% <OR> manually copy in the resultant .bbl file
% set second argument of \begin to the number of references
% (used to reserve space for the reference number labels box)

\bibliography{geoffroy-21-convex-qbs}

% that's all folks
\end{document}